\newif\iflong
\newcommand{\tr}[2]{\iflong{}\S#1\else{}\cite[\S#2]{ext}\fi}
\newcommand{\ncorr}{A}
\newcommand{\nprproof}{B}
\newcommand{\npoproof}{C}
\newcommand{\Cross}{\mathbin{\tikz [x=1.4ex,y=1.4ex,line width=.2ex, red] \draw (0,0) -- (1,1) (0,1) -- (1,0);}}%
\newcommand{\ag}[1]{\todo[color=red!30]{AG:\@ #1}}
\renewcommand{\ag}[1]{}
\newtheorem{assumption}{Assumption}
\newtheorem{invariant}{Invariant}
\newcommand{\SubAlgo}[2]{#1 \SubAlgoBlock{#2}}
\newcommand{\reconf}{{\tt{reconfigure}}}
\newcommand{\reconfreq}{{\tt{reconfig\_req}}}
\newcommand{\reconfresp}{{\tt{reconfig\_resp}}}
\newcommand{\RESULT}{{\tt RESULT}}
\newcommand{\EXECUTE}{{\tt EXECUTE}}
\newcommand{\ACCEPT}{{\tt ACCEPT}}
\newcommand{\ACCEPTACK}{{\tt ACCEPT\_ACK}}
\newcommand{\NEWSTATEACK}{{\tt NEW\_STATE\_ACK}}
\newcommand{\NEWSTATE}{{\tt NEW\_STATE}}
\newcommand{\COMMIT}{{\tt COMMIT}}
\newcommand{\PROBE}{{\tt PROBE}}
\newcommand{\PROBEACK}{{\tt PROBE\_ACK}}
\newcommand{\NEWCONFIG}{{\tt NEW\_CONFIG}}
\newcommand{\FORWARD}{{\tt FORWARD}}
\newcommand{\onreceive}{\textbf{when received}\xspace}
\newcommand{\send}{\textbf{send}\xspace}
\newcommand{\from}{\textbf{from}\xspace}
\newcommand{\To}{\textbf{to}\xspace}
\newcommand{\precond}{\textbf{pre:}\xspace}
\newcommand{\assign}[2]{\ensuremath{#1} \ensuremath{\leftarrow} \ensuremath{#2}}
\newcommand{\Consistency}{Prefix Consistency}
\newcommand{\State}{\mathcal{S}}
\newcommand{\Command}{\mathcal{C}}
\newcommand{\cmd}{c}
\newcommand{\execute}{{\tt{execute}}}
\newcommand{\id}{\mathit{id}}
\newcommand{\client}{{\it origin}}
\newcommand{\pnext}{\mathsf{next}}
\newcommand{\initlength}{\mathsf{init\_len}}
\newcommand{\msg}{\mathsf{msg}}
\newcommand{\status}{\mathsf{status}}
\newcommand{\lastdelivered}{\mathsf{last\_delivered}}
\newcommand{\members}{\mathsf{members}}
\newcommand{\pmembers}{\vm}
\newcommand{\epoch}{\mathsf{epoch}}
\newcommand{\pepoch}{e}
\newcommand{\newepoch}{\mathsf{new\_epoch}}
\newcommand{\repoch}{{\eid_{\rm new}}}
\newcommand{\rpmembers}{{\pmembers_{\rm new}}}
\newcommand{\leadervar}{\mathsf{leader}}
\newcommand{\curleader}{\mathsf{cur\_leader}}
\newcommand{\curepoch}{\mathsf{cur\_epoch}}
\newcommand{\vmsg}{\mathit{msg}}
\newcommand{\vm}{M}
\newcommand{\false}{\textsc{false}}
\newcommand{\true}{\textsc{true}}
\newcommand{\LEADER}[0]{\textsc{leader}}
\newcommand{\FOLLOWER}{\textsc{follower}}
\newcommand{\FRESH}{\textsc{fresh}}
\newcommand{\where}{\text{~where~}}
\newcommand{\suchthat}{\text{~such~that~}}
\DeclareMathAlphabet{\mathsfit}{T1}{\sfdefault}{\mddefault}{\sldefault}
\SetMathAlphabet{\mathsfit}{bold}{T1}{\sfdefault}{\bfdefault}{\sldefault}
\newcommandx{\pid}[2][1=i,2=p,usedefault=@]{{{#2_{#1}}}}
\newcommandx{\aleader}[2][1={}, 2=l,usedefault=@]{{#2}#1}
\newcommandx{\vid}[2][1={}, 2=m,usedefault=@]{{{#2#1}}}
\newcommandx{\act}[1][1=act,usedefault=@]{{\bf {#1}}}
\newcommandx{\aid}[2][1={}, 2=a,usedefault=@]{{{#2#1}}}
\newcommandx{\hrel}[1][1=\hid,usedefault=@]{\prec_{#1}}
\newcommandx{\effid}[2][1={}, 2=\delta,usedefault=@]{{{#2#1}}}
\newcommandx{\oid}[2][1={}, 2=o,usedefault=@]{{{#2#1}}}
\newcommandx{\hid}[2][1={}, 2=h,usedefault=@]{{{#2#1}}}
\newcommandx{\eid}[2][1={}, 2=e,usedefault=@]{{{#2#1}}}
\newcommandx{\rid}[2][1={}, 2=r,usedefault=@]{{{#2#1}}}
\newcommandx{\confid}[2][1={}, 2=C,usedefault=@]{{{#2#1}}}
\newcommand{\msgi}{{msg}}
\newcommandx{\prefixmsg}[2][1=k, 2=\msg, usedefault=@]{#2{\downharpoonleft_{#1}}}
\newcommandx{\prefix}[2]{#1{\downharpoonleft_{#2}}}
\newcommand{\length}{\mathsf{length}}
\newcommand{\natset}{\mathbb{N}}
\newcommand{\msgset}{\mathsf{Msg}}
\newcommand{\procset}{\mathcal{P}}
\newcommand{\epochset}{\natset}
\newcommand{\confset}{\mathsf{Config}}
\newcommand{\epochOf}{{\rm epochOf}}
\newcommandx{\aconf}[3][1=\eid, 2=\vm, 3=\pid, usedefault=@]{\langle#1, #2, #3\rangle}
\newcommand{\intro}{{\tt{introduction}}}
\newcommand{\bcast}{{\tt{broadcast}}}
\newcommand{\deliv}{{\tt{deliver}}}
\newcommand{\invref}[1]{Invariant~\ref{#1}}
\newcommand{\asmref}[1]{Assumption~\ref{#1}}
\newcommand{\thrmref}[1]{Theorem~\ref{#1}}
\newcommand{\figref}[1]{Figure~\ref{#1}}
\newcommand{\equref}[1]{(\ref{#1})}
\newcommand{\primary}{{\rm leader}}
\newcommand{\app}{\mathit{apply}}
\newcommandx{\rval}[2][1=\effid, 2=\rid, usedefault=@]{\langle#1, #2\rangle}
\newcommand{\confchng}{{\tt{conf\_changed}}}
\newcommand{\pstate}{\Sigma}
\newcommand{\tstate}{\Theta}
\newcommand{\return}{\textbf{return}\xspace}
\newcommandx{\totrel}[1][1=\Delta, usedefault=@]{<_{#1}}
\newcommandx{\myeq}[1][1=]{\mathrel{\stackrel{\makebox[0pt]{\mbox{\normalfont\tiny #1}}}{\equiv}}}
\newcommandx{\myimplies}[1][1=]{\mathrel{\stackrel{\makebox[0pt]{\mbox{\normalfont\tiny #1}}}{\implies}}}
\newcommand{\confchgnot}{Basic Configuration Change Properties}
\renewcommand{\_}{\texttt{\textunderscore}}
\newcounter{sarrow}
\newcommand{\removelatexerror}{\let\@latex@error\@gobble}
\authorrunning{M. Bravo et al.}
\titlerunning{Vertical Atomic Broadcast and Passive Replication}
\keywords{Reconfiguration, consensus, replication}
\title{Vertical Atomic Broadcast and Passive Replication\\ (Extended Version)}
\title{Vertical Atomic Broadcast and Passive Replication}
\author{Manuel Bravo}{Informal Systems, Madrid, Spain}{}{}{}
\author{Gregory Chockler}{University of Surrey, Guildford, UK}{}{}{}
\author{Alexey Gotsman}{IMDEA Software Institute, Madrid, Spain}{}{}{}
\author{Alejandro Naser-Pastoriza}{IMDEA Software Institute, Madrid, Spain \and
        Universidad Politécnica de Madrid, Spain}{}{}{}
\author{Christian Roldán}{IMDEA Software Institute, Madrid, Spain}{}{}{}
\def\@DOIPrefix{}
\begin{document}

\maketitle

\begin{abstract} 
  Atomic broadcast is a reliable communication abstraction ensuring that all
  processes deliver the same set of messages in a common global order. It is a
  fundamental building block for implementing fault-tolerant services using
  either active (aka state-machine) or passive (aka primary-backup)
  replication. We consider the problem of implementing reconfigurable atomic
  broadcast, which further allows users to dynamically alter the set of
  participating processes, e.g., in response to failures or changes in the
  load. We give a complete safety and liveness specification of this
  communication abstraction and propose a new protocol implementing it, called
  Vertical Atomic Broadcast, which uses an auxiliary service to facilitate
  reconfiguration. In contrast to prior proposals, our protocol significantly
  reduces system downtime when reconfiguring from a functional configuration by
  allowing it to continue processing messages while agreement on the next
  configuration is in progress.  Furthermore, we show that this advantage can be
  maintained even when our protocol is modified to support a stronger variant of
  atomic broadcast required for passive replication.
\end{abstract}

\iflong
\smallskip
\smallskip
\smallskip
\smallskip
\smallskip
\smallskip
\fi

\section{Introduction}

Replication is a widely used technique for ensuring fault tolerance of
distributed services. Two common replication approaches are \emph{active} (aka
state-machine) replication~\cite{smr} and \emph{passive} (aka primary-backup)
replication~\cite{budhiraja1993primary}. In active replication, a service is
defined by a deterministic state machine and is executed on several replicas,
each maintaining a copy of the machine. The replicas are kept in sync using
\emph{atomic broadcast}~\cite{to-survey}, which ensures that client commands are
delivered in the same order to all replicas; this can be implemented using,
e.g., Multi-Paxos~\cite{paxos}.

In contrast, in passive replication commands are executed by a single replica
(the \emph{leader} or \emph{primary}), which propagates the state updates induced
by the commands to the other replicas (\emph{followers} or \emph{backups}). This
approach allows replicating services with non-deterministic operations, e.g.,
those depending on timeouts or interrupts. But as shown in~\cite{ken-book,
  zookeeper, junqueira2013barriers}, implementing it requires propagating
updates from the leader to the followers using a stronger primitive than the
classical atomic broadcast. This is because in passive replication, a state
update is incremental with respect to the state it was generated in. Hence, to
ensure consistency between replicas, each update must be applied by a follower
to the same state in which it was generated by the leader. Junqueira et
al. formalized the corresponding guarantees by the notion of \emph{primary-order
  atomic broadcast (POabcast)}~\cite{zab,junqueira2013barriers}, which can be
implemented by protocols such as Zab~\cite{zab}, viewstamped
replication~\cite{vr} or Raft~\cite{raft}.

The above implementations of atomic or primary-order atomic broadcast require
replicating data among $2f+1$ replicas to tolerate $f$ %
failures. This is expensive: in principle, storing the data at $f+1$ replicas is 
enough for it survive $f$ failures. Since with only $f+1$ replicas even a single
replica failure will block the system, to recover we need to \emph{reconfigure}
it, i.e., change its membership to replace failed replicas with fresh
ones. Unfortunately, processes concurrently deciding to reconfigure the system
need to be able to agree on the next configuration; this reduces to solving
consensus, which again requires $2f+1$ replicas~\cite{lower-bound}. The way out
of this conundrum is to use a separate \emph{configuration service} with $2f+1$
replicas to perform consensus on configurations. In this way we use $2f+1$
replicas to only store configuration metadata and $f+1$ replicas to store the
actual data. This \emph{vertical approach}, layering replication on top of a
configuration service, was originally proposed in RAMBO~\cite{rambo}
for atomic registers and 
in \emph{Vertical Paxos}~\cite{vertical-paxos} for single-shot consensus. Since
then it has been used by many practical storage
systems~\cite{corfu,bigtable,farm,kafka}. These often use reconfiguration not
only to deal with failures, but also to make changes to a functional
configuration: e.g., to move replicas from highly loaded machines to lightly
loaded ones, or to change the number of machines replicating the
service~\cite{smart,kafka-book,matchmaker}.

Unfortunately, while the space of atomic broadcast protocols with $2f+1$
replicas has been extensively explored, the design of such protocols in 
vertical settings is poorly understood.  Even though one can obtain a vertical
solution for atomic broadcast by reducing it to Vertical Paxos, this would make
it hard to ensure the additional properties required for passive replication.
\ag{We could also mention that Rex paper Grisha found, if it's relevant.}
Furthermore, both Vertical Paxos and similar protocols~\cite{ken-book} stop the
system as the very first step of reconfiguration, which increases the downtime
when reconfiguring from a functional configuration. Due to the absence of a
theoretically grounded and efficient atomic broadcast protocol for vertical
settings, the designs used in industry are often ad hoc and buggy. For example,
until recently the vertical-style protocol used in Kafka, a widely used
streaming platform, contained a number of bugs in its failure
handling~\cite{kafka}. In this paper we make several contributions to improve
this situation.

First, we give a complete safety and liveness specification of
\emph{reconfigurable atomic broadcast}, sufficient for active replication 
(\S\ref{sec:spec}). We then propose its implementation in a vertical system with
$f + 1$ replicas and an external configuration service, which we call
\emph{Vertical Atomic Broadcast (VAB)} (\S\ref{sec:vp_protocol}). In contrast to 
prior vertical protocols~\cite{vertical-paxos,ken-book}, our implementation
allows the latest functional configuration to continue processing messages while
agreement on the next configuration is in progress. This reduces the downtime
when reconfiguring from a functional configuration from $4$ message delays in
the prior solutions to $0$.
We rigorously prove that the protocol correctly implements the reconfigurable
atomic broadcast specification, including both safety and liveness.

We next consider the case of passive replication (which we review in
\S\ref{sec:po}). We propose \emph{speculative primary-order atomic broadcast
  (SPOabcast)}, which we show to be sufficient for implementing passive
replication in a reconfigurable system (\S\ref{sec:srpob}). A key novel aspect
of SPOabcast is that SPOabcast is able to completely eliminate the downtime
induced by a \emph{Primary Integrity} property of the existing
POabcast~\cite{zab,junqueira2013barriers}. This property requires the leader of
a new configuration to suspend normal operation until an agreement is reached on
which messages broadcast in the previous configurations should survive in the
new one: in passive replication, these messages determine the initial service
state at the leader. Instead, SPOabcast allows the leader to \emph{speculatively}
deliver a tentative set of past messages before the agreement on them has been
reached, and then to immediately resume normal broadcasting. SPOabcast
guarantees that, if a process delivers a message $m_2$ broadcast by the new
leader, then prior to this the process will also deliver every message $m_1$ the
leader speculatively delivered before broadcasting $m_2$. This helps ensure that
the process applies the update in $m_2$ to the same state in which the leader
generated it, as required for the correctness of passive replication.

We show that SPOabcast can be implemented by modifying our Vertical Atomic
Broadcast protocol. The use of speculative delivery allows the resulting
protocol to preserve VAB's downtime of $0$ when reconfiguring from a functional
configuration. It thus allows using Vertical Atomic Broadcast to replicate
services with non-deterministic operations.

Overall, we believe that our specifications, protocols and correctness proofs
provide insights into the principles underlying existing reconfigurable systems,
and can serve as a blueprint for building future ones.

\section{System Model}
\label{sec:model}

We consider an asynchronous message-passing system consisting of an (infinite) universe of
processes $\procset$ which may fail by \emph{crashing}, i.e., permanently
stopping execution. A process is \emph{correct} if it never crashes, and
\emph{faulty} otherwise. Processes are connected by reliable FIFO channels:
messages are delivered in FIFO order, and messages between non-faulty processes
are guaranteed to be eventually delivered. The system moves through a sequence
of \emph{configurations}. A configuration $\confid$ is a triple $\aconf$ that
consists of an epoch $\eid\in\epochset$ identifying the configuration, a
finite set of processes $\vm \subseteq \procset$ that belong to the configuration, and a
distinguished \emph{leader} process $\pid\in\vm$. We denote the set of
configurations by $\confset$. In contrast to static systems, we do not impose a
fixed global bound on the number of faulty processes, but formulate our
availability assumptions relative to specific configurations (\S\ref{sec:spec}).

\emph{Reconfiguration} is the process of changing the system
configuration. 
We assume that configurations are stored in 
an external \emph{configuration service (CS)}, which is 
reliable and wait-free.
The configuration service provides three atomic
operations. An operation {\tt
  compare\_and\_swap}$(\eid, \langle \eid', \vm, p_l\rangle)$ succeeds iff the
epoch of the last stored configuration is $\eid$; in this case it stores the
provided configuration with a higher epoch $\eid' > \eid$.
Operations {\tt get\_last\_epoch}$()$ and {\tt get\_members}$(e)$ respectively
return the last epoch and the members associated with a given epoch $\eid$.

In practice, a configuration service can be implemented under partial synchrony
using Paxos-like replication over $2f+1$ processes out of which at most $f$ can
fail~\cite{lower-bound} (as is done in systems such as
Zookeeper~\cite{zookeeper}). Our protocols use the service as a black box, and
as a result, do not require any further environment assumptions about
timeliness~\cite{dls} or failure detection~\cite{cht96-2}.

\section{Specification}
\label{sec:spec}

In this section we introduce \emph{reconfigurable atomic broadcast}, a variant of
atomic broadcast~\cite{to-survey} that allows reconfiguration. The broadcast
service allows a process to send an \emph{application message} $\vid$ from a set
$\msgset$ using a call $\bcast(\vid)$. Messages are delivered using a
notification $\deliv(\vid)$. Any process
may initiate system reconfiguration using a call $\reconf()$. If
successful, this returns the new configuration $\confid$ arising as a
result; otherwise it returns $\bot$. Each process participating in the new
configuration then gets a notification $\confchng(\confid)$, informing it about
$\confid$.
In practice, $\reconf$ would take as a parameter a
description of the desired reconfiguration. For simplicity we abstract from this
in our specification, which states broadcast correctness for any results of
reconfigurations.

We record the interactions between the broadcast and its users via
\emph{histories} $\hid$ -- sequences of \emph{actions} $\aid$ of
one of the following forms:
$$
\begin{array}{@{}l@{}}
\bcast_i(\vid), \quad
\deliv_i(\vid), \quad
\confchng_i(\confid), 
\\[1pt]
\reconfreq_i, \quad
\reconfresp_i(\confid), \quad
\intro_i(\confid),
\end{array}
$$
where $\pid\in\procset$, $\vid\in\msgset$ and $\confid\in \confset$. Each action
is parameterized by a process $\pid$ where it occurs (omitted when 
irrelevant). The first three actions respectively record invocations
of $\bcast$, $\deliv$ and $\confchng$. The next pair of actions record calls to
and returns from the $\reconf$ function. Finally, the $\intro$ action records
the moment when this function stores the new configuration in the configuration
service.

For a history $h$ we let $h_k$ be the $k$-th action in $h$, and we write
$\aid \in \hid$ if $\aid$ occurs in $\hid$. We also write $\_$ for an irrelevant
value.
We only consider %
histories where calls to and returns from $\reconf$
match, and a process may perform at most one $\intro$ action during the
execution of $\reconf$. \ag{There are probably more well-formedness conditions,
  although don't know how detailed we want to be.} For simplicity we assume that
all application messages broadcast in a single execution are unique:
\iflong
\begin{equation}\label{prop:env-1}
\else
\begin{equation*}
\fi
\forall \vid, k, l.\ h_k = \bcast(\vid) \ \wedge\  h_l = \bcast(\vid) 
\implies  k = l.
\iflong
\end{equation}
\else
\end{equation*}
\fi
For a history $\hid$, a partial function
$\epochOf: \mathbb{N} \rightharpoonup \epochset$ returns the epoch of the
action in $\hid$ with a given index. This is the epoch of the latest preceding
$\confchng$ at the same process:
\begin{align*}
 \epochOf(k) = \eid 
  \iff  {} &    (\exists i, l, a.\, h_k = a_i \ \wedge\ h_{l} = 
            \confchng_i(\aconf[@][\_][\_]) \ \wedge\ l < k\ \wedge\ {}\\
& \phantom{(} \forall l'.\ l < l' < k \implies h_{l'} \not= \confchng_i(\aconf[\_][\_][\_])).
\end{align*}
When $\epochOf(k) = e$, we say that the action $h_k$ occurs in $e$.

\begin{figure}[t]
\small
		\begin{enumerate}
			\item \textbf{\confchgnot.} \label{prop:basicconf}
			\begin{enumerate} 
				\item \label{prop:wf-1}
				Any epoch $\eid$ is associated with unique membership and leader:
				\[
                                   \begin{array}{@{}l@{}l@{}}
				\forall \eid, i, j, \vm_1, \vm_2.\
                                     \confchng(\aconf[@][\vm_1]) \in \hid \ \wedge\
                                     \confchng(\aconf[@][\vm_2][{\pid[j]}]) \in \hid
                                     \implies {}
                                     \\[1pt] \pid = \pid[j]  \ \wedge\ \vm_1 = \vm_2
                                     \end{array}
				\]
				
				\item \label{prop:wf-3}
				If a process $\pid$ joins a configuration $\confid =
				\aconf[\_][@][\_]$, then $\pid$ is a member of $\vm$: 
				\[
				\forall i, \vm.\
				\confchng_i(\_, \vm, \_) \in h \implies \pid\in \vm
				\]

				\item \label{prop:wf-4}
				Processes join configurations with monotonically increasing epochs: 
				\[
                                  \begin{array}{@{}l@{}}
				\forall \eid_1, \eid_2, i, k, l. \ h_k =
                                    \confchng_i(\eid_1,\_, \_) \ \wedge\
                                    h_l = \confchng_i(\eid_2, \_, \_) \ \wedge\
                                    k < l \implies {}
                                    \\[1pt] \eid_1<\eid_2
                                    \end{array}
				\]
				
				\item \label{prop:wf-5}
				Any configuration a process joins is introduced;
			 a configuration is introduced at most once:
				\[
				\begin{array}{@{}l@{}}
				\forall \confid.\ 
				(\confchng(\confid) \in \hid \implies 
				\intro(\confid) \in \hid) \ \wedge\  {} \\[1pt] 
				(\forall k, l.\ h_k = \intro(\confid) 
                                  \ \wedge\
                                  h_l =\intro(\confid) \implies k=l)
				\end{array}
				\]
			\end{enumerate}
			\item \textbf{Integrity.}
			\label{prop:integrity}
			A process delivers a given application message $\vid$ at most once,
				and only if $\vid$ was previously broadcast:
                        \[
                       \begin{array}{@{}l@{}}      
			\forall \vid, i, k,l.\ h_k=\deliv_i(\vid)  \ \wedge\
			h_l=\deliv_i(\vid) 
                          \implies {} \\[1pt] k = l \ \wedge\  \exists j.\
                          h_j = \bcast(\vid) \ \wedge\  j < k
                         \end{array}
                         \]

                      \item \textbf{Total Order.}
			\label{prop:total}
                        If some process delivers $\vid[_1]$ before $\vid[_2]$,
                        then any process that delivers $\vid[_2]$ must also deliver $\vid[_1]$ before
                        this:
                        \[
                          \begin{array}{@{}l@{}}
                            \forall \vid[_1], \vid[_2], i, j, k, l, l'. \
                            h_k = \deliv_i(\vid[_1]) \ \wedge\
                            h_l = \deliv_i(\vid[_2]) \ \wedge\ k < l \ \wedge\  {}
                            \\[1pt] 
                            h_{l'}=\deliv_j(\vid[_2]) \implies 
                            \exists k'.\ h_{k'} = \deliv_j(\vid_1) \ \wedge\ k' < l'
                          \end{array}
                        \]

		 \item \textbf{Agreement.}
		\label{prop:agreement}
		If $\pid$ delivers $\vid[_1]$ and $\pid[j]$
                delivers $\vid[_2]$, then either $\pid$ delivers $\vid[_2]$ or
                $\pid[j]$ delivers $\vid[_1]$: 
		\[
                  \begin{array}{@{}l@{}}
		\forall \vid[_1], \vid[_2], i, j.\ 
		\deliv_i(\vid[_1])\in\hid \ \wedge\ \deliv_j(\vid[_2])\in\hid 
                    \implies  {} \\[2pt]
                    (\deliv_i(\vid[_2]) \in \hid \ \vee\ \deliv_j(\vid_1) \in \hid)
                 \end{array}
                 \]

			\item \textbf{Liveness.} \label{prop:liveness} Consider an execution
			with finitely many reconfiguration requests ($\reconfreq$), and let $r$ be the
			last reconfiguration request to be invoked. Suppose that $r$ is
			invoked by a correct process and no other reconfiguration call
			takes steps after $r$ is invoked. Then $r$ terminates, having
                        introduced a configuration $\confid = \langle e, M, p_i \rangle$:
                        $\reconfresp(C)$. Furthermore, if all processes in $M$ are correct, then:
			\begin{enumerate}
				\item \label{prop:liveness-a} all processes in
                                  $M$ deliver $\confchng(\confid)$; 
				\item \label{prop:liveness-b} if $\pid \in M$
                                  broadcasts %
                                  $\vid$ while
                                  in $e$, then all processes in $M$ eventually deliver $\vid$;
                              \item \label{prop:liveness-c}
                                if a process delivers %
                                $\vid$, then all processes in $M$ eventually deliver $\vid$. 
			\end{enumerate}
                      \end{enumerate}
	\caption{Properties of reconfigurable atomic broadcast over a history $h$.}
	\label{fig:rpob-properties}
\end{figure}

\emph{Reconfigurable atomic broadcast} is defined by the properties over
histories $h$ listed in
\figref{fig:rpob-properties}. Properties~\ref{prop:basicconf}
and~\ref{prop:integrity} are self-explanatory. Property~\ref{prop:total} ensures
that processes cannot deliver messages in contradictory orders.
Property~\ref{prop:agreement} disallows executions where sequences of messages
delivered at different processes diverge. 

The liveness requirements of reconfigurable atomic broadcast are given by
Property~\ref{prop:liveness}. Property~\ref{prop:liveness}a asserts a
termination guarantee for reconfiguration requests. As shown by Spiegelman and
Keidar~\cite{on-liveness}, wait-free termination is impossible to support even
for reconfigurable read/write registers, which are weaker than atomic
broadcast. Hence, the guarantee given by Property~\ref{prop:liveness}a is
similar to obstruction-freedom~\cite{obstruct-free}. %
Let us say that a configuration $\confid$ is \emph{activated} when all its members get
$\confchng(C)$ notifications.  Property~\ref{prop:liveness}a asserts that, in a
run with finitely many reconfigurations, the last reconfiguration request
invoked by a correct process and executing in isolation must eventually succeed
to introduce a configuration $C$, which must then become activated if all its
members are correct.
Properties~\ref{prop:liveness}b-c state liveness guarantees
for the configuration $C$ similar to those of the classical atomic broadcast.
Property~\ref{prop:liveness}b asserts that any message broadcast by a (correct)
member of $C$ eventually gets delivered to all members of $C$. Property~\ref
{prop:liveness}c additionally ensures that the members of $C$ eventually deliver
all messages delivered by any process in any configuration.

As in prior work~\cite{dynastore,ken-book,spiegelman2017dynamic},
the liveness of our protocols is premised on the following assumption, which limits the power of the environment 
to crash configuration members.
\begin{assumption}[Availability]
\label{asm:liveness-condition}
Let $\confid = \aconf[\eid][\vm][\_]$ be an introduced configuration, i.e., such
that $\intro(\confid) \in \hid$. Then at least one member of $\vm$ does not crash
before another configuration $\confid' = \aconf[\eid'][\_][\_]$ with $\eid'> \eid$ is activated.
\end{assumption}
Our protocols use the period of time when some member of $\vm$ is guaranteed not
to crash to copy its state to the members of a new configuration.

Finally, we note that in the case of a single static configuration, our
specification in \figref{fig:rpob-properties} corresponds to the classical
notion of atomic broadcast~\cite{to-survey}.

\section{The Vertical Atomic Broadcast Protocol}
\label{sec:vp_protocol}

In Figures~\ref{vp_failure_free} and \ref{vp_reconfiguration} we present a
protocol implementing the specification of \S\ref{sec:spec}, which we call
\emph{Vertical Atomic Broadcast (VAB)} by analogy with Vertical 
Paxos~\cite{vertical-paxos}. For now the reader should ignore the code in blue.
At any given time, a process executing the protocol
participates in a single configuration, whose epoch is stored in a variable
$\epoch$. The membership of the configuration is stored in a variable
$\members$. Every member of a given configuration is either the leader or a
\emph{follower}. A $\status$ variable at a process records whether it is a
\LEADER, a \FOLLOWER, or is in a special \FRESH\ state used for new processes. A
$\leadervar$ variable stores the leader of the current configuration. We assume
that the system starts in an initial active configuration with epoch $0$.

\begin{figure}[t]
	\begin{tabular}{@{}l@{\!\!\!\!\!\!}|@{\qquad}l@{}}
        \scalebox{0.96}{%
	\begin{minipage}[t]{7cm}
	\vspace*{-12pt}
	\begin{algorithm}[H]
		\DontPrintSemicolon
		\setcounter{AlgoLine}{0}
		$\assign{\epoch}{0} \in \mathbb{Z}$\;
		$\assign{\newepoch}{0} \in \mathbb{Z}$\;
		$\assign{\pnext}{0} \in \mathbb{Z}$\;
		$\assign{\initlength}{-1} \in \mathbb{Z}$\;
		$\assign{\lastdelivered}{-1} \in \mathbb{Z}$\;
		$\members \in 2^\mathcal{P}$\;
		${\leadervar}\in \mathcal{P}$\;
		$\msg[\,] \in \mathbb{N} \rightarrow \msgset \cup \{\bot\}$\;
		$\status \in \{\LEADER, \FOLLOWER, \FRESH \}$\;
		\smallskip

		\Function{${\bcast}(m)$\label{vp_broadcast}}{
			\send $\FORWARD(\vid)$ \To $\leadervar$\; \label{vp_broadcast_body}
			
		}
	
			\smallskip

        \onreceive ${\FORWARD}(\vid)$ \from $\pid[j]$\label{vp_forward}\\
	\SubAlgo{{\color{blue} // \textbf{function ${\bcast}(m)$:\label{po_broadcast}}}}{
		\precond $\pid = \leadervar$\;\label{vp_forward_pre}
		$\assign{\msg[\pnext]}{m}$\; \label{vp_broadcast_assign}
		\send $\ACCEPT(\epoch, \pnext, m)$ \qquad\qquad \To $\members \setminus \{p_i\}$\;\label{vp_send_accept}
		$\assign{\pnext}{\pnext + 1}$\;\label{vp_broadcast_next}
		
	}

	\smallskip

	\end{algorithm}
	\vspace*{-5pt}
\end{minipage}}
&
        \scalebox{0.96}{%
	\begin{minipage}[t]{7.5cm}
	\vspace*{-12pt}
	\begin{algorithm}[H]
		\DontPrintSemicolon

		\SubAlgo{\onreceive ${\ACCEPT}(e, k, m)$ \from $p_j$\label{vp_accept}}{
			\precond $\status = \FOLLOWER \wedge \epoch = e$\;\label{vp_accept_pre}
			$\assign{\msg[k]}{m}$\; \label{vp_accept_assignment}
			\send $\ACCEPTACK(e, k)$ \To $p_j$\;\label{vp_send_accept_ack}
		}
		
                \smallskip
		
		\SubAlgo{\onreceive $\ACCEPTACK(e, k)$ \qquad\qquad \textbf{from all}
			$\members \setminus \{p_i\}$\label{vp_quorum_of_accept_acks}}{
			\precond $\status = \LEADER \wedge \epoch = e$\; \label{vp_pre_quorum_of_accept_acks}
			\send $\COMMIT(e, k)$ \To $\members$\;\label{vp_send_commit}
		}
		
		\smallskip
		
		\SubAlgo{\onreceive $\COMMIT(e, k)$\label{vp_commit}}{
			\precond $\status \in \{\LEADER, \FOLLOWER\} \wedge \ \newline{\;\;\;\;\;\;\,} \epoch = e \wedge k = \lastdelivered + 1$\;\label{vp_commit_pre}
			$\assign{\lastdelivered}{k}$\;\label{vp_commit_last_delivered}
			$\deliv(\msg[k])$\;\label{vp_deliver}
		}	
\end{algorithm}
\vspace*{-5pt}
\end{minipage}}
\end{tabular}
	\caption{Vertical Atomic Broadcast at a process $p_i$: normal operation.}
	\label{vp_failure_free}
\end{figure}

\subparagraph{Normal operation.} When a process receives a call
${\tt broadcast}(\vid)$, it forwards $\vid$ to the leader of its current
configuration (line \ref{vp_broadcast}). Upon receiving $\vid$ (line
\ref{vp_forward}), the leader adds it to an array $\msg$; a $\pnext$ variable
points to the first free slot in the array (initially $0$). The leader then
sends $m$ to the followers in an $\ACCEPT(\eid, k, \vid)$ message, which carries
the leader's epoch $\eid$, the position $k$ of $m$ in the $\msg$ array, and the
message $m$ itself.

A process acts on the $\ACCEPT$ message (line \ref{vp_accept_pre}) only if it
participates in the corresponding epoch. It stores $m$ in its local copy of the
$\msg$ array and sends an $\ACCEPTACK(\eid, k)$ message to the leader of
$\eid$.
The application message at position $k$ is \emph{committed} if
the leader of $\eid$ receives $\ACCEPTACK$ messages for epoch
$\eid$ and position $k$ from all followers of its configuration (line
\ref{vp_quorum_of_accept_acks}). 
In this case the leader notifies all
the members of its configuration that the application message can be safely
delivered via a $\COMMIT$ message. A process delivers application messages in
the order in which they appear in its $\msg$ array, with $\lastdelivered$
storing the last delivered position (line \ref{vp_commit}).

\begin{figure}[t]
	\begin{tabular}{@{}l@{\!\!\!\!}|@{\ \ }l@{}}
        \scalebox{0.96}{%
		\begin{minipage}[t]{8.25cm}
			\vspace*{-12pt}
			\begin{algorithm*}[H]
		\DontPrintSemicolon
				\Function{{$\reconf$}{$()$}\label{vp_reconfigure}}{
					\textbf{var} $\pepoch, \pmembers, \repoch, \rpmembers$\;
					$\assign{\pepoch}{\tt get\_last\_epoch}()\ \textbf{at CS}$\;\label{vp_reconfigure_get_last}
					$\assign{\repoch}{\eid+1}$\; \label{vp_reconfigure_inc}
					\Repeat{\textbf{\upshape received}
                                          $\PROBEACK(${\upshape
                                            \true}$,\repoch)$ \qquad \textbf{\upshape from some} $\pid[j]$\label{vp_probe_ack_true}}{
						\If{$\eid \ge 0$}{
							$\assign{\pmembers}{{\tt get\_members}(\pepoch)}\ \textbf{at CS}$\;
							\send $\PROBE(\repoch, \pepoch)$ \To $\pmembers$\;\label{vp_send_probe}
							\textbf{wait until
                                                          received}
                                                        $\phantom{~~}\PROBEACK(\_, \repoch)$
                                                        \textbf{\qquad\qquad\qquad
                                                          \phantom{\hspace{6pt}}from a process in $\pmembers$}\label{vp_probe_ack_false}\;
							$\assign{\pepoch}{\pepoch - 1}$\;
						}
					}
					$\assign{\rpmembers}{\tt compute\_membership}()$\;\label{vp_compute_membership}
					\uIf{${\tt compare\_and\_swap}(\repoch {-} 1,\langle \repoch,
						\rpmembers, \pid[j] \rangle)$ \qquad
                                                \textbf{\upshape at CS}  {\tt /*}
                                                {\intro}$(\langle\repoch,\rpmembers, \pid[j]
                                                \rangle)$ {\tt */}\label{vp_compare_and_swap}}{
						\send $\NEWCONFIG(\repoch, \rpmembers)$ \To $\pid[j]$\;\label{vp_send_new_config}
						\textbf{return} $\langle \repoch, \rpmembers, \pid[j] \rangle$\;
					}
					\Else{\textbf{return} $\bot$}
				}
			
		\smallskip
                                
			\SubAlgo{\onreceive $\PROBE(\repoch, \eid)$ \from $\pid[j]$\label{vp_probe}}{
				\precond $\repoch \geq
				\newepoch$\;\label{vp_probe_pre}
				$\assign{\newepoch}{\repoch}$\; \label{vp_probe_set_newepoch}
				\uIf{$\epoch\geq \eid$}{\send $\PROBEACK(\true, \repoch)$ \To
					$\pid[j]$}\label{vp_send_probe_ack}
				\Else{\send $\PROBEACK(\false, \repoch)$ \To
					$\pid[j]$}\label{vp_send_probe_ack_false}
			}
			\end{algorithm*}
			\vspace*{-6pt}
                      \end{minipage}} &
        \scalebox{0.96}{%
		\begin{minipage}[t]{6.5cm}
			\vspace*{-12pt}
			\begin{algorithm*}[H]
                          		\DontPrintSemicolon
                        \SubAlgo{\onreceive $\NEWCONFIG(\eid, \vm)$ \from $p_j$\label{vp_new_config}}{
					\precond $\newepoch = e$\;\label{vp_new_config_pre}
					$\assign{\status}{\LEADER}$\;\label{vp_quorum_of_new_state_acks_status}
					$\assign{\epoch}{\eid}$\; \label{vp_new_config_epoch}
					$\assign{\members}{\vm}$\;
					$\assign{\leadervar}{\pid}$\; \label{vp_setleader}
					$\assign{\pnext}{\max\{k \mid \msg[k] \neq \bot\}+1}$\;\label{vp_quorum_of_new_state_acks_next_opt}
					$\assign{\initlength}{\pnext-1}$\;
					$\confchng(\eid,\vm, \pid)$\;\label{vp_ready_confchg}
					{\color{blue}{// $\confchng(\eid,\vm, \pid,$
                                            \phantom{\quad\ \ \ }$\msg[\lastdelivered{+}1..\initlength])$}}\label{po_newconfig_sdeliv}\\
					\send $\NEWSTATE(\eid, \msg, \vm)$
                                        \qquad \To $\members \setminus \{p_i\}$\;\label{vp_send_new_state}
				}

		\smallskip
				
				\SubAlgo{\onreceive $\NEWSTATE(e,
                                  \vmsg, \vm)$ \from $p_j$\label{vp_new_state}}{
					\precond $\newepoch \le \eid$\;\label{vp_new_state_pre}
					$\assign{\status}{\FOLLOWER}$\;
					$\assign{\epoch}{e}$\;
					$\assign{\newepoch}{e}$\;
					$\assign{\msg}{\vmsg}$\; \label{vp_newstate_msg_assignment}
					$\assign{\leadervar}{\pid[j]}$\;
					$\confchng(\eid,\vm, \pid[j])$\;\label{vp_new_state_confchg}
					{\color{blue}{// $\confchng(\eid,\vm, \pid[j], \bot)$}}\;\label{vp_new_state_confchg2}
					\send $\NEWSTATEACK(e)$ \To $p_j$\;\label{vp_send_new_state_ack}
				}
				
		\smallskip
				
				\SubAlgo{\onreceive $\NEWSTATEACK(e)$ \from{}
                                  \textbf{all}
                                  $\members \setminus \{p_i\}$\label{vp_quorum_of_new_state_acks}}{
					\precond $\newepoch = \epoch = e$\; 
					\lFor{$k = 1..\initlength$}{\qquad \qquad \textbf{send $\texttt{COMMIT}(e, k)$ \To $\members$}}\label{vp_send_commit_2}						
				}
			\end{algorithm*}	
			\vspace*{-6pt}
		\end{minipage}}
	\end{tabular}
	\caption{Vertical Atomic Broadcast at a process $p_i$: reconfiguration}
	\label{vp_reconfiguration}
\end{figure}

\subparagraph{Reconfiguration: probing.} Any process can initiate a
reconfiguration, e.g., to add new processes or to replace failed ones.
Reconfiguration aims to preserve the following invariant, key to proving the
protocol correctness.
\begin{invariant}
\label{inv:prefix-of-higher-epoch}
Assume that the leader of an epoch $\eid$ sends $\COMMIT(\eid, k)$ while having
$\msg[k] = \vid$. Whenever any process $\pid$ has $\epoch = \eid' > \eid$, it
also has $\msg[k] = \vid$.
\end{invariant}
The invariant ensures that any application message committed in an epoch $\eid$
will persist at the same position in all future epochs $\eid'$. This is used to
establish that the protocol delivers application messages in the same order at
all processes.

To ensure \invref{inv:prefix-of-higher-epoch}, a process performing a
reconfiguration first \textit{probes} the previous configurations to find
a process
whose state contains all messages that could have been committed in
previous epochs, which will serve as the new leader.
The new leader then
transfers its state to the followers of the new configuration. We say that a
process is \emph{initialized} at an epoch $\eid$ when it completes the state
transfer from the leader of $\eid$; it is at this moment that the process
assigns $\eid$ to its $\epoch$ variable, used to guard the transitions at
lines~\ref{vp_accept},~\ref{vp_quorum_of_accept_acks}, \ref{vp_commit}.
Our protocol guarantees that a configuration with epoch $\eid$ can become activated
only after all its members have been initialized at $\eid$. Probing is
complicated by the fact that there may be a series of failed reconfiguration
attempts, where the new leader fails before initializing all its followers. For
this reason, probing may require traversing epochs from the current one down,
skipping epochs that have not been activated.

In more detail, a process $p_r$ initiates a reconfiguration by calling
${\tt reconfigure}$ (line~\ref{vp_reconfigure}). The process picks an epoch
number $\repoch$ higher than the current epoch stored in the configuration
service and then starts the probing phase. The process $p_r$ keeps track of the
epoch being probed in $\pepoch$ and the membership of this epoch in
$\pmembers$. The process initializes these variables when it obtains the
information about the current epoch from the configuration service.
To probe an epoch $e$, the process sends a
$\PROBE(\repoch, \pepoch)$ message to the members of its configuration, asking
them to join the new epoch $\repoch$ (line \ref{vp_send_probe}). Upon receiving
this message (line \ref{vp_probe}), a process first checks that the proposed
epoch $\repoch$ is $\ge$ the highest epoch it has ever been asked
to join, which is stored in $\newepoch$ (we always have $\epoch \le
\newepoch$). In this case, the process sets $\newepoch$ to $\repoch$. Then, if
the process was initialized at an epoch $\ge$ the epoch $\pepoch$ being probed,
\ag{We may want to explain this if there's space.}  it replies
with $\PROBEACK(\true, \repoch)$; otherwise, it replies with
$\PROBEACK(\false, \repoch)$.

If $p_r$ receives at least one $\PROBEACK(\false, \repoch)$ from a member of
$\pepoch$ (line \ref{vp_probe_ack_false}), $p_r$ can conclude that $\pepoch$ has
not been activated, since one of its processes was not initialized by the leader
of this epoch.  The process $p_r$ can also be sure that $\pepoch$ will never
become activated,
since it has switched at least one of its members to the new
epoch. In this case, $p_r$ starts probing the preceding epoch $\pepoch-1$. Since
no application message could have been committed in $\pepoch$, picking a new leader
from an earlier epoch will not lose any committed messages and thus will not
violate \invref{inv:prefix-of-higher-epoch}.  If $p_r$ receives some
$\PROBEACK(\true, \repoch)$ messages, then it ends probing: any process $p_j$
that replied in this way can be selected as the new leader (in particular, $p_r$
is free to maintain the old leader if this is one of the processes that
replied).

\subparagraph{Reconfiguration: initialization.}
Once the probing finds a new leader $p_j$ (line~\ref{vp_probe_ack_true}), the
process $p_r$ computes the membership of the new configuration using a function
{\tt compute\_membership} (line~\ref{vp_compute_membership}). We do not
prescribe any particular implementation for this function, except that the new
membership must contain the new leader $p_j$. In practice, the function would
take into account the desired changes to be made by the reconfiguration. Once
the new configuration is computed, $p_r$ attempts to store it in the
configuration service using a {\tt compare\_and\_swap} operation. This succeeds
if and only if the current epoch in the configuration service is still the epoch
from which $p_r$ started probing, which implies that no concurrent
reconfiguration occurred during probing. In this case $p_r$ sends a $\NEWCONFIG$
message with the new configuration to the new leader and returns the new
configuration to the caller of {\tt reconfigure}; otherwise, it returns $\bot$.
A successful {\tt compare\_and\_swap} also generates an $\intro_r$ action for
the new configuration, which is used in the broadcast specification (\S\ref{sec:spec}).

When the new leader receives the $\NEWCONFIG$ message (line
\ref{vp_new_config}), it sets $\status$ to $\LEADER$, $\epoch$ to the new epoch,
and stores the information about the new configuration in $\members$ and
$\leadervar$. The leader also sets $\pnext$ to the first free slot in the $\msg$
array and saves its initial length in a variable $\initlength$. The leader then
invokes $\confchng$ for the new configuration. In order to finish the
reconfiguration, the leader needs to transfer its state to the other members of
the configuration. To this end, the leader sends a $\NEWSTATE$ message to them,
which contains the new epoch and a copy of its $\msg$ array (line
\ref{vp_send_new_state}; a practical implementation would optimize this by
sending to each process only the state it is missing). Upon receiving a
$\NEWSTATE$ message (line \ref{vp_new_state}), a process overwrites its $\msg$
array with the one provided by the leader, sets its $\status$ to $\FOLLOWER$,
$\epoch$ to the new epoch, and $\leadervar$ to the new leader. The process also
invokes $\confchng$ for the new configuration.  It then acknowledges its
initialization to the leader with an $\NEWSTATEACK$ message.
Upon receiving $\NEWSTATEACK$ messages from all
followers (line \ref{vp_quorum_of_new_state_acks}), the new leader sends
$\COMMIT$s for all application messages from the previous epoch, delimited by
$\initlength$.  These messages can be safely delivered, since they are now
stored by all members of epoch $e$.

\subparagraph{Example.} \figref{fig:message-flow-reconf} gives an example
illustrating the message flow of reconfiguration.  Assume that the initial
configuration ${\tt 1}$ consists of processes $\pid[1]$, $\pid[2]$ and $\pid[3]$.
Following a failure of $\pid[3]$, a process $\pid[r]$ initiates reconfiguration
to move the system to a new configuration ${\tt 2}$. 
To this end, $\pid[r]$ sends $\PROBE({\tt 2, 1})$ to the
members of configuration ${\tt 1}$. Both processes $\pid[1]$ and $\pid[2]$
respond to $\pid[r]$ with $\PROBEACK(\true, {\tt 2})$. 
The process $\pid[r]$ computes the membership of the new configuration,
replacing $\pid[3]$ by a fresh process $\pid[4]$, and stores the new
configuration in the configuration service, with $\pid[2]$ as the new
leader. Next, $\pid[r]$ sends a $\NEWCONFIG$ message to
$\pid[2]$.

\begin{figure*}[t!]
  \renewcommand{\baselinestretch}{0.8}
  \resizebox{1.02\textwidth}{!}{%
	\centering
	\begin{tikzpicture}[decoration={markings,mark=at position 1 with
		{\arrow[scale=1.5,>=stealth]{>}}},postaction={decorate}]
	
	\draw[thick, dashed] (4,0) --(18.5,0);
	\draw[thick, dashed] (0,1) --(0.3,1);
	\draw[thick, dashed] (0,2) --(5.1,2);
	\draw[thick, dashed] (0,3) --(18.5,3);

	\draw[thick, dashed] (0,4) --(2.6,4);
	\draw[thick, dashed] (4.3,4) --(12.1,4);
	\draw[thick, dashed] (13.6,4) --(18.5,4);
	\draw[thick, dashed] (11,-1) --(18.5,-1);

	\node [left] at (0,4.1) {$\pid[r]$};
	\node [left] at (0,3.1) {$\pid[1]$};
	\node [left] at (0,2.1) {$\pid[2]$};
	\node [left] at (0,1.1) {$\pid[3]$};
	\node [left] at (4,0.1) {$\pid[4]$};
	\node [left] at (11,-1) {$\pid[5]$};
	
	\node (CC3) [text width=2.5cm] at (1.8,1.5) {\footnotesize{probe the current configuration}};	
	
	\node (CC1) [text width=1.8cm] at (3.58,4) {\footnotesize{compute membership}};

	\node (CC7) [text width=2.5cm] at (6.8,-.47) {\footnotesize{probe the current configuration}};

	\node (CC8) [text width=1.6cm] at (4.9,1.15) {\footnotesize{notify the new leader and move to {\tt 2}}};
	
    \node (CS) [text width=50mm] at (8.5,6.05) {\footnotesize{\ \ initially: $\langle 1, \{\pid[1], \pid[2], \pid[3]\}, \pid[3] \rangle$}};

    \node (CS) [
    draw,thick,rounded corners,fill=yellow!20,inner sep=.2cm, text width=36.5mm]
    at (8.1,5.3) {\textbf{~~~~~Configuration \\~~~~~~~~~~~service}};

	\node [left] (s1c) at (0.7,3.9)  {};
	\node [left] (s2c) at (4,2.9)  {};
	\node [left] (s3c) at (5.8,2.9)  {};

	\draw[postaction={decorate}, black] (0.4,3.9) -- (.6,3) node[sloped,above] {};
	
	\draw[postaction={decorate}, black] (0.5,3.9) -- (.95,2) node[pos=0.43,sloped,above, fill=white,inner sep=0] {\footnotesize	$\PROBE(\tt{2, 1})$};
	configuration
	\draw[postaction={decorate}, black] (1.35,2) -- (2.45,4)  node[pos=0.47,sloped,above, fill=white, inner sep=0] {\footnotesize	$\PROBEACK({\tt T, 2})$};
	\draw[postaction={decorate}, black] (2.03,3) -- (2.58,4) node[sloped,above] {};
	\draw[postaction={decorate}, black] (4.4,4) -- (4.8,2) node[pos=0.49,sloped,above,fill=white] {\footnotesize	$\NEWCONFIG$};		
	
	\draw[postaction={decorate}, black] (5,4) -- (5.3,3) node[pos=0.5,sloped,above] {};
	\draw[postaction={decorate}, black] (5.1,4) -- (6.3,0) node[pos=0.25,sloped,above,fill=white, inner sep=0] {\footnotesize	$\PROBE(\tt{3, 2})$};;
	
	\node (CC6) [text width=3cm] at (9,2.49) {\footnotesize{probe the preceding configuration}};	
			
	\draw[postaction={decorate}, black] (7,4) -- (8.5,3) node[pos=0.75,sloped,above] {\footnotesize	$\PROBE(\tt{3, 1})$};;
	
	\draw[postaction={decorate}, black] (6.5,0) -- (6.8,4)  node[pos=0.45,sloped,above] {\footnotesize	$\PROBEACK({\tt F, 3})$};
	
	\draw[postaction={decorate}, black] (9.3,3) -- (12,4) node[pos=0.32,sloped,above,fill=white, inner sep=0] {\footnotesize	$\PROBEACK({\tt T, 3})$};

	\node (CC2) [text width=1.8cm] at (12.93,4) {\footnotesize{compute membership}};
	\node (CC4) [text width=1.6cm] at (14.7,2.15) {\footnotesize{notify the new leader and move to {\tt 3}}};
	\node (CC5) [text width=1.8cm] at (16.6,-1.65) {\footnotesize{transfer state and move to {\tt 3}}};
	
	\node (CC6) [text width=1.6cm] at (18.6,-1.52) {\footnotesize{deliver messages}};
		
	\draw[postaction={decorate}, black] (13.6,4) -- (15.2,3) node[pos=0.7,sloped,above] {\footnotesize	$\NEWCONFIG$};

	\draw[postaction={decorate}, black] (15.5,3) -- (16.1,0) node[pos=0.65,sloped,above] {};
	\draw[postaction={decorate}, black] (15.6,3) -- (16.4,-1) node[pos=0.25,sloped,above]{\footnotesize	$\NEWSTATE$};
	
	\node [left] (s1c) at (6.6,1.9)  {};
	
	\node (W1) [above] at (0.35,0.76)
	{{$\Cross$}};
	
	\node (W2) [above] at (5.1,1.76)
	{{$\Cross$}};
	
	\draw[postaction={decorate}, black, bend left=20] (CC1) to[left]  node[pos=0.3, above=.4cm] {\footnotesize store($\langle${\tt 2}, \{$\pid[1], \pid[2], \pid[4]$\}, $\pid[2] \rangle$)$\qquad$} (CS.west);
	
	\draw[postaction={decorate}, black, bend right=20] (CC2) to[left]  node[pos=0.3, above=.39cm] {\footnotesize $\quad$ store($\langle${\tt 3}, \{$\pid[1], \pid[4], \pid[5]$\}, $\pid[1] \rangle$)} (CS.east);
	
	\draw[postaction={decorate}, black] (16.6,-1) -- (17,3) node[pos=0.6,sloped,above]{\footnotesize $\NEWSTATEACK$};
	
	\draw[postaction={decorate}, black] (16.85,0) -- (17.15,3) node[pos=0.6,sloped,above] {};

	\draw[postaction={decorate}, black] (17.6,3) -- (17.97,0) node[pos=0.3,sloped,above] {};
	\draw[postaction={decorate}, black] (17.7,3) -- (18.2,-1) node[pos=0.3,sloped,above]{\footnotesize	$\COMMIT({\tt 3,1})$};
	
	\end{tikzpicture}
	}
\caption{The behavior of the protocol during reconfiguration.}\label{fig:message-flow-reconf}
\end{figure*}
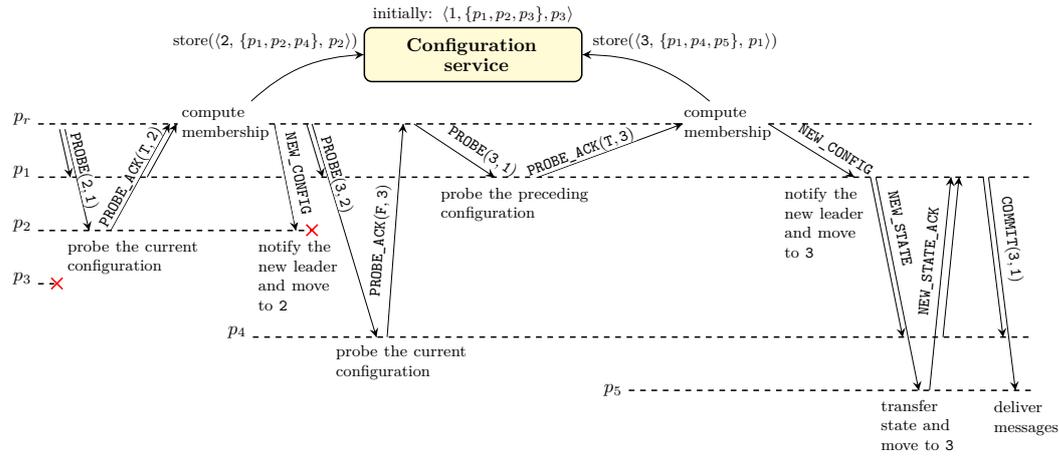

Assume that after receiving this message $\pid[2]$ fails, prompting $\pid[r]$ to
initiate yet another reconfiguration to move the system to a configuration
${\tt 3}$.
To this end,
$\pid[r]$ sends $\PROBE({\tt 3, 2})$ to the members of configuration ${\tt 2}$,
and $\pid[4]$ responds with $\PROBEACK(\false, {\tt 3})$. The process $\pid[r]$
concludes that epoch ${\tt 2}$ has not been activated and starts probing the
preceding epoch ${\tt 1}$: it sends $\PROBE({\tt 3, 1})$ and gets a
reply $\PROBEACK(\true, {\tt 3})$ from $\pid[1]$, which is selected as the new
leader. The process $\pid[r]$ computes the new set of members, replacing
$\pid[3]$ by a fresh process $\pid[5]$, stores the new configuration in the
configuration service, and sends a $\NEWCONFIG$ message to 
the new leader $\pid[1]$. This process invokes the $\confchng$ upcall for the new configuration
and sends its state to the followers in a $\NEWSTATE$ message.  The followers
store the state, invoke $\confchng$ upcalls and reply with $\NEWSTATEACK$s. Upon
receiving these, $\pid[1]$ sends $\COMMIT$s for all application messages in its state.

\subparagraph{Steady-state latency and reconfiguration downtime.}  A
configuration is \emph{functional}\/ if it was activated and all its members are
correct.  A configuration is \emph{stable}\/ if it is functional and no
configuration with a higher epoch is introduced. The \emph{steady-state
  latency}\/ is the maximum number of message delays it takes from the moment
the leader $p_i$ of a stable configuration receives a broadcast request for a
message $m$ and until $m$ is delivered by $p_i$. It is easy to see that
our protocol has the steady-state latency of $2$ (assuming self-addressed messages are received 
instantaneously), which is optimal~\cite{lower-bound}.

The system may be reconfigured not only in response to a failure, but also to
make changes to a functional configuration: e.g., to move replicas from highly
loaded machines to lightly loaded ones, or to change the number of machines
replicating the service~\cite{smart,kafka-book,matchmaker}. As modern online
services have stringent availability requirements, it is important to minimize
the period of time when a service is unavailable due to an ongoing
reconfiguration.
More precisely, suppose the system is being reconfigured from a functional
configuration $C$ to a stable configuration $C'$.  The reconfiguration
\emph{downtime}\/ is the maximum number of message delays it takes from the
moment $C$ is disabled and until the leader of $C'$ is ready to broadcast
application messages in the new configuration.

As we argue in \S\ref{sec:rw}, existing vertical solutions for atomic broadcast
stop the system as the first step of reconfiguration~\cite{ken-book}, resulting
in the reconfiguration downtime of at least $4$~($2$ message delays to disable
the latest functional configuration plus at least $2$ message delays to reach
consensus on the next configuration and propagate the decision).  In contrast,
our protocol achieves the downtime of $0$ by keeping the latest functional
configuration active while the probing of past configurations and agreement on a
new one is in progress.

\begin{theorem}\label{thm:zero-downtime}
  The VAB protocol reconfigures a functional configuration with $0$ downtime.
\end{theorem}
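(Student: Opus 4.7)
The plan is to trace through the reconfiguration phase of VAB and show two things: the configuration $C$ continues to operate normally throughout the probing and the \texttt{compare\_and\_swap} performed by the initiator $p_r$, and the new leader $p_j$ of $C'$ becomes ready to broadcast at the very same local step that first disables $C$. To make this precise, I would first define ``$C$ is disabled'' as the earliest moment at which some process in $\vm \cap \vm'$ advances its $\epoch$ from $\eid$ to $\eid'$, since from that step onward the old leader of $C$ can no longer collect the quorum of ACCEPT\_ACKs required to commit in epoch $\eid$.

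First I would inspect the PROBE handler at line~\ref{vp_probe} and verify that it modifies only the local $\newepoch$ variable at its recipients, leaving $\status$, $\epoch$, $\members$, $\leadervar$, and $\msg$ untouched. As a direct consequence, the preconditions of the normal-operation handlers at lines~\ref{vp_accept_pre},~\ref{vp_pre_quorum_of_accept_acks}, and~\ref{vp_commit_pre} remain satisfied at every member of $C$, so $C$ continues to accept broadcasts, exchange ACCEPT and ACCEPT\_ACK messages, and deliver committed messages while $p_r$ probes past epochs, computes the new membership, and performs the \texttt{compare\_and\_swap} at the configuration service. Since the configuration-service calls only touch external state, none of these steps contribute to downtime.

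Next I would observe that the only two protocol steps that advance $\epoch$ to $\eid'$ are the NEW\_CONFIG handler at $p_j$ (line~\ref{vp_new_config}) and the NEW\_STATE handler at followers (line~\ref{vp_new_state}); the latter can fire only after $p_j$ has sent NEW\_STATE at line~\ref{vp_send_new_state}, which itself executes strictly after $p_j$ enters the NEW\_CONFIG handler. Hence the earliest disabling step is $p_j$'s execution of line~\ref{vp_new_config}. At that same step, $p_j$ sets $\status \leftarrow \LEADER$, $\epoch \leftarrow \eid'$, and $\leadervar \leftarrow p_j$, which are precisely the conditions the FORWARD handler at line~\ref{vp_forward} checks before producing an ACCEPT in the new epoch. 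Thus the leader of $C'$ is ready to broadcast by the very step at which $C$ is first disabled, so the downtime is zero.

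The main obstacle I expect is pinning down the definition of ``$C$ is disabled'' in a way that both matches the informal wording in the text (``from the moment $C$ is disabled'') and makes the synchronization argument airtight across all cases -- in particular when $p_j \notin \vm$, in which case $C$ in fact remains functional even beyond the step at which $p_j$ is ready, and the claimed downtime of $0$ is an upper bound rather than a tight value. Once this definitional point is fixed, the rest of the proof is a direct inspection of the reconfiguration handlers together with the observation that $p_r$'s interactions with the configuration service do not modify any state of members of $C$.
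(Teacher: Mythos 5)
Your proposal is correct and takes essentially the same route as the paper's proof: observe that handling $\PROBE$ only modifies $\newepoch$ (so the preconditions $\epoch = \eid$ guarding the normal path stay satisfied and $C$ keeps broadcasting and delivering throughout probing and the {\tt compare\_and\_swap}), and that the first step disabling $C$ is the new leader's handling of $\NEWCONFIG$, at which very step it becomes ready to broadcast in $\eid'$. Your definitional worry about $p_j \notin \vm$ cannot arise here, since the new leader is chosen among the processes replying $\PROBEACK(\true,\_)$ to probes of the functional configuration $C$, and hence is always a member of $C$.
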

\begin{proof}
Suppose that the current configuration $C$ with an epoch $\eid$
is functional. Note that the normal path of our protocol is guarded by
preconditions $\epoch = e$, so that $C$ can broadcast and deliver application messages as
long as this holds at all its members (lines~\ref{vp_accept_pre},
\ref{vp_pre_quorum_of_accept_acks} and \ref{vp_commit_pre}). Assume now that a
process $p_r$ starts reconfiguring the system to a new configuration $C'$ with
epoch $\eid+1$. The process $p_r$ will send $\PROBE$ messages to the members of
$C$ and, since $C$ is functional, $p_r$ will only get replies
$\PROBEACK(\true, \eid+1)$. Handling a $\PROBE$ message only modifies the
$\newepoch$ variable, not $\epoch$. Therefore, $C$ can continue processing
broadcasts while $p_r$ is probing its members, storing $C'$ in the configuration
service, and sending $\NEWCONFIG(\eid+1,\_)$ to the leader $p_i$ of $C'$. When
the new leader $p_i$ handles $\NEWCONFIG(\eid+1,\_)$, it will set
$\epoch=\eid+1$, disabling the old configuration. However, the leader will
at once be ready to broadcast messages in the new configuration,
as required.
\end{proof}

\subparagraph{Correctness.} Our protocol achieves the above $0$-downtime
guarantee without violating correctness. Informally, this is because it always
chooses the leader of the new configuration from among the members of the latest
activated configuration, and a message can only be delivered in this
configuration after having been replicated to all its members. Hence, the new
leader will immediately know about all previously delivered messages, including
those delivered during preliminary reconfiguration steps. The following theorem
(proved in \tr{\ref{sec:correctness}}{\ncorr}) states the correctness of our
protocol.
\begin{theorem}\label{thm:correctness}
  The VAB protocol correctly implements reconfigurable atomic broadcast as
  defined in Figure~\ref{fig:rpob-properties}.
\end{theorem}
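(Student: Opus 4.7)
The plan is to verify each of the five properties in Figure~\ref{fig:rpob-properties} in turn. The Basic Configuration Change Properties follow from the structure of the CS interface and the handlers: a configuration is only introduced via a successful ${\tt compare\_and\_swap}$, whose atomicity gives 1a and the uniqueness of $\intro$ in 1d directly; $\confchng$ is only triggered by $\NEWCONFIG$ (sent to the new leader in the freshly stored membership) or $\NEWSTATE$ (sent to the other members of that same set), giving 1b; monotonicity of epochs (1c) holds because $\confchng$ only fires after a process advances its $\epoch$, and the handler preconditions $\newepoch \le \eid$ for $\NEWSTATE$ and $\newepoch = \eid$ for $\NEWCONFIG$ (lines~\ref{vp_new_state_pre} and~\ref{vp_new_config_pre}) preclude regressing. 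Integrity is essentially bookkeeping: uniqueness of broadcasts is assumed, a $\COMMIT$ for position $k$ only follows an assignment $\msg[k] \gets m$ triggered by a prior $\bcast$, and $\lastdelivered$ strictly increases.

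The crux of safety is Invariant~\ref{inv:prefix-of-higher-epoch}, which I would prove by strong induction on the epoch $\eid'$. Suppose some process enters $\eid'$ without the committed message $\vid$ at position $k$. This process obtains $\eid'$ either via $\NEWCONFIG$ (as the new leader $\pid[j]$) or via $\NEWSTATE$ (which copies $\pid[j]$'s $\msg$ verbatim), so it suffices to analyze $\pid[j]$'s state. The leader $\pid[j]$ was chosen because it responded $\PROBEACK(\true, \eid')$ to a probe for some epoch $e^{*}$, meaning $\pid[j]$'s $\epoch \ge e^{*}$ at that instant; the inductive hypothesis applied to every transition in $\pid[j]$'s history implies $\pid[j]$'s $\msg$ contains every message committed in any epoch $\le e^{*}$. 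It remains to rule out commits in the skipped range $e^{*}+1, \ldots, \eid'-1$. The successful CAS on $\eid'-1$ guarantees that every introduced epoch in this range already existed at the CS when $\pid[r]$ read ${\tt get\_last\_epoch}$, so each such $\bar e$ was probed by $\pid[r]$ and yielded $\PROBEACK(\false, \eid')$ from some member $q$. That response raised $q$'s $\newepoch$ to $\eid'$, so $q$ can no longer accept any $\NEWSTATE$ carrying epoch $\bar e < \eid'$ (line~\ref{vp_new_state_pre}); hence $q$ is never initialized to $\bar e$, the leader of $\bar e$ never collects all $\NEWSTATEACK$s at line~\ref{vp_quorum_of_new_state_acks}, and neither the initialization $\COMMIT$s nor any subsequent data $\COMMIT$ (which requires an $\ACCEPTACK$ from $q$) can be issued. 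This closes the induction.

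With the invariant in hand, Total Order and Agreement follow quickly. Within a single epoch, a $\COMMIT$ is issued only after the leader assigned $\vid$ to a unique slot $\pnext$ and received $\ACCEPTACK$s from every follower; combined with FIFO channels and the $k = \lastdelivered + 1$ precondition (line~\ref{vp_commit_pre}), every process in that epoch delivers slots in the same order. Across epochs, the invariant makes any two $\msg$ arrays agree on every committed position, and the initialization $\COMMIT$s at line~\ref{vp_send_commit_2} carry committed prefixes forward. For Liveness, I would use Assumption~\ref{asm:liveness-condition} to argue that each probed configuration has at least one responsive member, so the probing loop terminates; with no competing reconfigurations after the final $\reconfreq$, the CAS succeeds; the $\NEWCONFIG$/$\NEWSTATE$/$\NEWSTATEACK$ exchange completes because all members of $M$ are correct, activating $C$ and giving 5a. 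Property 5b follows from the normal-operation path under correctness of $M$, and 5c combines 5b with the fact that any message delivered previously was committed in its epoch and is therefore, by the invariant, present in the new leader's $\msg$ at activation, after which line~\ref{vp_send_commit_2} ensures it is delivered to all of $M$.

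The hard part will be the inductive step for Invariant~\ref{inv:prefix-of-higher-epoch}: ruling out hidden commits in epochs skipped during probing when reconfigurations interleave. It hinges on carefully tying $\PROBEACK(\false, \eid')$ to the impossibility of any future initialization at the probed epoch, and the CAS on $\eid'-1$ to the absence of any competing introduction during the probe -- both of which require reasoning about the global interleaving of concurrent $\reconf$ calls and their interactions with the in-memory $\newepoch$ variables at configuration members.
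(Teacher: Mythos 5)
Your proposal follows essentially the same route as the paper's own proof: the same five-property decomposition, a strong induction on $\eid'$ for \invref{inv:prefix-of-higher-epoch}, and a probing argument that is the contrapositive of the paper's auxiliary \invref{inv:epochprobe} (the paper shows probing can never descend below an activated epoch, where ``commit implies activation''; you show skipped epochs can never be activated and hence never see a commit -- the same insight, inlined rather than factored into a separate invariant). The liveness sketch likewise matches the paper's use of \asmref{asm:liveness-condition}. However, the inductive step has a genuine hole, and it is not where you flag ``the hard part'' (interleaved reconfigurations) but in the case you claim the induction hypothesis discharges.

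You assert that the induction hypothesis implies $\pid[j]$'s $\msg$ contains every message committed in any epoch $\le e^*$. But the invariant only constrains processes whose $\epoch$ is \emph{strictly greater} than the committing epoch, so the hypothesis says nothing about messages committed in the epoch $\eid_0$ that the new leader $\pid[j]$ itself currently occupies. Since $\pid[j]$ replied $\PROBEACK(\true,\eid')$ precisely because $\epoch \ge e^*$, the case $\eid_0 = e^*$ is not exotic -- it is the normal case, e.g., any reconfiguration from a functional configuration, where the messages committed in the current epoch are exactly the ones that must survive. Neither branch of your case split covers it: it is not in the skipped range (epoch $e^*$ was probed with a $\true$ reply), and the pair $(\eid_0,\eid_0)$ is not an instance of the induction hypothesis. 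Closing this case requires a separate, non-inductive argument, and it is the step where the protocol's central design decision enters the proof: a $\COMMIT(\eid_0,k)$ can be sent only after \emph{every} follower of $\eid_0$ has acknowledged storing position $k$ -- via $\ACCEPT$/$\ACCEPTACK$ (lines~\ref{vp_accept} and~\ref{vp_quorum_of_accept_acks}) or via $\NEWSTATE$/$\NEWSTATEACK$ (line~\ref{vp_quorum_of_new_state_acks}) -- so the member $\pid[j]$ stored the committed message, and the preconditions at lines~\ref{vp_accept_pre}, \ref{vp_new_state_pre} and~\ref{vp_new_config_pre}, together with monotonicity of $\epoch$ and $\newepoch$, force that handling to precede $\pid[j]$'s handling of $\NEWCONFIG(\eid',\_)$. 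This is exactly the case $\eid=\eid_0$ in the paper's proof of the invariant. A lesser thinness: your sketches of Integrity, Total Order and Agreement tacitly assume a message cannot occupy two different positions across epochs; justifying this needs the paper's \invref{inv:origin} (entries of a $\NEWSTATE$ array trace back to unique $\ACCEPT$s, hence unique broadcasts), as packaged in Lemmas~\ref{thm:commit-msg} and~\ref{thm:commit-pos}, not just uniqueness of the leader's slot counter within one epoch.
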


\section{Passive Replication}
\label{sec:po}

The protocol presented in the previous section can be used to build
reconfigurable fault-tolerant services via \emph{active} (aka state-machine)
replication~\cite{smr}. Here a service is defined by a deterministic state
machine and is executed on several replicas, each maintaining a copy of the
machine. All replicas execute all client commands, which they receive via atomic
broadcast. Together with the state machine's determinism, this ensures that each
command yields the same result at all replicas, thus maintaining an illusion of
a centralized fault-tolerant service.

In the rest of the paper, we focus on an alternative approach of building
reconfigurable fault-tolerant services via \emph{passive} (aka primary-backup)
replication~\cite{budhiraja1993primary}. Here commands are only executed by the
leader, which propagates the state changes induced by the commands to the other
replicas. This allows replicating services with non-deterministic operations,
e.g., those depending on timeouts or interrupts.

Formally, we consider services with a set of states $\State$ that accept a set
of commands $\Command$. A command $\cmd \in \Command$ can be executed using a
call $\execute(\cmd)$, which produces its return value. Command execution may be
non-deterministic. To deal with this, the effect of executing a command $\cmd$
on a state $\Sigma \in \State$ is defined by transition relation
$\Sigma \stackrel{\cmd}{\rightarrow} \langle r, \delta \rangle$, which produces
a possible return value $r$ of $\cmd$ and a \emph{state update} $\delta$
performed by the command. The latter can be applied to any state $\Sigma'$ using
a function $\app(\Sigma', \delta)$, which produces a new state. For example, a
command
$\textbf{if}\ x \,{=}\, 0\ \textbf{then}\ y \,{\leftarrow}\, 1\ \textbf{else}\ y
\,{\leftarrow}\, 0$ produces a state update $y \,{\leftarrow}\, 1$ when executed
in a state with $x \,{=}\, 0$. A command assigning $x$ to a random number may
produce an update $x \,{\leftarrow}\, 42$ if the random generator returned
$42$ when the leader executed the command.

We would like to implement a service over a set of fault-prone replicas that is
linearizable~\cite{linearizability} with respect to a service that atomically
executes commands on a single non-failing copy of the state machine. The latter
applies each state update to the machine state $\Sigma$ immediately
after generating it, as shown in Figure~\ref{fig:passive-spec}. Informally, this
means that commands appear to clients as if produced by a single copy of the
state machine in Figure~\ref{fig:passive-spec} in an order consistent with the
\emph{real-time order}, i.e., the order of non-overlapping command invocations.

\subsection{Passive Replication vs Atomic Broadcast}
\label{sec:po-bcast}

As observed in~\cite{ken-book, zookeeper, junqueira2013barriers}, implementing
passive replication requires propagating updates from the leader to the
followers using a stronger primitive than atomic broadcast. To illustrate why,
Figure~\ref{fig:app-layer} gives an incorrect attempt to simulate the
specification in Figure~\ref{fig:passive-spec} using our reconfigurable atomic
broadcast %
(ignore the code in blue for now). This attempt serves as a strawman for a
correct solution we present later.
Each process keeps track of the epoch it belongs to in $\curepoch$ and the
leader of this epoch in $\curleader$. To execute a command
(line~\ref{passive:client-execute}), a process sends the command, tagged by a
unique identifier, to the leader. It then waits until it hears back about the
result.

A process keeps two copies of the service state -- a \emph{committed} state
$\pstate$ and a \emph{speculative} state $\tstate$; the latter is only used when
the process is the leader.  When the leader receives a command $\cmd$
(line~\ref{passive:accept}), it executes $\cmd$ on its speculative state
$\tstate$, producing a return value $r$ and a state update $\delta$. The leader
immediately applies $\delta$ to $\tstate$ and
distributes the triple of the command identifier, its return value and the state
update via atomic broadcast. When a process (including the leader) delivers such
a triple (line~\ref{passive:deliv}), it applies the update to its committed
state $\pstate$ and sends the return value to the process the command originated
at, determined from the command identifier.  When a process receives a
$\confchng$ upcall (line~\ref{passive:change}), it stores the information
received in $\curepoch$ and $\curleader$. If the process is the leader of the
new epoch, it also initializes its speculative state $\tstate$ to the committed
state $\pstate$.

In passive replication, a state update is incremental with respect to the state
it was generated in. Thus, to simulate the specification in
Figure~\ref{fig:passive-spec},
it is crucial that the committed state $\Sigma$ at a process delivering a state
update (line~\ref{passive:apply}) be the same as the speculative state $\Theta$
from which this state update was originally derived
(line~\ref{passive:exec}). This is captured by the following invariant. Let
$\Sigma_i(k)$ denote the value of $\Sigma$ at process $p_i$ before the $k$-th
action in the history (and similarly for $\Theta$).
\begin{invariant}
\label{inv:main-po}
Let $h$ be a history of the algorithm in Figure~\ref{fig:app-layer}.
If $h_k=\deliv_i(m)$, then
there exist $j$ and $l<k$ such that $h_l = \bcast_j(m)$ and
$\Sigma_i(k) = \Theta_j(l)$.
\end{invariant}

Unfortunately, if we use atomic broadcast to disseminate state updates in
Figure~\ref{fig:app-layer}, we may violate Invariant~\ref{inv:main-po}. We next
present two examples showing how this happens and how this leads to violating
linearizability. The examples consider a replicated counter 
$x$ with two commands -- an increment ($x\leftarrow x + 1$) and a read
($\return\ x$). Initially $x=0$, and then two clients execute two increments.

\subparagraph{Example 1.}
The two increments are executed by the same leader. The first one generates an
update $\delta_1=(x \leftarrow 1)$ and a speculative state $\Theta = 1$. Then
the second generates %
$\delta_2=(x \leftarrow 2)$. Atomic broadcast allows processes to deliver the
updates in the reverse order, with $\delta_1$ applied to a committed state
$\Sigma = 2$. This violates Invariant~\ref{inv:main-po}. Assume now that after
the increments complete we change the configuration to move the leader to a
different process.  This process will initialize its speculative state $\Theta$
to the %
committed state $\Sigma=1$. If the new leader now receives a read command, it
will return $1$, violating the linearizability with respect to
Figure~\ref{fig:passive-spec}.

\subparagraph{Example 2.} 
The first increment is executed by the leader of an epoch $e$, which generates
$\delta_1=(x \leftarrow 1)$. The second increment is executed by
the leader of an epoch $e' > e$ before it delivers $\delta_1$ and, thus, in a
speculative state $\Theta = 0$. This generates %
$\delta_2=(x \leftarrow 1)$. Finally, the leader of $e'$ delivers $\delta_1$ and
then $\delta_2$, with the latter applied to a committed state $\Sigma = 1$.
This %
is allowed by atomic broadcast yet
violates
Invariant~\ref{inv:main-po}. It also violates linearizability similarly to
Example 1: if now the leader of $e'$ receives a read, %
it will incorrectly return $1$.

\begin{figure}[t]
\small
\begin{algorithm}[H]
	\setcounter{AlgoLine}{0}
        \DontPrintSemicolon
	$\assign{\pstate}{\pstate_0\in\State}$\\ 

	\smallskip
       
        \Function{${\tt execute}(\cmd)$}{
		$\pstate \stackrel{\cmd}{\rightarrow} \langle r, \delta \rangle$\;
		$\assign{\pstate}{\app(\pstate,\effid)}$\;
		\return $\rid$\;
	}
      \end{algorithm}
\caption{Passive replication specification.}
\label{fig:passive-spec}
\end{figure}

\begin{figure}[t]
	\begin{tabular}{@{}l@{\ \ }|@{\quad\ \ \ }l@{}}
        \scalebox{0.96}{%
	\begin{minipage}[t]{6.5cm}
	\vspace*{-12pt}
	\begin{algorithm}[H]
          \setcounter{AlgoLine}{0}
          \DontPrintSemicolon
	$\curepoch \in \epochset$\;
	$\curleader \in \procset$\;
	$\assign{\pstate}{\pstate_0\in \State}$ ~// committed state\\ 
	$\assign{\tstate}{\tstate_0\in \State}$ ~// speculative state\\  
	\smallskip

	\Function{$\execute(\cmd)$\label{passive:client-execute}}{
    $\assign{\id}{{\tt get\_unique\_id}()}$\;
    \send $\EXECUTE(\id, \cmd)$ \To $\curleader$\;
    \textbf{wait until receive} $\RESULT(\id, r)$\;
    \textbf{return} $r$\;
	}

	\smallskip

	\SubAlgo{\onreceive $\EXECUTE(\id, \cmd)$\label{passive:accept}}{
		\precond $\curleader = \pid$\; \label{line:curleader-pre}
		$\tstate \stackrel{\cmd}{\rightarrow} \langle r, \delta \rangle$\; \label{passive:exec}
		$\assign{\tstate}{\app(\tstate,\delta)}$\; \label{pr:code:state-assign}
		$\bcast(\langle\id, r, \delta \rangle)$\;\label{rep_bcast}
	}
      \end{algorithm}
	\vspace*{-5pt}
\end{minipage}}
&
        \scalebox{0.96}{%
	\begin{minipage}[t]{7.5cm}
	\vspace*{-12pt}
	\begin{algorithm}[H]
          \DontPrintSemicolon

         \SubAlgo{\textbf{upon} $\deliv(\langle\id, r, \delta \rangle)$\label{passive:deliv}}{
		$\assign{\pstate}{\app(\pstate, \delta)}$\; \label{passive:apply}
    \send $\RESULT(\id, r)$ \To $\client(\id)$\;
	}

	\smallskip

          	\SubAlgo{\textbf{upon} $\confchng(\langle e, M,
                  p_j\rangle)$\label{passive:change}\\
                {\color{blue}{\phantom{\hspace{14.5pt}}// $\confchng(\langle e,
                    M, p_j\rangle, \sigma)$}} \label{passive:change2}}{
		$\assign{\curepoch}{e}$\;
		$\assign{\curleader}{p_j}$\; \label{line:curleader-set}
		\If{$p_i = p_j$}{
			$\assign{\tstate}{\pstate}$\; \label{pr:code:tstate-assign}
      {\color{blue}{// $\assign{\langle \_, \_, \delta_1 \rangle \ldots \langle \_, \_, \delta_k \rangle}{\sigma}$}}\label{pr:code:apply1}\;
			 {\color{blue}{// \textbf{forall} $l = 1..k$ \textbf{do}}
                         $\assign{\tstate}{\app(\tstate, \delta_l)}$}}} \label{pr:code:state-assign2}
\end{algorithm}
\vspace*{-5pt}
\end{minipage}}
\end{tabular}
\caption{Passive replication on top of broadcast: code at process $p_i$.}
\label{fig:app-layer}
\end{figure}

\subsection{Primary-Order Atomic Broadcast}
\label{sec:poabcast}

To address the above problem, Junqueira et al. proposed \emph{primary-order
  atomic broadcast (POabcast)}~\cite{zab,junqueira2013barriers}, which
strengthens the classical atomic broadcast. %
We now briefly review POabcast and highlight its drawbacks, which motivates an
alternative proposal we present in the next section.  In our framework we can
define POabcast by adding the properties over histories $h$ in
\figref{fig:pocast-spec} to those of \figref{fig:rpob-properties}. This yields a
reconfigurable variant of %
POabcast that we call \emph{reconfigurable primary-order atomic broadcast
  (RPOabcast)}. RPOabcast also modifies the interface of reconfigurable atomic
broadcast (\S\ref{sec:spec}) by only allowing a process to call $\bcast$ if it
is the leader of its current configuration.

Property~\ref{prop:localord} (Local Order) restricts the delivery order of
messages broadcast in the same epoch: they must be delivered in the
order the leader broadcast them. Property~\ref{prop:globalord} (Global
Order) restricts the delivery order of messages broadcast in different epochs:
they must be delivered in the order of the epochs they were broadcast
in. Finally, Property~\ref{prop:primaryint} (Primary Integrity) ensures that the
leader of an epoch $e'$ does not miss relevant messages from previous epochs:
each message broadcast in an epoch $e < e'$ either has to be delivered by the
leader before entering $e'$, or can never be delivered at all.
Local and Global Order trivially imply Property~\ref{prop:total} (Total Order),
so we could omit it from the specification. POabcast is stronger than plain
atomic broadcast: the latter can be implemented from the former
if each process forwards messages to be broadcast to the leader of its
configuration.
\begin{proposition}
  Reconfigurable atomic broadcast can be implemented from RPOabcast.
\end{proposition}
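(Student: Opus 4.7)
I would implement reconfigurable atomic broadcast on top of RPOabcast via a thin \emph{forwarding layer}. Each process tracks the leader $\ell$ of its current configuration (learned from each $\confchng$ upcall) and a queue $Q$ of messages on which it has invoked $\bcast$ but not yet seen delivered. A user-level $\bcast(m)$ appends $m$ to $Q$ and sends $m$ to $\ell$ over a reliable FIFO link; on every $\confchng$ the process re-sends the un-delivered suffix of $Q$ to the new leader. Upon receiving a forwarded $m$, the current leader invokes the underlying RPOabcast $\bcast$ on the tagged value $\langle m, \epoch\rangle$; the tag lets distinct leaders broadcast the ``same'' user message without violating RPOabcast's uniqueness precondition~(\ref{prop:env-1}). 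When RPOabcast delivers some $\langle m, \_\rangle$, the wrapper deduplicates by $m$, removes $m$ from $Q$, and delivers $m$ to the user. All $\confchng$, $\reconf$, and $\intro$ notifications are passed through unchanged.

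\textbf{Safety and liveness.} The Basic Configuration Change properties and Agreement follow directly because the wrapper neither inserts nor removes $\confchng$, $\intro$, or reconfiguration actions. Integrity holds because every wrapper-level $\bcast(m)$ causally precedes the first RPOabcast-level submission of $\langle m, \_\rangle$, and wrapper-side deduplication bounds $\deliv_i(m)$ by one per process. Total Order follows from lifting RPOabcast's Local and Global Order to tagged messages and projecting onto the first delivery of each $m$. For liveness, consider an execution with finitely many reconfigurations and let $C = \langle e, M, p_l\rangle$ be the final introduced configuration with every member correct. Property~\ref{prop:liveness}a is inherited verbatim. For Property~\ref{prop:liveness}b, if $p_i \in M$ invokes $\bcast(m)$ while in $e$, then by definition of $\epochOf$ it has already received $\confchng(C)$, so it forwards $m$ over a reliable FIFO channel to the correct leader $p_l$; $p_l$ then RPOabcast-broadcasts $\langle m, e\rangle$, which all of $M$ deliver by the analogous RPOabcast liveness guarantee. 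Property~\ref{prop:liveness}c is inherited directly.

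\textbf{Main obstacle.} The delicate point is reconciling cross-epoch forwarding with RPOabcast's uniqueness precondition: without the epoch tag, two successive leaders could each submit the same user-level $m$ to RPOabcast, violating the hypothesis under which RPOabcast's guarantees hold. Epoch-tagging makes every RPOabcast-level message unique, and wrapper-side deduplication recovers Integrity at the user interface. The principal verification step is to check that this deduplication, combined with RPOabcast's Local and Global Order on tagged messages, still yields a coherent total order on user messages across all processes; this is where I expect the main care to be needed.
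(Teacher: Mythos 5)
Your proposal is correct and follows essentially the same route as the paper, which implements reconfigurable atomic broadcast from RPOabcast simply by having each process forward messages to be broadcast to the leader of its current configuration (the paper states this in a single sentence and gives no further proof). Your extra epoch-tagging, re-forwarding and deduplication machinery is sound --- the coherence of per-message first deliveries across processes that you flag as the delicate step does hold, by applying RPOabcast's Agreement and Total Order to the tagged messages --- but it is not strictly required by the specification, since Property~\ref{prop:liveness}b only guarantees delivery of messages broadcast while in the final epoch, for which a single forwarding to that epoch's leader suffices under the uniqueness assumption.
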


\begin{figure}[p]
{\small
\begin{enumerate} 
 \setcounter{enumi}{5}
\item \label{prop:localord}
\textbf{Local Order.}
If the leader of some epoch $\eid$ receives ${\bcast}(\vid[_1])$  
before receiving $\bcast(\vid[_2])$, then any process that delivers $\vid[_2]$
must also deliver $\vid[_1]$ before $\vid[_2]$:
\[
  \begin{array}{@{}l@{}}
                            \forall \vid[_1], \vid[_2], i, j, k, l, l'. \
                            h_k = \bcast_i(\vid[_1]) \ \wedge\ h_l = \bcast_i(\vid[_2]) \ \wedge \ 
                            k < l \ \wedge\  {}
                            \\[1pt] 
\epochOf(k) = \epochOf(l)\  \wedge\ 
    h_{l'}=\deliv_j(\vid[_2]) \implies \exists k'.\ h_{k'} = \deliv_j(\vid_1)\ \wedge\ k' < l'
                          \end{array}
\]
\item \label{prop:globalord}
\textbf{Global Order.}
Assume the leaders of $\eid$ and $\eid'>\eid$ receive ${\bcast}(\vid[_1])$
and ${\bcast}(\vid[_2])$ respectively.  
If a process $\pid$ delivers $\vid[_1]$ and $\vid[_2]$, then it must deliver 
$\vid[_1]$ before $\vid[_2]$:
\[
\begin{array}{@{}l@{}}
\forall \vid[_1], \vid[_2], i, k, k', l, l'.\
  h_k = \bcast(\vid[_1]) \ \wedge\ h_l = \bcast(\vid[_2]) \ \wedge \ {}
\\[1pt]
  \epochOf(k) < \epochOf(l)\ \wedge\  h_{k'} = \deliv_i(\vid_1)\ \wedge\
h_{l'}=\deliv_i(\vid[_2])  \implies  k' < l'
\end{array}
\]
\item \label{prop:primaryint}
  \textbf{Primary Integrity.} Assume some
  process delivers an application message $\vid$ originally broadcast in an
  epoch $\eid$.  If any process $\pid$ joins an epoch $\eid'> \eid$, then
$\pid$ must deliver $\vid$ before joining $\eid'$:
\[
\begin{array}{l}
\forall \vid, i,  k, l, l', e, e'.\,
h_k = \bcast(\vid) \ \wedge\
  \epochOf(k)=\eid \ \wedge\
h_l = \deliv(\vid)\ \wedge\  {} \\[1pt]
  h_{l'} = \confchng_i(\langle \eid', \_, \_ \rangle)\  \wedge\
  \eid < \eid'   \implies
\exists k'.\ h_{k'} = \deliv_i(\vid)\ \wedge\ k' < l'
\end{array}
\]
\end{enumerate}
}
\caption{Properties of reconfigurable primary-order atomic broadcast over a history $h$.}
\label{fig:pocast-spec}
\end{figure}

\begin{figure}[p]
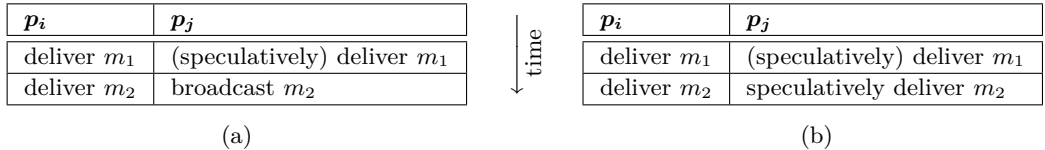

  {\small
\begin{enumerate} 
 \setcounter{enumi}{8}
\item \textbf{Basic Speculative Delivery Properties.}
\label{prop:sdeliva}
A process $p_i$ can speculatively deliver a given application message $\vid$ 
at most once in a given epoch and only if $p_i$ is the leader of the epoch,
$\vid$ has previously been broadcast, and $\vid$ has not yet been delivered by $p_i$:
\[
\begin{array}{@{}l@{}}
\forall i, j, k, \sigma.\, h_k \,{=}\, \confchng_i(\langle \_, \_, p_j \rangle, \sigma) \implies
(\sigma \,{\not=}\, \bot {\implies} p_i \,{=}\, p_j) \, \wedge\,
(\forall m_1, m_2 \in \sigma.\, m_1 \,{\not=}\, m_2) 
\\[1pt]
{} \wedge\ 
  (\forall m \in \sigma.\ (\exists l.\ h_l = \bcast(\vid)\ \wedge\ l < k)
\ \wedge\ 
(\neg\exists l.\ h_l = \deliv_i(\vid) \ \wedge\  l < k))
\end{array}
\]
\item \textbf{\Consistency{}.}
\label{prop:consistency}
\begin{enumerate}
\item \label{prop:sdelivc} Consider
  $\vid[_1]$ and $\vid[_2]$ broadcast in different epochs. Assume that a process $\pid$
  delivers $\vid[_2]$, and a process $\pid[j]$ broadcasts $\vid[_2]$ in an epoch
  $e'$. Then $\pid$ delivers $\vid[_1]$ before $\vid[_2]$ iff $\pid[j]$ delivers
  $\vid[_1]$ before joining $e'$ or speculatively delivers $\vid[_1]$ when
  joining $e'$:
\[
\begin{array}{l}
\forall \vid[_1], \vid[_2], i, j, k_0, l_0, k, l, l', \sigma, e'.\
  h_{k_0} = \bcast(\vid[_1])\ \wedge\ h_{l_0} = \bcast(\vid[_2])\ \wedge\ {}
\\[1pt]
  \epochOf(k_0) \not= \epochOf(l_0)\ \wedge\
  h_k = \deliv_i(\vid[_2]) \ \wedge\   h_l = \bcast_j(\vid[_2])\ \wedge\ {}
  \\[1pt]
  h_{l'}=\confchng_j(\langle e', \_, p_j \rangle, \sigma)\ \wedge\ \epochOf(l) = e'
 \implies 
  \\[1pt]
  ((\exists k'.\, h_{k'} = \deliv_i(\vid[_1]) \, \wedge\,  k' < k)
  \iff
  ((\exists l''.\, h_{l''} = \deliv_j(\vid[_1]) \, \wedge\,  l'' < l')\, \vee\,
    \vid_1 \in \sigma) \!\!\!\!
\end{array}
\]
\item \label{prop:sdelivb} Consider
  $\vid[_1]$ and $\vid[_2]$ broadcast in different epochs. Assume that a process $\pid$
  delivers $\vid[_2]$, and a process $\pid[j]$ speculatively delivers $\vid[_2]$
  when joining an epoch $e'$. Then $\pid$ delivers $\vid[_1]$ before $\vid[_2]$
  iff $\pid[j]$ delivers $\vid[_1]$ before joining $e'$ or speculatively delivers
  $\vid[_1]$ before $\vid[_2]$ when joining $e'$:
\[
\begin{array}{l}
\forall \vid[_1], \vid[_2], i, j, k_0, l_0, k, l, \sigma, e'.\
  h_{k_0} = \bcast(\vid[_1])\ \wedge\ h_{l_0} = \bcast(\vid[_2])\ \wedge\ {}
  \\[1pt]
  \epochOf(k_0) \not= \epochOf(l_0)\ \wedge\
  h_k = \deliv_i(\vid[_2]) \ \wedge\ h_{l}=\confchng_j(\langle e' , \_, p_j \rangle, \sigma)
\  \wedge {} \!\!\!\!\!
  \\[1pt]
  \vid_2 \in \sigma \implies
 ((\exists k'.\ h_{k'} = \deliv_i(\vid[_1]) \ \wedge\  k' < k)
  \iff
  \\[1pt]
  ((\exists l'.\ h_{l'} = \deliv_j(\vid[_1]) \ \wedge \  l' < l)\ \vee\
\sigma = \_\vid_1\_\vid_2\_)) 
\end{array}
\]
\end{enumerate}
\end{enumerate}

\smallskip

\centerline{
\begin{tabular}{@{}c@{\quad\ \ }c@{\quad\ \ }c@{}}
  \begin{tabular}{@{}|@{\ \ }l@{\ \ }|@{\ \ }l@{\ \ }|@{}}
   \hline
    \boldmath $p_i$ & \boldmath $p_j$
    \\
   \hline
    \hline
deliver $m_1$ & (speculatively) deliver $m_1$
    \\
    \hline
deliver $m_2$ & broadcast $m_2$
    \\
    \hline
  \end{tabular}%
&
    $\Bigg\downarrow
  \rotatebox[origin=c]{90}{time}$
&
  \begin{tabular}{@{}|@{\ \ }l@{\ \ }|@{\ \ }l@{\ \ }|@{}}
   \hline
    \boldmath $p_i$ & \boldmath $p_j$
    \\
   \hline
    \hline
deliver $m_1$ & (speculatively) deliver $m_1$
    \\
    \hline
deliver $m_2$ & speculatively deliver $m_2$
    \\
    \hline
  \end{tabular}
  \smallskip
  \smallskip
  \\
  (a) && (b)
\end{tabular}}
}
\caption{Properties of speculative primary-order atomic broadcast over a history
  $h$. Property~\ref{prop:consistency} replaces Property~\ref{prop:primaryint}
  from Figure~\ref{fig:pocast-spec}. The tables
  summarize its action orderings: the actions at the top happen before the
  actions at the bottom.
}
\label{fig:spo-spec}
\end{figure}

When the passive replication protocol in Figure~\ref{fig:app-layer} is used with
POabcast instead of plain atomic broadcast, Invariant~\ref{inv:main-po} holds,
and the protocol yields a service linearizable with respect to the specification
in Figure~\ref{fig:passive-spec}~\cite{junqueira2013barriers}. In particular,
Local Order disallows Example~1 from \S\ref{sec:po-bcast}, and Primary Integrity
disallows Example~2 (which does not violate either Local or Global Order).
POabcast can be obtained from our Vertical Atomic Broadcast (VAB) 
algorithm in \S\ref{sec:vp_protocol} as follows.
First, VAB already guarantees both Local and Global Order: e.g., this is the case for Local Order because
processes are connected by reliable FIFO channels.
\begin{theorem}
  VAB guarantees Local and Global order.
\end{theorem}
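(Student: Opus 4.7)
The plan is to establish both properties through a common position-based argument. The starting point is that delivery at any process follows the order of positions in its $\msg$ array, since the $\COMMIT$ handler requires $k = \lastdelivered + 1$ and $\lastdelivered$ increases by one per delivery. Hence for any process $\pid$ that delivers both $\vid[_1]$ and $\vid[_2]$, it suffices to compare their positions in $\pid$'s $\msg$ at the moments of delivery.

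For Local Order, suppose $\pid$ calls $\bcast(\vid[_1])$ before $\bcast(\vid[_2])$ in the same epoch $\eid$. By FIFO channels, the leader of $\eid$ processes the two $\FORWARD$ messages in order, assigning them consecutive positions $k_1 < k_2$ via its $\pnext$ counter. FIFO on the outgoing $\ACCEPT$s then ensures every follower stores $\vid[_1]$ at $k_1$ before $\vid[_2]$ at $k_2$, and similarly the $\COMMIT$ for $k_1$ is sent before the one for $k_2$. Combined with position-order delivery, this yields Local Order.

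For Global Order, I would maintain the invariant that in any process's $\msg$, if $\vid[_1]$ was originally broadcast in epoch $\eid[_1]$ and $\vid[_2]$ in epoch $\eid[_2] > \eid[_1]$, then the position of $\vid[_1]$ is strictly smaller than that of $\vid[_2]$. The crucial mechanism is that upon handling $\NEWCONFIG$ for an epoch $\eid'$, the new leader sets $\pnext = \max\{k : \msg[k] \neq \bot\} + 1$, so every message broadcast later in $\eid'$ gets a position strictly greater than all positions inherited from earlier epochs. The $\NEWSTATE$ handler then transfers the leader's $\msg$ verbatim to the followers, preserving positions and propagating this cross-epoch monotonicity.

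The main obstacle is bridging ``$\vid[_1]$ was originally broadcast in epoch $\eid$'' with ``$\vid[_1]$ occupies an inherited position in the leader of some later epoch $\eid'$''. If $\vid[_1]$ was committed in $\eid$, then \invref{inv:prefix-of-higher-epoch} yields this directly. In the more delicate case where $\vid[_1]$ propagated only via some $\ACCEPT$s before reconfiguration, one traces the probing chain: the new leader is selected from a process initialized at some epoch $\ge$ the last epoch probed, so it inherits the $\msg$ entries (with their positions) from that chain. Since delivery of $\vid[_1]$ can only occur through a $\COMMIT$ whose issuing leader has $\msg[k_1] = \vid[_1]$, and since application messages are unique, the position $k_1$ must coincide with the one originally assigned by the leader of $\eid$. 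Whenever $\vid[_1]$ appears in the leader of $\eid'$, its position is therefore bounded by the $\initlength$ computed at the start of $\eid'$, and hence strictly less than the position of any message broadcast in $\eid'$.
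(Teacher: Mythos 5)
Your backbone --- delivery in position order (from the precondition $k = \lastdelivered + 1$ at line~\ref{vp_commit_pre}), uniqueness of each message's position, and cross-epoch monotonicity of positions via $\pnext = \max\{k \mid \msg[k]\neq\bot\}+1$ --- is the same mechanism on which the paper's appendix proof rests, and your Global Order sketch is essentially the paper's argument run forwards rather than by contradiction. The genuine gap is in Local Order. Your argument (FIFO from the broadcaster to the leader, FIFO on the outgoing $\ACCEPT$s, position-ordered delivery) covers only a process that delivers $m_2$ while a follower of the broadcast epoch $e$, having received both $\ACCEPT$s. It says nothing about a process that delivers $m_2$ in a later epoch $e''>e$, where $m_2$ arrived inside a $\NEWSTATE$ message and was committed through the $\NEWSTATEACK$-triggered $\COMMIT$s of line~\ref{vp_send_commit_2} --- for instance a fresh process that joins $e''$ and delivers the whole inherited log, or the case where $m_2$ is committed for the first time only in $e''$. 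This cross-epoch case is the bulk of the paper's proof of Local Order, and it needs an invariant your plan never supplies: whenever any process in an epoch $\geq e$ holds $m_2$ at position $k_2$, its \emph{entire prefix} up to $k_2$ coincides with the $\msg$ array the leader of $e$ had when it sent $\ACCEPT(e,k_2,m_2)$ (\invref{inv:new_state_prefix} and \invref{inv:prefix-of-leader} in the appendix). Only this pins the content of position $k_1$ at the delivering process to $m_1$.

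Note also that your uniqueness step points in the wrong direction to fill this hole. Uniqueness of broadcasts gives ``$m_1$ occupies position $k_1$ in every array that contains it''; what Local Order needs is the converse, ``position $k_1$ holds $m_1$ in every array relevant to the delivery of $m_2$''. The converse is false unconditionally: if $m_1$ is never committed and the new leader of $e+1$ happens to be a follower that missed $\ACCEPT(e,k_1,m_1)$, the protocol recycles position $k_1$ for a brand-new message broadcast in $e+1$. So from ``the deliverer delivers something at position $k_1$ before $m_2$'' together with ``$m_1$'s unique position is $k_1$'' you cannot conclude that what it delivered was $m_1$; it is exactly the conditioning on $m_2$'s delivery, through the prefix-preservation invariant above (or the paper's case analysis via \invref{inv:uniqueness_accept}, \invref{inv:origin} and \invref{inv:prefix-of-higher-epoch}), that closes the argument.
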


Second, to ensure Primary Integrity, neither the new leader nor the
followers invoke $\confchng$ upon receiving $\NEWCONFIG$
(line~\ref{vp_ready_confchg}) or $\NEWSTATE$
(line~\ref{vp_new_state_confchg}). Instead, the leader first waits until it
receives $\NEWSTATEACK$ messages from all followers (line
\ref{vp_quorum_of_new_state_acks}) and tells the processes to deliver all
application messages from the previous epoch via $\COMMIT$ messages. Only once a
process delivers all these application messages does it invoke $\confchng$ for
the new configuration (and if the process is the leader, starts broadcasting).

Deferring the invocation of $\confchng$ at the leader is the key to guarantee
Primary Integrity. On the one hand, it ensures that, before the newly elected
leader of an epoch $e'$ generates $\confchng$, it has delivered all application
messages that could have been delivered in previous epochs:
Invariant~\ref{inv:prefix-of-higher-epoch} from \S\ref{sec:vp_protocol}
guarantees that the leader's initial log includes all such messages. On the
other hand, the leader can also be sure that any message broadcast in an epoch
$<e'$ but not yet delivered can \emph{never} be delivered by any process. This is
because, by the time the leader generates $\confchng$, all followers in $e'$
have overwritten their log with that of the new leader. 

Since deferring $\confchng$ results in deferring the start of broadcasting by 
the leader, the modified VAB protocol has a reconfiguration downtime
of 2 messages delays. This cost is inherent: the lower bound of
Friedman and van Renesse~\cite{friedman-lower} on the latency of 
Strong Virtually Synchronous broadcast (a variant of POabcast)
implies that any solution must have a non-zero downtime.
In the next section we circumvent this limitation by introducing a weaker
variant of POabcast, which we show sufficient for passive replication.

\section{Speculative Primary-Order Atomic Broadcast}
\label{sec:srpob}

We now introduce \emph{speculative primary-order atomic broadcast} (SPOabcast),
a weaker variant of POabcast that allows implementing passive replication with
minimal downtime. During reconfiguration, SPOabcast allows the new leader to
deliver messages from previous epochs \emph{speculatively} -- without waiting for
them to become durable -- and start broadcast right away.

\subparagraph{SPOabcast specification.}
SPOabcast modifies the interface of reconfigurable atomic broadcast
(\S\ref{sec:spec}) in two ways. First, like in POabcast, a process can call
$\bcast$ only if it is the leader.  Second, the $\confchng$ upcall for a
configuration $C$ carries an additional argument $\sigma$:
$\confchng(C, \sigma)$. When the upcall is invoked at the leader of $C$,
$\sigma$ is a sequence of messages \emph{speculatively delivered} to the leader
($\sigma$ is not used at followers).
SPOabcast is defined by replacing Primary Integrity in the definition of
POabcast by the properties in \figref{fig:spo-spec}. Property~\ref{prop:sdeliva} is
self-explanatory. Property~\ref{prop:consistency} (\Consistency{}) constrains
how speculative deliveries are ordered with respect to ordinary deliveries and
broadcasts. For the ease of understanding, in Figure~\ref{fig:spo-spec} we
summarize these orderings in tables.

Part (a)/``only if'' of \Consistency{} is a weaker form of Primary
Integrity. Assume that the leader $p_j$ of an epoch $e'$ broadcasts a message
$m_2$. The property ensures that for any message $m_1$ delivered before $m_2$ at
some process $p_i$, the leader $p_j$ has to either deliver $m_1$ before joining
$e'$ or \emph{speculatively deliver $m_1$ when joining $e'$}. As we demonstrate
shortly, the latter option, absent in Primary Integrity, allows our
implementation of SPOabcast to avoid extra downtime during reconfiguration. Part
(a)/``if'' conversely ensures that, if the leader $p_j$ speculatively delivers
$m_1$ before broadcasting $m_2$, then $m_1$ must always be delivered before
$m_2$. This ensures that the speculation performed by the leader $p_j$ is
correct if any of the messages it broadcasts (e.g., $m_2$) are ever delivered
at any process.
Part (b) of \Consistency{} ensures that the order of messages in a sequence
speculatively delivered at a $\confchng$ upcall cannot contradict the order of
ordinary delivery.

Speculative delivery provides weaker guarantees than ordinary delivery, since it
does not imply durability. In particular, we allow a message to be speculatively
delivered at a process $p$ but never delivered anywhere, e.g., because $p$
crashed. However, in this case Part (a)/``if'' of \Consistency{} ensures that
all messages $p$ broadcast after such a non-durable speculative delivery will
also be lost. As we show next, this allows us to use SPOabcast to correctly
implement passive replication without undermining its durability guarantees.

\subparagraph{Passive replication using SPOabcast.}  The passive replication
protocol in Figure~\ref{fig:app-layer} requires minimal changes to be used with
SPOabcast, highlighted in blue.
When the leader of an epoch $e$ receives a $\confchng$ upcall for $e$
(line~\ref{passive:change2}), in addition to setting the speculative state
$\tstate$ to the committed state $\pstate$, the leader also applies the state
updates speculatively delivered via $\confchng$ to $\tstate$
(lines~\ref{pr:code:apply1}-\ref{pr:code:state-assign2}). The leader can then
immediately use the resulting speculative state to execute new commands
(line~\ref{passive:accept}). We prove the following in \tr{\ref{sec:pr:correct}}{\nprproof}.
\begin{theorem}\label{thm:correctness-passive}
  The version of the protocol in Figure~\ref{fig:app-layer} that uses SPOabcast
  satisfies Invariant~\ref{inv:main-po} and implements a service linearizable
  with respect to the specification in Figure~\ref{fig:passive-spec}.
\end{theorem}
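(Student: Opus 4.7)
My plan is to first establish Invariant~\ref{inv:main-po} by induction over prefixes of the history $h$, and then derive linearizability from it together with SPOabcast's Total Order guarantee. Fix an action $h_k = \deliv_i(\vid)$ with $\vid = \langle \id, r, \delta\rangle$, and let $h_l = \bcast_j(\vid)$ be the (unique, by Integrity) corresponding broadcast. Unfolding the code of Figure~\ref{fig:app-layer}, $\Sigma_i(k)$ is obtained by applying, in delivery order, the updates carried by every message $p_i$ delivered before index $k$, while $\Theta_j(l)$ is obtained by (i) starting from $p_j$'s committed state at the moment of its most recent $\confchng_j$ preceding $l$, (ii) applying the updates in the speculatively delivered sequence $\sigma$ attached to that $\confchng_j$, and (iii) applying the updates of all messages $p_j$ itself broadcast between that $\confchng_j$ and index $l$. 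I would construct an order-preserving bijection between these two update sequences; since $\app$ is deterministic and $\Sigma_0 = \Theta_0$, the two sides then evaluate to the same state.

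The bijection follows directly from SPOabcast. Local Order handles messages broadcast in the same epoch as $\vid$, and Global Order handles messages broadcast in earlier epochs but delivered at $p_j$ before its last $\confchng_j$. The ``only if'' direction of \Consistency{} (Part (a)) guarantees that every message $\vid[_1]$ that $p_i$ delivers before $\vid$ appears either as an ordinary delivery at $p_j$ before that $\confchng_j$ or as an entry of $\sigma$; the ``if'' direction rules out spurious entries at $p_j$; Part (b) of \Consistency{} pins down the relative order of speculatively delivered updates with respect to later deliveries at $p_i$. Combining these, the two update sequences agree as ordered lists, so $\Sigma_i(k) = \Theta_j(l)$, as required.

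With the invariant in place, linearizability is a routine extension of the standard state-machine replication argument. I would order all committed commands by the unique position of their $\deliv$ actions in $h$, which is well-defined thanks to Total Order. For each delivered triple $\langle \id, r, \delta\rangle$, Invariant~\ref{inv:main-po} ensures that $(r, \delta)$ is a legal transition of the state machine at the state preceding its delivery, so the chosen serialization is a valid execution of the specification in Figure~\ref{fig:passive-spec}. Real-time order is preserved because a client receives $\RESULT(\id, r)$ only after the corresponding $\deliv$, and only then may it issue its next $\execute$.

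The main obstacle will be the reconfiguration case, where I must argue that the speculatively delivered tail $\sigma$ slots into exactly the right gap between the new leader's committed state and its subsequent broadcasts. This relies crucially on the symmetric ``iff'' formulation of \Consistency{}: the ``if'' direction forbids the leader from speculating on messages that the rest of the system would not agree on in that order, which would otherwise break the invariant as soon as the new leader's broadcasts are delivered elsewhere. Speculatively delivered messages that are never ordinarily delivered (because the leader crashes before any of its broadcasts are committed) raise no obstruction, since the invariant is quantified only over $\deliv$ actions that actually occur.
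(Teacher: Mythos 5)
Your proposal is correct in substance but takes a genuinely different route from the paper's proof. The paper proves one large simultaneous induction on the length of the history (Lemma~\ref{thm:pr}): it constructs an explicit global command serialization $\sigma_h$ together with per-process prefix-counting functions $f^j_h$ (parts \ref{thm:pr:main}--\ref{thm:pr:mnum}), proves Invariant~\ref{inv:main-po} itself (part \ref{thm:pr:context}), and carries two auxiliary invariants relating the new leader's speculative state at a $\confchng$ to the original broadcaster's state (parts \ref{thm:pr:context21}--\ref{thm:pr:context22}), supported by standalone ordering lemmas (Lemmas~\ref{lem:aux-deliver}--\ref{lem:aux-deliver3}); linearizability then falls out because the serialization is already in hand. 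You instead observe that both $\Sigma_i(k)$ and $\Theta_j(l)$ are folds of history-determined update sequences, so the invariant reduces to equality of two sequences, established by an order-preserving bijection; linearizability is then re-derived afterwards from the invariant plus the global delivery order. The same SPOabcast axioms play the same roles in both proofs --- \Consistency{} (a)/``only if'' for coverage of $p_i$'s deliveries, (a)/``if'' against spurious entries on the leader's side, (b) for ordering speculative deliveries, and Local/Global Order for the rest --- but your decomposition avoids the five-way mutual induction and the bookkeeping functions $f^j_h$, making the role of each axiom more transparent, while the paper's approach gets the serialization (which is literally what linearizability needs) for free inside the induction. Two points to tighten in your version: first, reconstructing the global order needs Agreement (Property~\ref{prop:agreement}) in addition to Total Order, since Total Order alone does not make messages delivered at disjoint sets of processes comparable (the paper invokes ``Agreement and Total Order'' at the corresponding step); second, within the speculative sequence $\sigma$, \Consistency{} (b) only constrains pairs of messages broadcast in \emph{different} epochs, so for two messages of $\sigma$ originating in the same epoch you must fall back on Local Order together with Property~\ref{prop:sdeliva} to align their order with delivery order at $p_i$ --- a case the paper's own appeal to (b)/``if'' in the proof of Lemma~\ref{thm:pr}.\ref{thm:pr:context22} also glosses over, so this is a shared, not a distinguishing, gap.
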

In particular, part (a)/``only if'' of \Consistency{} disallows Example~2 from
\S\ref{sec:po-bcast}: it ensures that the leader broadcasting $\delta_2$ will be
aware of $\delta_1$, either via ordinary or speculative delivery. More
generally, part (a) ensures that, if a process $p_i$ delivers a state update
$\delta_2$ broadcast by a leader $p_j$, then at the corresponding points in the
execution, $p_i$ and $p_j$ are aware of the same set of updates (cf. the table
in Figure~\ref{fig:spo-spec}). Part (b) of \Consistency{} furthermore ensures
that the two processes apply these updates in the same order. This contributes
to validating Invariant~\ref{inv:main-po} and, thus, the specification in
Figure~\ref{fig:passive-spec}.

\subparagraph{Implementing SPOabcast.}
To implement SPOabcast we modify the Vertical Broadcast Protocol in
Figures~\ref{vp_failure_free}-\ref{vp_reconfiguration} as follows.  First, since
$\bcast$ can only be called at the leader, we replace
lines~\ref{vp_broadcast}-\ref{vp_forward} by line~\ref{po_broadcast}. Thus, the
leader handles $\bcast$ calls in the same way it previously handled $\FORWARD$
messages. Second, we augment $\confchng$ upcalls with speculative deliveries,
replacing line~\ref{vp_ready_confchg} by line~\ref{po_newconfig_sdeliv}, and
line~\ref{vp_new_state_confchg} by line~\ref{vp_new_state_confchg2}. Thus, the
$\confchng$ upcall at the leader speculatively delivers all application messages
in its log that have not yet been (non-speculatively) delivered.
It is easy to check that these modifications do not change the $0$-downtime
guarantee of Vertical Atomic Broadcast.
\begin{theorem}\label{thm:correctness2}
  The primary-order version of the Vertical Atomic Broadcast protocol is a
  correct implementation of speculative primary-order atomic broadcast.
\end{theorem}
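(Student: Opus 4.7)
The plan is to reuse the VAB correctness results (Theorem about reconfigurable atomic broadcast and the theorem establishing Local and Global Order), both of which still apply since the modifications only change what is passed to the $\confchng$ upcall and reroute $\bcast$ directly into the leader's $\FORWARD$ handler. So I only need to verify Basic Speculative Delivery Properties (Property~\ref{prop:sdeliva}) and Prefix Consistency (Property~\ref{prop:consistency}).

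For Property~\ref{prop:sdeliva}, the only code path producing a non-$\bot$ $\sigma$ is the new leader's $\confchng$, so $\sigma\neq\bot\Rightarrow p_i=p_j$. Uniqueness of messages inside $\sigma$ follows from the global uniqueness of application messages together with the fact that $\sigma$ is the slice $\msg[\lastdelivered+1..\initlength]$ of distinct log slots. Every $m\in\sigma$ was previously broadcast, since the leader only fills $\msg$ via $\FORWARD$/$\ACCEPT$ handling, which always traces back to an earlier $\bcast$. No $m\in\sigma$ was previously delivered by $p_i$, since $\lastdelivered$ increases monotonically on each $\deliv$ and $\sigma$ starts strictly above it.

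For Property~\ref{prop:consistency}, the key lemma I would prove is a \emph{log-agreement} statement: at the instant $p_j$ executes $\confchng(\langle e', M, p_j\rangle, \sigma)$, its array $\msg[0..\initlength]$ is exactly the concatenation of the messages $p_j$ already delivered in epochs $<e'$ (slots $0..\lastdelivered$) followed by $\sigma$ (slots $\lastdelivered+1..\initlength$); moreover, by \invref{inv:prefix-of-higher-epoch} and the way probing selects the new leader, this prefix agrees positionally with every other process's log on every message that was ever committed in an epoch $<e'$, and every message $p_j$ subsequently broadcasts in $e'$ sits at a slot $>\initlength$. Granted this lemma, Part~(a)/``only if'' follows: if $p_i$ delivers $m_1$ before $m_2$ and the two are in different epochs, Global Order forces $\epochOf(m_1)<e'$, so $m_1$ lives at a slot $\le\initlength$ at $p_j$, meaning $m_1$ is either already delivered by $p_j$ or in $\sigma$. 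Part~(a)/``if'' goes the other way: if $m_1$ is already-delivered or in $\sigma$ then $m_1$'s slot is $\le\initlength<$ slot of $m_2$; $p_j$ sends $\COMMIT$s for slots $1..\initlength$ before any new $\ACCEPT$ in $e'$, so $p_j$ itself delivers $m_1$ before $m_2$, and VAB's Total Order then forces $p_i$ to do the same. Part~(b) is analogous: $m_2\in\sigma$ pins $m_2$'s slot inside $[\lastdelivered+1,\initlength]$, $\sigma$ inherits log order by construction, and the log-agreement lemma synchronizes that order with $p_i$'s delivery order.

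The main obstacle is the log-agreement lemma when the new leader is elected after a cascade of failed reconfigurations that required probing several epochs back. One has to argue that $\msg[0..\initlength]$ at $p_j$ really contains, at matching positions, every message that any process could ever have delivered from epochs $<e'$ — not merely every message \emph{committed} in those epochs. This amounts to strengthening \invref{inv:prefix-of-higher-epoch} to cover the burst of $\COMMIT$s at line~\ref{vp_send_commit_2} (which turn formerly-speculative slots into durably delivered ones at all followers of $e'$) and then propagating the strengthened invariant through all further reconfigurations; once that is in place, the rest of the proof is routine case-splitting on whether each of $m_1,m_2$ lies at a slot $\le\initlength$ or $>\initlength$ in $p_j$'s log.
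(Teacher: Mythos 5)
Your overall strategy coincides with the paper's: carry over the reconfigurable atomic broadcast properties and Local/Global Order, verify the Basic Speculative Delivery Properties directly from the code (your argument here matches the paper's almost step for step), and reduce \Consistency{} to a positional log-agreement statement. Your ``log-agreement lemma'' is, in substance, what the paper establishes via \invref{inv:new_state_prefix} and \invref{inv:prefix-of-leader} together with Lemma~\ref{lem:consistency1} and Lemma~\ref{lem:deliverorspec}. However, your proof of Part~(a)/``if'' contains a step that fails, and not merely as a presentational slip: it breaks exactly in the situations that speculative delivery exists to handle.

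Concretely, the claim that ``$p_j$ sends $\COMMIT$s for slots $1..\initlength$ before any new $\ACCEPT$ in $e'$'' is false for this protocol. The $\COMMIT$s for the inherited prefix are sent only after $\NEWSTATEACK$ has been received from \emph{all} followers (lines~\ref{vp_quorum_of_new_state_acks} and~\ref{vp_send_commit_2}), whereas the leader may broadcast---i.e., send new $\ACCEPT$s---immediately after its $\confchng$ upcall; this interleaving is precisely the source of the $0$-downtime guarantee, and forbidding it would reinstate the $2$-delay downtime of non-speculative POabcast discussed in \S\ref{sec:poabcast}. Worse, the subsequent appeal to Total Order (Property~\ref{prop:total}) requires that \emph{some} process actually deliver $m_1$ before $m_2$, but $p_j$ may crash right after broadcasting $m_2$ without delivering either message, while $m_2$ is still delivered at other processes via a later epoch's leader; a message in $\sigma$ carries no durability, so $p_j$ cannot serve as the required witness. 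The correct argument is positional at $p_i$ itself, as in the paper: when $p_i$ delivers $m_2$ at slot $k_2$, prefix agreement (\invref{inv:new_state_prefix}/\invref{inv:prefix-of-leader}) forces $p_i$'s slot $k_1<k_2$ to hold $m_1$, and in-order slot delivery (Lemma~\ref{lem:pim1m2}) then gives that $p_i$ delivers $m_1$ before $m_2$; your own log-agreement lemma would support this route, but your stated proof does not take it. Relatedly, your proposed repair of the ``main obstacle''---strengthening \invref{inv:prefix-of-higher-epoch} to cover the $\COMMIT$ burst of line~\ref{vp_send_commit_2}---is misdirected: that invariant's premise already quantifies over all $\COMMIT$s a leader sends, including those. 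What is genuinely missing, and what the paper adds, is prefix agreement keyed to the leader's $\msg$ array at $\ACCEPT$ time, which is what pins $m_1$ at a fixed slot in the log of \emph{every} process that ever delivers $m_2$, independently of whether $p_j$ survives.
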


Thus, Theorems~\ref{thm:correctness-passive} and~\ref{thm:correctness2} allow us
to use VAB to replicate even non-deterministic services
while minimizing the downtime from routine reconfigurations, e.g., those for
load balancing.

We prove Theorem~\ref{thm:correctness2} in \tr{\ref{sec:proof-po}}{\npoproof}.
Here we informally explain why the above protocol validates the key part
(a)/``only if'' of \Consistency{}, weakening Primary Integrity
(cf. the explanations we gave regarding the latter
at the end of \S\ref{sec:poabcast}).
On the one hand, as in the ordinary VAB,
Invariant~\ref{inv:prefix-of-higher-epoch} from \S\ref{sec:vp_protocol}
guarantees that the log of a newly elected leader of an epoch $e'$ contains all
application messages $m_1$ that could have been delivered in epochs $<e'$. The
new leader will either deliver or speculatively deliver all such messages before
broadcasting anything (line~\ref{po_newconfig_sdeliv}). On the other hand, if
the leader broadcasts a message $m_2$, then a follower will only accept it after
having overwritten its log with the leader's initial one, received in
$\NEWSTATE$ (line~\ref{vp_new_state}). This can be used to show that, if $m_2$
is ever delivered, then any message broadcast in an epoch $< e'$ that was not in
$\NEWSTATE$ will never get delivered.

\section{Related Work}
\label{sec:rw}

The \emph{vertical} paradigm of implementing reconfigurable services by
delegating agreement on configuration changes to a separate component was first
introduced by Lynch and Shvartsman~\cite{rambo} for emulating
dynamic atomic registers. It was further applied by Lamport et
al.~\cite{vertical-paxos} to solve reconfigurable single-shot consensus,
yielding the Vertical Paxos family of protocols. Vertical Paxos and its
follow-ups~\cite{cheappaxos,niobe,ken-book,farm,podc19} require prior
configurations to be disabled (``wedged'') at the start of reconfiguration. In
contrast, our VAB protocol allows the latest functional configuration to
continue processing messages while the agreement on the next configuration is in
progress. This results in the downtime of $0$ when reconfiguring from a
functional configuration. This feature is particularly desirable for atomic
broadcast, where we want to keep producing new decisions when reconfiguration is
triggered for load balancing rather than to handle failures.

To achieve the minimal downtime, the VAB protocol uses different epoch variables
to guard the normal operation ($\epoch$) and reconfiguration ($\newepoch$). By
not modifying the $\epoch$ variable during the preliminary reconfiguration
steps, the protocol allows the old configuration to operate normally while the
reconfiguration is in progress (cf. the proof of Theorem~\ref{thm:zero-downtime}
in \S\ref{sec:vp_protocol}). In contrast, Vertical Paxos uses a single epoch
variable ($\mathsf{maxBallot}$) for both purposes, thus 
disabling the current configuration at the start of reconfiguration.
Our protocol for SPOabcast further extends the minimal downtime guarantee to the
case of passive replication.

Both our VAB and SPOabcast protocols achieve an optimal steady-state latency of
two message delays~\cite{lower-bound}. Although Junqueira et
al.~\cite{junqueira2013barriers} show that no POabcast protocol can guarantee
optimal steady-state latency if it relies on black-box consensus to order
messages, our SPOabcast implementation is not subject to this impossibility
result, as it does not use consensus in this manner.

Although the vertical approach has been widely used in
practice~\cite{chain-replication,corfu,spinnaker,farm,bigtable}, prior systems
have mainly focused on engineering aspects of directly implementing a replicated
state machine for a desired service rather than basing it on a generic atomic
broadcast layer.
Our treatment of Vertical Atomic Broadcast develops a formal foundation that
sheds light on the algorithmic core of these systems.  This can be reused for
designing future solutions that are provably correct and efficient.

Most reconfiguration algorithms that do not rely on an auxiliary configuration
service
can be traced back to the original technique of Paxos~\cite{paxos}, which
intersperses reconfigurations within the stream of normal command agreement
instances. The examples of practical systems that follow this approach include
SMART~\cite{smart}, Raft~\cite{raft}, and Zookeeper~\cite{zab-reconfig}.
\ag{Zab reconfiguration kind of does our speculative broadcast, even though they
  don't formalize it. So we may want to be more honest and more careful with
  saying things like ``our key insight'' elsewhere in the paper.}  Other
non-vertical algorithms~\cite{reconfiguring} implement reconfiguration by
spawning a separate non-reconfigurable state machine for each newly introduced
configuration. In the absence of an auxiliary configuration service, these
protocols require at least $2f+1$ processes in each
configuration~\cite{lower-bound}, in contrast to $f+1$ in our atomic broadcast
protocols.

The fault-masking protocols of Birman et al.~\cite{ken-book} and a recently
proposed MongoDB reconfiguration protocol~\cite{mongodb} separate the message
log from the configuration state, but nevertheless replicate them at the same
set of processes. As in non-vertical solutions, these algorithms require $2f+1$
replicas. They also follow the Vertical Paxos approach to implement
reconfiguration, and as a result, may wedge the system prematurely as we explain
above.

A variant of Primary Integrity, known as
Strong Virtual Synchrony (or Sending View Delivery~\cite{gcs-survey}), 
was originally proposed by 
Friedman and van Renesse~\cite{friedman-lower} who also studied its inherent costs.
Our SPOabcast abstraction is a relaxation of Strong Virtually Synchrony %
and primary-order atomic broadcast (POabcast) of Junqueira et al.~\cite{zab,junqueira2013barriers}.
Keidar and Dolev~\cite{idit-colours} proposed Consistent
Object Replication Layer (COReL) in which every delivered message is assigned a
color such that a message is ``yellow'' if it was received and acknowledged by a
member of an operational quorum, and ``green'' if it was
acknowledged by all members of an operational quorum. While the COReL's yellow
messages are similar to our speculative messages, Keidar and Dolev did not
consider their potential applications, in particular, their utility for
minimizing the latency of passive replication.

\bibliography{biblio}

\begin{thebibliography}{10}

\bibitem{dynastore}
Marcos~K. Aguilera, Idit Keidar, Dahlia Malkhi, and Alexander Shraer.
\newblock Dynamic atomic storage without consensus.
\newblock {\em J. ACM}, 58(2), 2011.

\bibitem{corfu}
Mahesh Balakrishnan, Dahlia Malkhi, John~D. Davis, Vijayan Prabhakaran, Michael
  Wei, and Ted Wobber.
\newblock {CORFU:} {A} distributed shared log.
\newblock {\em {ACM} Trans. Comput. Syst.}, 31(4), 2013.

\bibitem{ken-book}
Kenneth Birman, Dahlia Malkhi, and Robbert van Renesse.
\newblock Virtually synchronous methodology for building dynamic reliable
  services.
\newblock In {\em Guide to Reliable Distributed Systems - Building
  High-Assurance Applications and Cloud-Hosted Services}, chapter~22. Springer,
  2012.

\bibitem{podc19}
Manuel Bravo and Alexey Gotsman.
\newblock Reconfigurable atomic transaction commit.
\newblock In {\em Symposium on Principles of Distributed Computing ({PODC})},
  2019.

\bibitem{budhiraja1993primary}
Navin Budhiraja, Keith Marzullo, Fred~B. Schneider, and Sam Toueg.
\newblock The primary-backup approach.
\newblock In {\em Distributed Systems (2nd Ed.)}. ACM Press/Addison-Wesley,
  1993.

\bibitem{cht96-2}
Tushar~Deepak Chandra and Sam Toueg.
\newblock Unreliable failure detectors for reliable distributed systems.
\newblock {\em J. ACM}, 43(2), 1996.

\bibitem{bigtable}
Fay Chang, Jeffrey Dean, Sanjay Ghemawat, Wilson~C. Hsieh, Deborah~A. Wallach,
  Michael Burrows, Tushar Chandra, Andrew Fikes, and Robert Gruber.
\newblock Bigtable: {A} distributed storage system for structured data.
\newblock In {\em Symposium on Operating Systems Design and Implementation
  {(OSDI})}, 2006.

\bibitem{gcs-survey}
Gregory Chockler, Idit Keidar, and Roman Vitenberg.
\newblock Group communication specifications: A comprehensive study.
\newblock {\em ACM Comput. Surv.}, 33(4), 2001.

\bibitem{to-survey}
Xavier D\'{e}fago, Andr\'{e} Schiper, and P\'{e}ter Urb\'{a}n.
\newblock Total order broadcast and multicast algorithms: Taxonomy and survey.
\newblock {\em ACM Comput. Surv.}, 36(4), 2004.

\bibitem{farm}
Aleksandar Dragojevi\'{c}, Dushyanth Narayanan, Edmund~B. Nightingale, Matthew
  Renzelmann, Alex Shamis, Anirudh Badam, and Miguel Castro.
\newblock No compromises: Distributed transactions with consistency,
  availability, and performance.
\newblock In {\em Symposium on Operating Systems Principles (SOSP)}, 2015.

\bibitem{dls}
Cynthia Dwork, Nancy Lynch, and Larry Stockmeyer.
\newblock Consensus in the presence of partial synchrony.
\newblock {\em J. ACM}, 35(2), 1988.

\bibitem{friedman-lower}
Roy Friedman and Robbert van Renesse.
\newblock Strong and weak virtual synchrony in {H}orus.
\newblock In {\em Symposium on Reliable Distributed Systems (SRDS)}, 1996.

\bibitem{kafka}
Jason Gustafson.
\newblock Hardening {K}afka replication.
\newblock Talk at {K}afka Summit {S}an {F}rancisco, 2018.
\newblock URL:
  \url{https://www.confluent.io/kafka-summit-sf18/hardening-kafka-replication/}.

\bibitem{obstruct-free}
Maurice Herlihy, Victor Luchangco, and Mark Moir.
\newblock Obstruction-free synchronization: Double-ended queues as an example.
\newblock In {\em International Conference on Distributed Computing Systems
  (ICDCS)}, 2003.

\bibitem{linearizability}
Maurice~P. Herlihy and Jeannette~M. Wing.
\newblock Linearizability: A correctness condition for concurrent objects.
\newblock {\em ACM Trans. Program. Lang. Syst.}, 12(3), 1990.

\bibitem{zookeeper}
Patrick Hunt, Mahadev Konar, Flavio~Paiva Junqueira, and Benjamin Reed.
\newblock Zookeeper: Wait-free coordination for internet-scale systems.
\newblock In {\em USENIX Annual Technical Conference (USENIX ATC)}, 2010.

\bibitem{zab}
Flavio~Paiva Junqueira, Benjamin~C. Reed, and Marco Serafini.
\newblock Zab: High-performance broadcast for primary-backup systems.
\newblock In {\em Conference on Dependable Systems and Networks (DSN)}, 2011.

\bibitem{junqueira2013barriers}
Flavio~Paiva Junqueira and Marco Serafini.
\newblock On barriers and the gap between active and passive replication.
\newblock In {\em Symposium on Distributed Computing (DISC)}, 2013.

\bibitem{idit-colours}
Idit Keidar and Danny Dolev.
\newblock Efficient message ordering in dynamic networks.
\newblock In {\em Symposium on Principles of Distributed Computing (PODC)},
  1996.

\bibitem{paxos}
Leslie Lamport.
\newblock The part-time parliament.
\newblock {\em ACM Trans. Comput. Syst.}, 16(2), 1998.

\bibitem{lower-bound}
Leslie Lamport.
\newblock Lower bounds for asynchronous consensus.
\newblock {\em Distributed Computing}, 19(2), 2006.

\bibitem{vertical-paxos}
Leslie Lamport, Dahlia Malkhi, and Lidong Zhou.
\newblock Vertical {P}axos and primary-backup replication.
\newblock In {\em Symposium on Principles of Distributed Computing (PODC)},
  2009.

\bibitem{reconfiguring}
Leslie Lamport, Dahlia Malkhi, and Lidong Zhou.
\newblock Reconfiguring a state machine.
\newblock {\em {SIGACT} News}, 41(1), 2010.

\bibitem{cheappaxos}
Leslie Lamport and Mike Massa.
\newblock Cheap {P}axos.
\newblock In {\em Conference on Dependable Systems and Networks (DSN)}, 2004.

\bibitem{smart}
Jacob~R. Lorch, Atul Adya, William~J. Bolosky, Ronnie Chaiken, John~R. Douceur,
  and Jon Howell.
\newblock The {SMART} way to migrate replicated stateful services.
\newblock In {\em European Conference on Computer Systems (EuroSys)}, 2006.

\bibitem{rambo}
Nancy Lynch and Alex~A. Shvartsman.
\newblock {RAMBO}: A reconfigurable atomic memory service for dynamic networks.
\newblock In {\em Symposium on Distributed Computing (DISC)}, 2002.

\bibitem{niobe}
John MacCormick, Chandramohan~A. Thekkath, Marcus Jager, Kristof Roomp, Lidong
  Zhou, and Ryan Peterson.
\newblock Niobe: A practical replication protocol.
\newblock {\em ACM Trans. Storage}, 3(4), 2008.

\bibitem{kafka-book}
Neha Narkhede, Gwen Shapira, and Todd Palino.
\newblock {\em Kafka: The Definitive Guide}.
\newblock O'Reilly Media, 2017.

\bibitem{vr}
Brian~M. Oki and Barbara~H. Liskov.
\newblock Viewstamped replication: A new primary copy method to support
  highly-available distributed systems.
\newblock In {\em Symposium on Principles of Distributed Computing (PODC)},
  1988.

\bibitem{raft}
Diego Ongaro and John~K. Ousterhout.
\newblock In search of an understandable consensus algorithm.
\newblock In {\em {USENIX} Annual Technical Conference ({USENIX} {ATC})}, 2014.

\bibitem{spinnaker}
Jun Rao, Eugene~J. Shekita, and Sandeep Tata.
\newblock Using {P}axos to build a scalable, consistent, and highly available
  datastore.
\newblock {\em Proc. VLDB Endow.}, 4(4), 2011.

\bibitem{smr}
Fred~B. Schneider.
\newblock Implementing fault-tolerant services using the state machine
  approach: A tutorial.
\newblock {\em ACM Comput. Surv.}, 22(4), 1990.

\bibitem{mongodb}
William Schultz, Siyuan Zhou, Ian Dardik, and Stavros Tripakis.
\newblock Design and analysis of a logless dynamic reconfiguration protocol.
\newblock In {\em Conference on Principles of Distributed Systems (OPODIS)},
  2021.

\bibitem{zab-reconfig}
Alexander Shraer, Benjamin Reed, Dahlia Malkhi, and Flavio~Paiva Junqueira.
\newblock Dynamic reconfiguration of primary/backup clusters.
\newblock In {\em {USENIX} Annual Technical Conference (USENIX ATC)}, 2012.

\bibitem{on-liveness}
Alexander Spiegelman and Idit Keidar.
\newblock On liveness of dynamic storage.
\newblock In {\em Colloquium on Structural Information and Communication
  Complexity (SIROCCO)}, 2017.

\bibitem{spiegelman2017dynamic}
Alexander Spiegelman, Idit Keidar, and Dahlia Malkhi.
\newblock Dynamic reconfiguration: Abstraction and optimal asynchronous
  solution.
\newblock In {\em Symposium on Distributed Computing (DISC)}, 2017.

\bibitem{chain-replication}
Robbert van Renesse and Fred~B. Schneider.
\newblock Chain replication for supporting high throughput and availability.
\newblock In {\em Symposium on Operating Systems Design and Implementation
  (OSDI)}, 2004.

\bibitem{matchmaker}
Michael~J. Whittaker, Neil Giridharan, Adriana Szekeres, Joseph~M. Hellerstein,
  Heidi Howard, Faisal Nawab, and Ion Stoica.
\newblock Matchmaker {P}axos: {A} reconfigurable consensus protocol.
\newblock {\em J. Syst. Res.}, 1(1), 2021.

\end{thebibliography}

\iflong
\clearpage
\appendix

\makeatletter
    \def\@evenhead{\large\sffamily\bfseries
                   \llap{\hbox to0.5\oddsidemargin{\thepage\hss}}APPENDIX\hfil}%
    \def\@oddhead{\large\sffamily\bfseries APPENDIX\hfil
                  \rlap{\hbox to0.5\oddsidemargin{\hss\thepage}}}%
\makeatother

\section{Proof of Correctness for Vertical Atomic Broadcast
  (Theorem~\ref{thm:correctness})}
\label{sec:correctness}

\begin{figure}[h]
	{\small		
	\begin{enumerate}
        \setcounter{enumi}{2}
        
      \item \label{inv:epochprobe} Let $\eid$ be an epoch such that all its
	members at some point set their $\epoch$ variable to $\eid$. Then the
	leader $\pid[j]$ of any epoch $\eid'>\eid$ has $\epoch\geq \eid$ right
	before receiving $\NEWCONFIG(\eid',\_)$.

      \item \label{inv:origin} 
	If the leader of an epoch $\eid$ sends
	$\NEWSTATE(\eid, \msgi, \_)$ such that $\msgi[k] = \vid$,
	then there exists an epoch $\eid'<\eid$ such that the leader of
	$\eid'$ has previously sent $\ACCEPT(\eid', k , \vid)$.
	\end{enumerate}
	}
	\caption{Invariants maintained by Vertical Atomic Broadcast.}
	\label{fig:vp_invariants}
\end{figure}

Our proof of Theorem~\ref{thm:correctness} relies on auxiliary
Invariants~\ref{inv:epochprobe} and \ref{inv:origin} in
Figure~\ref{fig:vp_invariants}, in addition to Invariant~\ref{inv:prefix-of-higher-epoch} from
\S\ref{sec:vp_protocol}. \invref{inv:epochprobe}
ensures that if an epoch $\eid$ has been activated, then the leader of any epoch
$\eid' > \eid$ must have been initialized at an epoch $\geq \eid$. Informally,
this holds because, when probing to look for a leader of $\eid'$, we only skip
epochs that are not activated and will never be
(\S\ref{sec:vp_protocol}). Hence, probing cannot go lower than an activated
epoch $\eid$. \invref{inv:origin} gives the key argument for Integrity: it
ensures that any message in the $\msg$ array of a process has been broadcast
before. We next prove the key \invref{inv:prefix-of-higher-epoch}.

\begin{proof}[Proof of {\confchgnot}.]
\hfill
\begin{itemize}
\item Property (\ref{prop:wf-1}).  Line~\ref{vp_compare_and_swap} guarantees
  that there exists a single $\NEWCONFIG(\eid, \vm)$ message. From this the
  required follows trivially.

\item Property (\ref{prop:wf-3}). Let $p_i$ be a process that calls
  $\confchng(\eid, \vm, \_)$ for an epoch $\eid$. Assume first that $p_i$ calls
  it when handling $\NEWCONFIG(\eid, \vm)$. This message is only sent to the
  leader of $\eid$, which is guaranteed to be included in $\vm$ by the
  constraint on ${\tt compute\_membership}()$. Assume now that $p_i$ calls
  $\confchng(\eid, \vm, \_)$ when handling a $\NEWSTATE(\eid,\_, \_)$
  message. Then the property follows from line~\ref{vp_send_new_state}. Hence,
  any process that calls $\confchng(\eid, \vm, \_)$ is member of $\vm$.

\item Property (\ref{prop:wf-4}). By lines~\ref{vp_probe_pre}
  and~\ref{vp_new_state_pre}, the $\epoch$ variable at a process does not
  decrease. Thus, $\eid_1 \leq \eid_2$. Furthermore, by line
  \ref{vp_compare_and_swap} there exists a single $\NEWCONFIG$ message per
  epoch. Therefore, each process calls $\confchng$ at most once per epoch and
  $\eid_1 < \eid_2$, as required.

\item Property (\ref{prop:wf-5}).  Consider a configuration
  $\confid = \aconf[@][\_][{\pid[j]}]$ and assume that a process $\pid$ invokes
  $\confchng(\confid)$. Then the leader $\pid[j]$ must have earlier received
  $\NEWCONFIG(\eid, \_)$. By line \ref{vp_compare_and_swap}, this occurs when
  ${\tt compare\_and\_swap}(\_, \confid)$ succeeds. Therefore, $\intro(\confid)$
  is invoked before $\pid$ invokes $\confchng(\confid)$. Moreover, line
  \ref{vp_compare_and_swap} guarantees that there exists a single
  $\intro(\confid)$ action.
\end{itemize}
\end{proof}

\begin{proof}[Proof of \invref{inv:prefix-of-higher-epoch}]
We prove the invariant by induction on $\eid'$. Assume the invariant holds for each
$\eid' < \eid''$.  We now show that it holds for $\eid' = \eid''$ by induction on the length
of the protocol execution. We only consider the most interesting transition in
line~\ref{vp_new_state}, where a process $\pid$ receives $\NEWSTATE(\eid'', \vmsg'',\_)$
from the leader of $\eid''$ and sets $\msg$ to $\vmsg''$ (line~\ref{vp_newstate_msg_assignment}).
We show that $\vmsg''[k]=\vid$, so that after this transition 
$\pid$ has $\msg[k] = \vid$, as required.
Let $\pid[j]$ be the leader of $\eid''$, i.e., the process that sends
$\NEWSTATE(\eid'', \vmsg'',\_)$, and let $\eid_0<\eid''$ be the value of $\epoch$ at
$\pid[j]$ right before receiving the $\NEWCONFIG(\eid'', \_)$ message at line \ref{vp_new_config}.
By the premise of the invariant, some process sends the message
$\COMMIT(\eid, k)$. Then at some point all members of $\eid$ set their
$\epoch$ variable to $\eid$, so by \invref{inv:epochprobe} we get $\eid \leq \eid_0$.

Consider first the case when $\eid<\eid_0$. Since $\eid_0<\eid''$, by
the induction hypothesis, $\pid[j]$ has $\msg[k]=\vid$ when it sends
$\NEWSTATE(\eid'', \msg, \_)$, which implies the required. 
Assume now that $\eid = \eid_0$. There are two possibilities: either
$\pid[j]$ is the leader of $\eid$, or it is a follower in $\eid$.
We only consider the latter case, since the former is analogous. 
By the premise of the invariant, some process sends $\COMMIT(\eid, k)$ while having
$\msg[k]=\vid$. Then either each follower in $\eid$ receives
$\ACCEPT(\eid, k, \vid)$ and replies with $\ACCEPTACK(\eid, k)$, or
each follower in $\eid$ receives $\NEWSTATE(\eid, \vmsg', \_)$ with
$\vmsg'[k] = m$ and replies with $\NEWSTATEACK(\eid)$. In particular, this is
the case for $\pid[j]$, and by lines~\ref{vp_accept_pre} and~\ref{vp_new_state_pre},
$\pid[j]$ must handle one of these messages before it
handles $\NEWCONFIG(\eid', \_)$. After handling
one of the above messages and while $\epoch=\eid$,
$\pid[j]$ does not overwrite $\msg[k]$, so that $\msg[k]=\vid$. Then $\pid[j]$ has
$\msg[k]=\vid$ before handling $\NEWCONFIG(\eid', \_)$. Hence, $\pid[j]$ has
$\msg[k]=\vid$ when it sends $\NEWSTATE(\eid'', \msg, \_)$, which implies the
required.
\end{proof}

\begin{proof}[Proof of \invref{inv:epochprobe}.]
Since $\pid[j]$ was chosen as the leader of $\eid'$, 
we know that this process replied with $\PROBEACK(\true, \eid')$
to a message $\PROBE(\eid', \eid^{*})$ from a process executing $\reconf()$, where $\eid^{*}<\eid'$.
Therefore, $\pid[j]$ was a member at the epoch $\eid^*$ that was being
probed.

When a process handles $\PROBE(\eid',\_)$, it sets $\newepoch$
to $\eid'$. Then, by line~\ref{vp_new_state_pre}, a process cannot set $\epoch$ to an epoch
$<\eid'$ after handling $\PROBE(\eid',\_)$. By the premise of the
invariant, at some point all members of $\eid$ set their $\epoch$
variable to $\eid$. Since $\eid'>\eid$, if a member of $\eid$ handles
$\PROBE(\eid',\_)$, then it must have set
$\epoch=\eid$ before. Thus, no member of $\eid$
can reply with $\PROBEACK(\false, \eid')$.

Probing ends when at least one process sends a $\PROBEACK(\true, \eid')$ message.
Thus, the probing phase could have not gone beyond $\eid$, so that
$\eid \leq \eid^* < \eid'$.
Let $\eid_0<\eid'$ be the value of $\epoch$ at $\pid[j]$ right before receiving 
the $\NEWCONFIG(\eid', \_)$ message at line \ref{vp_new_config}.
By the protocol, we know that the transition $\NEWCONFIG(\eid', \_)$ is triggered after $\pid[j]$ was probed in epoch $\eid^{*}\geq \eid$.
Furthermore, by line \ref{vp_send_probe_ack} when $\pid[j]$ handled $\PROBE(\eid', \eid^{*})$, it had $\epoch\geq \eid^{*}$.
Then $\eid \leq \eid^{*} \leq \eid_0$, as required. 
\end{proof}

\begin{proof}[Proof of \invref{inv:origin}.] 
Let $\pid$ be the leader of $\eid$. 
We prove the invariant by induction on $\eid$.
Assume that the invariant holds for all $\eid < \eid^{*}$.
We now show it for $\eid = \eid^{*}$.
According to the protocol, 
the process $\pid$ sends $\NEWSTATE(\eid, \msgi, \_)$ 
with $\msgi[k] = \vid$
when it handles the message $\NEWCONFIG(\eid, \_)$ at line \ref{vp_new_config}.
Let $\eid''<\eid$ be the epoch at $\pid$ just before handling $\NEWCONFIG(\eid, \_)$.
At this moment we have $\msg[k]=m$ at $\pid$.
Then either $\pid$ has received a message $\NEWSTATE(\eid'', \msgi'', \_)$ with
$\msgi''[k] = \vid$ or a message $\ACCEPT(\eid'', k, \vid)$ has been sent.
The former case is immediate by the induction hypothesis. 
The latter gives us what we wanted to prove.
\end{proof}

\begin{lemma}
\label{thm:commit-msg}
Assume that the leader $p_i$ of an epoch $e_1$ sends $\COMMIT(e_1, k)$ while
having $\msg[k] = m_1$, and the leader $p_j$ an epoch $e_2$ sends
$\COMMIT(e_2, k)$ while having $\msg[k] = m_2$. Then $m_1 = m_2$.
\end{lemma}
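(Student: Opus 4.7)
The plan is to reduce the statement to Invariant~\ref{inv:prefix-of-higher-epoch} by case analysis on the relation between $e_1$ and $e_2$. Without loss of generality assume $e_1 \le e_2$.

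For the case $e_1 < e_2$, I would apply Invariant~\ref{inv:prefix-of-higher-epoch} with $e := e_1$, $m := m_1$, and the process instantiated to the leader $p_j$ of $e_2$. The premise of the invariant is immediate from the hypothesis that $p_i$ sends $\COMMIT(e_1, k)$ while having $\msg[k] = m_1$. It remains to show that $p_j$ has $\epoch = e_2 > e_1$ at the moment it sends $\COMMIT(e_2, k)$ with $\msg[k] = m_2$. There are two places in the code where a $\COMMIT$ is sent: line~\ref{vp_send_commit} (normal operation), guarded by $\epoch = e$ at line~\ref{vp_pre_quorum_of_accept_acks}, and line~\ref{vp_send_commit_2} (after receiving $\NEWSTATEACK$s), guarded by $\newepoch = \epoch = e$. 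In either case $p_j$ has $\epoch = e_2$ when it sends $\COMMIT(e_2, k)$, so the invariant forces $\msg[k] = m_1$ at $p_j$ at that moment, and comparing with the hypothesis $\msg[k] = m_2$ yields $m_1 = m_2$.

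For the case $e_1 = e_2$, Property~\ref{prop:wf-1} (together with the uniqueness of epoch assignment enforced by {\tt compare\_and\_swap} at line~\ref{vp_compare_and_swap}) gives $p_i = p_j$, so both $\COMMIT$ messages are sent by the same process during its single term as leader of this epoch. I would then argue that during this term, $\msg[k]$ is only ever assigned a single value. The only writes to $\msg$ while a process is a $\LEADER$ in epoch $e_1$ are at line~\ref{vp_broadcast_assign}, which sets $\msg[\pnext]$ and then increments $\pnext$ at line~\ref{vp_broadcast_next}. Since $\pnext$ is initialized at the start of the epoch to $\max\{k \mid \msg[k] \neq \bot\}+1$ (line~\ref{vp_quorum_of_new_state_acks_next_opt}) and is strictly monotone thereafter, no position is ever overwritten. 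Hence $m_1$ and $m_2$ agree with the single value stored at $\msg[k]$ throughout this term, so $m_1 = m_2$.

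\textbf{Main obstacle.} The cross-epoch case is essentially an immediate consequence of Invariant~\ref{inv:prefix-of-higher-epoch}, so the substantive check is the same-epoch case, where one must verify that neither branch of the protocol causes the leader to store two different values at the same index $k$ during a single term. In particular, it is worth noting that the two $\COMMIT$-sending sites (lines~\ref{vp_send_commit} and~\ref{vp_send_commit_2}) cover disjoint ranges of $k$: the latter handles positions $1..\initlength$ inherited from the previous epoch, while the former can only fire for positions $\ge \pnext = \initlength + 1$ assigned after $\NEWCONFIG$ is handled. This partition, together with the monotonicity of $\pnext$, rules out any internal conflict within the same epoch.
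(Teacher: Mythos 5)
Your proposal is correct and follows essentially the same route as the paper: without loss of generality $e_1 \le e_2$, with the cross-epoch case discharged by Invariant~\ref{inv:prefix-of-higher-epoch} applied at the moment $p_j$ sends $\COMMIT(e_2,k)$ (the guards at lines~\ref{vp_pre_quorum_of_accept_acks} and~\ref{vp_quorum_of_new_state_acks} ensuring $\epoch = e_2$ there). The only difference is that the paper dismisses the case $e_1 = e_2$ as trivial, whereas you spell out the underlying reason — unique leader per epoch plus the strict monotonicity of $\pnext$ ruling out overwrites during a leader's term — which is a faithful elaboration rather than a different argument.
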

\begin{proof}
  Without loss of generality, assume $e_1 \le e_2$. The lemma trivially holds if
  $e_1 = e_2$. Assume now that $e_1 < e_2$. By
  \invref{inv:prefix-of-higher-epoch}, when $p_j$ sends $\COMMIT(e_2, k)$, it
  has $\msg[k] = m_1$. But then $m_2 = \msg[k] = m_1$, as required.
\end{proof}

\begin{lemma}
  \label{thm:commit-pos}
  \label{inv:commit_ord_new}
  Assume that the leader of an epoch $e_1$ sends $\COMMIT(e_1, k_1)$ while
  having $\msg[k_1] = m$, and the leader of an epoch $e_2$ sends
  $\COMMIT(e_2, k_2)$ while having $\msg[k_2] = m$. Then $k_1 = k_2$.
\end{lemma}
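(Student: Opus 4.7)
The strategy is to show that the position at which $m$ is stored in any process's $\msg$ array is uniquely determined by $m$, so that any two processes that ever store $m$ in their logs must place it at the same index. Concretely, I would fix an execution containing the two $\COMMIT$ events in the hypothesis, let $k^*$ be the position at which $m$ is first assigned to some $\msg$ array in the execution (well-defined since, e.g., the leader of $e_1$ at some point assigns $\msg[k_1]=m$), and prove the auxiliary claim that whenever any process has $\msg[k]=m$ at any point, $k=k^*$. Applying this claim to the leader of $e_1$ sending $\COMMIT(e_1,k_1)$ and to the leader of $e_2$ sending $\COMMIT(e_2,k_2)$ then yields $k_1=k^*=k_2$.

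To establish the auxiliary claim I would proceed by induction on the execution prefix, examining every line of code where a position of $\msg$ can be written to the value $m$. Inspecting Figures~\ref{vp_failure_free} and~\ref{vp_reconfiguration}, there are exactly three such lines:
\begin{enumerate}
\item Line~\ref{vp_broadcast_assign}, where the leader handling $\FORWARD(m)$ sets $\msg[\pnext]=m$. By the uniqueness assumption on broadcast messages (equation~\equref{prop:env-1}), the broadcast call for $m$ is invoked at most once, so at most one $\FORWARD(m)$ message is ever sent and thus this line executes at most once for this $m$. Consequently, if this is the event defining $k^*$, then the assignment is at $\pnext=k^*$; otherwise this case is vacuous.
\item Line~\ref{vp_accept_assignment}, where a follower handling $\ACCEPT(e,k,m)$ sets $\msg[k]=m$. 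By inspection, $\ACCEPT(e,k,m)$ is only sent at line~\ref{vp_send_accept}, and only immediately after the leader has just executed line~\ref{vp_broadcast_assign} with $\pnext=k$. Hence the position $k$ in any $\ACCEPT(\_,k,m)$ must equal the unique position at which line~\ref{vp_broadcast_assign} assigned $m$, which is $k^*$.
\item Line~\ref{vp_newstate_msg_assignment}, where a process handling $\NEWSTATE(e,\vmsg,\_)$ overwrites $\msg$ with $\vmsg$. If $\vmsg[k]=m$, then by \invref{inv:origin} there exists $e'<e$ such that the leader of $e'$ previously sent $\ACCEPT(e',k,m)$. By the previous case, this forces $k=k^*$.
\end{enumerate}

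The main subtlety, and the step I expect to require the most care, is the $\NEWSTATE$ case: the inductive step there does not invoke the induction hypothesis directly on an assignment event but instead uses \invref{inv:origin} to reduce to a prior $\ACCEPT$ event, and one has to confirm that the index in that $\ACCEPT$ is the same $k$ as in $\vmsg$. This is exactly what \invref{inv:origin} provides. A minor detail is the edge case where no event ever assigns $m$ to any $\msg[k]$; in that case neither $\COMMIT$ in the hypothesis can occur, so the lemma holds vacuously and the definition of $k^*$ is irrelevant.
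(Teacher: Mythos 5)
Your proof is correct and follows essentially the same route as the paper's: both arguments reduce each $\COMMIT$ (whether the entry came from the leader's own $\ACCEPT$ or arrived via $\NEWSTATE$, handled through Invariant~\ref{inv:origin}) to a unique originating $\ACCEPT$, and then invoke the uniqueness of the $\bcast(m)$ call---and hence of the $\FORWARD(m)$ message---to pin down a single position. The only difference is packaging: you isolate a global ``$m$ occupies the same position in every $\msg$ array'' claim, which is in fact the paper's own Lemma~\ref{lem:setposition} proved later for the primary-order protocol, whereas the paper's proof of Lemma~\ref{thm:commit-pos} traces the two $\COMMIT$ events back directly without stating the intermediate invariant.
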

\begin{proof}
  Since the leader $p_i$ of $e_1$ sends $\COMMIT(e_1, k_1)$ while having
  $\msg[k_1] = m$, it must previously send either $\ACCEPT(\eid_1, k_1, m)$ or
  $\NEWSTATE(\eid, \msgi, \_)$ with $\msgi[k]=\vid$. In the latter case, by
  \invref{inv:origin}, for some epoch $e'_1 < e_1$ the leader of $e'_1$ sends
  $\ACCEPT(\eid'_1, k_1, m)$. Thus, in both cases for some epoch $e''_1 \le e_1$
  the leader of $e''_1$ sends $\ACCEPT(\eid''_1, k_1, m)$. We analogously
  establish that for some epoch $e''_2 \le e_2$ the leader of $e''_2$ sends
  $\ACCEPT(\eid''_2, k_2, m)$. Either leader sends the $\ACCEPT$ message upon
  receiving a $\FORWARD(m)$ message, generated when a process calls
  $\bcast(m)$. By~(\ref{prop:env-1}), there is a unique $\bcast(m)$ call, and
  thus a unique $\FORWARD(m)$ message. This implies $\eid''_1 = \eid''_2$ and
  $k_1 = k_2$, as required.
\end{proof}

\begin{proof}[Proof of Integrity.]
  We first prove that $\vid$ is broadcast before its delivered.  Assume that a
  process $p_i$ delivers $m$. Then $\pid$ handles a $\COMMIT(\eid, k)$ message
  for some epoch $\eid$ and has $\msg[k]=\vid$ at that moment. Consider first
  the case when the leader of $\eid$ sends $\COMMIT(\eid, k)$ because it
  receives an $\ACCEPTACK(\eid, k)$ message from every follower.  This implies
  that the leader $\pid[j]$ has sent an $\ACCEPT(\eid, k, \vid')$ message to all
  followers.  After $\pid$ handles $\ACCEPT(\eid, k, \vid')$ (or sends it if
  $\pid=\pid[j])$ and while $\epoch=\eid$, $\pid$ does not overwrite $\msg[k]$,
  and hence, it has $\msg[k] = \vid'$. Since $\pid$ has $\msg[k]=\vid$ when it
  handles $\COMMIT(\eid, k)$, then $\vid=\vid'$. Therefore, $\pid[j]$ sends
  $\ACCEPT(\eid, k, \vid)$. Then $\pid[j]$ has received a $\FORWARD(m)$ message
  from some process, which must have previously called $\bcast(m)$. The latter
  process thus broadcast $m$ before $p_i$ delivered it, as required.

  Consider now the case when the leader of $\eid$ sends $\COMMIT(\eid, k)$
  because it receives a $\NEWSTATEACK$ message from every follower.  Let
  $\NEWSTATE(\eid, \msgi, \_)$ be the message sent by the leader of $\eid$. We
  have $\msgi[k]=\vid$. By \invref{inv:origin} there exists an epoch
  $\eid'<\eid$ in which its leader sends $\ACCEPT(\eid', k, \vid)$. The leader
  does this upon receiving a $\FORWARD(m)$ message from some process, which must
  have previously called $\bcast(m)$. The latter process thus broadcast $m$
  before $p_i$ delivered it, as required.
  
  Finally, we prove that $\pid$ delivers $\vid$ at most once. Indeed, if $\pid$
  delivered $\vid$ twice, then it would receive $\COMMIT(\vid, k_1)$ and
  $\COMMIT(\vid, k_2)$ for some $k_1 \not= k_2$. But this contradicts
  Lemma~\ref{thm:commit-pos}.
\end{proof}

\begin{proof}[Proof of Total Order.]
  Assume that a process $\pid[i]$ delivers $m_1$ before $m_2$, and that a
  process $\pid[j]$ delivers $m_2$. Then: 
\begin{itemize}
\item $\pid[i]$ receives a message $\COMMIT(e_1, k_1)$ sent by the leader of an
  epoch $e_1$, and $\pid[i]$ has $\msg[k_1] = m_1$ when receiving this message;
\item $\pid[i]$ receives a message $\COMMIT(e_2, k_2)$ sent by the leader of an
  epoch $e_2$, and $\pid[i]$ has $\msg[k_2] = m_2$ when receiving this message; and 
\item $\pid[j]$ receives a message $\COMMIT(e'_2, k'_2)$ sent by the leader of an
  epoch $e'_2$, and $\pid[j]$ has $\msg[k_2] = m'_2$ when receiving this message.
\end{itemize}
It is easy to see that when the leaders of $e_1$, $e_2$ and $e'_2$ send the
$\COMMIT$ messages, they must have $\msg[k_1] = m_1$, $\msg[k_2] = m_2$ and
$\msg[k'_2] = m'_2$, respectively. Then by Lemma~\ref{thm:commit-pos},
$k'_2 = k_2$. Since $k_1 < k_2$, before delivering $m_2$ the process $\pid[j]$
has to deliver a message $m'_1$ at position $k_1$. Then:
\begin{itemize}
\item $\pid[j]$ receives a message $\COMMIT(e'_1, k_1)$ from the leader of an
  epoch $e'_1$, and $\pid[j]$ has $\msg[k_1] = m'_1$ when receiving this message.
\end{itemize}
Then the leader of $e'_1$ has $\msg[k_1] = m'_1$ when sending the $\COMMIT$
message. Hence, by Lemma~\ref{thm:commit-msg}, $m'_1 = m_1$. We have thus shown
that $\pid[j]$ delivers $m_1$ before $m_2$, as required.
\end{proof}

\begin{proof}[Proof of Agreement.]
  Assume that a process $\pid[i]$ delivers $m_1$ and a process $\pid[j]$
  delivers $m_2$. Then:
\begin{itemize}
\item $\pid[i]$ receives a message $\COMMIT(e_1, k_1)$ sent by the leader of an
  epoch $e_1$, and $\pid[i]$ has $\msg[k_1] = m_1$ when receiving this message;
\item $\pid[j]$ receives a message $\COMMIT(e_2, k_2)$ sent by the leader of an
  epoch $e_2$, and $\pid[j]$ has $\msg[k_2] = m_2$ when receiving this message.
\end{itemize}
It is easy to see that when the leaders of $e_1$ and $e_2$ send the $\COMMIT$
messages, they must have $\msg[k_1] = m_1$ and $\msg[k_2] = m_2$, respectively.
Without loss of generality, assume that $k_1 \le k_2$. If $k_1 = k_2$, then by
Lemma~\ref{thm:commit-msg}, $m_1 = m_2$, which trivially implies the required.
Assume now that $k_1 < k_2$. Then before delivering $m_2$ at position $k_2$,
process $\pid[j]$ must deliver a message $m'_1$ at position $k_1$. Then:
\begin{itemize}
\item $\pid[j]$ receives a message $\COMMIT(e'_1, k_1)$ sent by the leader of an
  epoch $e'_1$, and $\pid[j]$ has $\msg[k_1] = m'_1$ when receiving this message.
\end{itemize}
Then the leader of $e'_1$ has $\msg[k_1] = m'_1$ when sending the $\COMMIT$
message. Hence, by Lemma~\ref{thm:commit-msg}, $m'_1 = m_1$. We have thus shown
that $\pid[j]$ delivers $m_1$, as required.
\end{proof}

We next prove the first part of the Property~\ref{prop:liveness}a, ensuring that
a new configuration will be introduced. This is the most interesting part due to
its use of \asmref{asm:liveness-condition} (Availability).
\begin{theorem}\label{thm:livenessreconfiguration}
  Consider an execution with finitely many reconfiguration requests, and let $r$
  be the last reconfiguration request to be invoked. Suppose that $r$ is invoked
  by a correct process and no other reconfiguration requests take steps after
  $r$ is invoked. Then $r$ succeeds to introduce a configuration.
\end{theorem}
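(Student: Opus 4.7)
The plan is to split the theorem into two claims: (a) the probing loop at $p_r$ terminates with a PROBEACK($\textsc{true}, \repoch$) from some member $p_j$; and (b) the subsequent \texttt{compare\_and\_swap} at the configuration service succeeds, so that $p_r$ returns a non-$\bot$ configuration and generates the $\intro$ action at line~\ref{vp_compare_and_swap}.

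Part (b) is short: $p_r$ reads the CS's last epoch as $\repoch - 1$ at the very start of its execution, and no other reconfiguration performs any steps after $r$ is invoked, so in particular no other call to \texttt{compare\_and\_swap} occurs in the meantime. Hence the CS's last epoch is still $\repoch - 1$ when $p_r$ invokes \texttt{compare\_and\_swap}$(\repoch - 1, \ldots)$, which therefore succeeds.

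For part (a), I first want to argue that the PROBE precondition $\repoch \geq \newepoch$ at line~\ref{vp_probe_pre} holds at every receiver of $r$'s PROBEs. Any other reconfiguration $r'$ that ever modified a process's $\newepoch$ (whether directly via a PROBE at line~\ref{vp_probe_set_newepoch} or indirectly through a NEWSTATE at line~\ref{vp_new_state}) must have done so with an epoch value $\repoch' \leq \repoch$, because $r'$ took its CS-reading step strictly before $r$ was invoked (it cannot take steps after that), and the last epoch in CS is monotonically non-decreasing. Let $e^*$ denote the highest activated epoch at the time of $r$'s invocation, with $e^* \geq 0$ since the initial configuration is activated from the start. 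Assuming for contradiction that $r$'s probing loop never terminates, $r$'s own configuration is never activated and no other reconfiguration takes steps, so no configuration with epoch strictly greater than $e^*$ is activated during the execution. By the Availability assumption applied to each introduced configuration with epoch in $[e^*, \repoch - 1]$, each such configuration retains at least one correct member. Thus each PROBE$(\repoch, \pepoch)$ sent by $r$ receives a PROBEACK reply, each loop iteration terminates, and $\pepoch$ strictly decreases. If the loop has not already exited earlier via a PROBEACK($\textsc{true}$), probing eventually reaches $\pepoch = e^*$, at which point a correct member of the configuration at $e^*$ has $\epoch \geq e^*$ (since it was initialized at $e^*$) and $\newepoch \leq \repoch$, so it replies PROBEACK($\textsc{true}, \repoch$), contradicting the assumption.

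The principal obstacle I expect is the apparent circularity in applying Availability: the guaranteed liveness of a member of a configuration depends on no later configuration being activated, yet we want to use that liveness to prove that $r$'s later configuration does become activated. Arguing by contradiction fixes the side-condition and breaks this circularity. A secondary subtlety is ensuring that $\newepoch$ values across the system never exceed $\repoch$ even in the presence of earlier aborted reconfigurations; this follows from the ordering of CS reads and the monotonicity of the CS's last epoch.
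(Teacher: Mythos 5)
Your part (b) (the \texttt{compare\_and\_swap} succeeds because no other reconfiguration takes steps, hence no competing introduction) and your preliminary observation that $\newepoch \le \repoch$ at every process both match the paper's proof. The genuine gap is in part (a), at the step ``$r$'s own configuration is never activated and no other reconfiguration takes steps, \emph{so} no configuration with epoch strictly greater than $e^*$ is activated during the execution.'' This inference is invalid: activation is not performed by steps of a reconfiguration call. A configuration with epoch $e'' \in (e^*, \repoch-1]$ may have been introduced before $r$ was invoked with its $\NEWCONFIG$ message still in flight; after $r$'s invocation the new leader of $e''$ can handle $\NEWCONFIG$, send $\NEWSTATE$, and the followers can handle it and invoke $\confchng$ --- all of these are protocol steps at member processes, not steps of any reconfiguration call --- so $e''$ can become activated after $r$ is invoked. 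Without this claim you cannot invoke Availability for every epoch in $[e^*, \repoch-1]$: Availability only guarantees a non-crashed member of an epoch $e''$ as long as no configuration with epoch $> e''$ has been activated, so your application of it is circular exactly where it matters.

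What closes this hole in the paper's proof is a ``poisoning'' argument driven by $r$'s own $\PROBEACK(\false, \repoch)$ replies: when $r$ is probing an epoch $e'$, it has already received $\PROBEACK(\false,\repoch)$ from at least one member of \emph{every} epoch in $(e', \repoch-1]$. Such a member set $\newepoch = \repoch$ (line~\ref{vp_probe_set_newepoch}) while not yet having joined the corresponding epoch, and the preconditions at lines~\ref{vp_new_config_pre} and~\ref{vp_new_state_pre} then prevent it from ever joining it; hence none of the epochs in $(e', \repoch-1]$ can ever become activated, and since nothing above $\repoch-1$ is ever introduced, no epoch $> e'$ is ever activated. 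Only now can Availability be applied --- locally, to the epoch $e'$ currently being probed --- to conclude that some member of $e'$ stays alive, receives the probe (its precondition holds since $\newepoch \le \repoch$), and replies, so each iteration terminates. Note this argument never needs your global claim: epochs strictly between $e^*$ and the current probe point may well become activated, harmlessly, since any reply from a member of an activated epoch would be $\PROBEACK(\true,\repoch)$ and end probing at once. Your anchor at $e^*$ (a member of an activated epoch must answer $\true$) is a fine alternative to the paper's anchor at epoch $0$, but the descent down to that anchor has to be justified epoch by epoch with the poisoning argument, not by the blanket claim you made.
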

\begin{proof}
  The process $p_r$ executing $r$ starts by querying the configuration service
  to find the latest introduced configuration. Let $\eid$ be the epoch of this
  configuration. We first prove that the probing by $p_r$ eventually
  terminates. The probing procedure proceeds by iterations in epoch-descending
  order, starting by probing the members of $\eid$. The process $\pid[r]$ only
  moves to the next iteration after receiving at least one reply from a member
  of the epoch being probed, provided no process has replied with
  $\PROBEACK(\true,\eid+1)$. Consider an arbitrary epoch $\eid'$ such that
  $\eid'\leq \eid$. If $\pid[r]$ is probing the members of $\eid'$, then $p_r$
  has received a $\PROBEACK(\false, \eid+1)$ from at least one member of each
  epoch $e^*$ such that $e'<e^*\leq e$. Furthermore, because of
  line~\ref{vp_probe_set_newepoch} and the check in line~\ref{vp_new_state_pre},
  none of configurations with these epochs will ever become activated. Since no
  reconfiguration request other than $r$ takes steps after $r$ is invoked, no
  configuration with an epoch $>e$ can be introduced, and thus,
  activated. Then by \asmref{asm:liveness-condition} (Availability),
  at least one member $p_j$ of $\eid'$ is guaranteed to receive the
  $\PROBE(\eid+1, \eid')$ message sent by $\pid[r]$. Furthermore, since no
  configuration with an epoch $>e$ is introduced, no $\PROBE$ message for an epoch $>e+1$
  could have been issued. Then $p_j$ has $\newepoch\leq
  e+1$ when it receives $\PROBE(\eid+1, \eid')$. Therefore, by
  line~\ref{vp_probe_pre}, $p_j$ handles $\PROBE(\eid+1, \eid')$ and replies to $p_r$. Hence, for each epoch
  $\eid'$ that $p_r$ probes, either $p_r$ will eventually move to probe the
  previous epoch, or the probing phase will terminate. If the probing phase does
  not terminate for any epoch $>0$, then $p_r$ will eventually reach the initial
  epoch $0$. We have shown that $p_r$ will receive at least one reply from a
  member of $0$. The condition at line~\ref{vp_send_probe_ack} trivially holds
  for this member, and thus it will reply to $\pid[r]$ with
  $\PROBEACK(\true, \eid+1)$. Hence, the probing procedure is guaranteed to
  finish. After probing finishes, $\pid[r]$ attempts to store the new
  configuration into the configuration service. Since no reconfiguration request
  other than $r$ takes steps after $r$ is invoked, no configuration could have
  been introduced while $p_r$ was probing, and its {\tt compare\_and\_swap} will
  succeed, as required.
\end{proof}

\begin{proof}[Proof of Liveness.]
	Let $r$ be the last reconfiguration request such that
	$r$ is invoked by a correct process $\pid[r]$, and every other
        reconfiguration request does not take any steps after the $r$’s
        invocation.  
	By \thrmref{thm:livenessreconfiguration}, $\pid[r]$
        introduces a configuration $\confid = \aconf$. Assume that all
        processes in $\vm$ are correct. We first prove
        Property~\ref{prop:liveness-a} of Liveness.
	\begin{enumerate}[5a.]
		\item  	
                After introducing $\confid$, $\pid[r]$ sends $\NEWCONFIG(\eid,
                M)$ to $\pid$, the leader of $\confid$. 
                We assumed that no reconfiguration request but
                $r$ takes any steps after $r$’s invocation. Thus,
                no reconfiguration request may produce a
                $\PROBE(\eid',\_)$ such that $\eid'>\eid$. This
                implies that all the processes in $M$ have
                $\newepoch\leq \eid$ at any time. Furthermore, the fact that $\pid$ is
                picked as the leader guarantees that $\pid$ has
                handled $\PROBE(\eid,\_)$. Therefore, $\pid$ sets 
                $\newepoch=\eid$ before  $\pid[r]$ introduces
                $\confid$. Line~\ref{vp_probe_pre} guarantees that $\pid$ has
                $\newepoch=\eid$ when it receives $\NEWCONFIG(\eid,
                M)$.  Then $\pid$ invokes $\confchng(\confid)$ and
                sends $\NEWSTATE(\eid, \_, \_)$ to its 
                followers in $M$. We have established that all the processes in $M$ have
                $\newepoch\leq \eid$ at any time. Then when a follower
                receives a $\NEWSTATE(\eid, \_, \_)$ message, 
                the precondition in line~\ref{vp_new_state_pre} is satisfied and the follower 
                invokes $\confchng(\confid)$, as required.
       \end{enumerate}
        By Property~\ref{prop:liveness-a}, all processes
                  in $\vm$ deliver $\confchng(\confid)$. Therefore,
                  every process in $\vm$ sets $\epoch$ to $\eid$.
                  We assumed that no reconfiguration request but
                $r$ takes any steps after $r$’s invocation. Then no configuration with
                epoch $>\eid$ is ever introduced. Thus:
                (*) any process in $M$ has $\epoch=\eid$ at any moment
                after delivering $\confchng(\confid)$.
                We now use this fact to prove
                Properties~\ref{prop:liveness-b}
                and~\ref{prop:liveness-c} of Liveness.
       \begin{enumerate}[5a.]
         \setcounter{enumi}{1}
		\item Let $k_1=\max\{k\mid\msg[k]\neq\bot\}$ at the leader
                of $\eid$ when it invokes $\confchng(\confid)$. The
                leader of $\eid$ sends a $\COMMIT(\eid, k)$ to all
                members such that $k\leq k_1$ after it receives a
                 $\NEWSTATEACK(\eid)$ message from every follower.
                 Now assume that a process from $M$ broadcasts an application
                 message $m$ while in epoch $e$. This happens after the process
                 invokes $\confchng(\confid)$ and thus after it sets
                 $\leadervar = p_i$. Hence, the leader $\pid$ of
                 $\eid$ receives a $\FORWARD(\vid)$ message.
                 Let $k'$ be the value 
		of $\next$ at $\pid$ at the moment when it receives
                this message. Then $p_i$ sends $\ACCEPT(\eid, k',
                \vid)$. When processes in $M$ receive this message, they have
                already invoked $\confchng(\confid)$. Then by (*) and the fact that all
                processes in $M$ are correct, they reply with $\ACCEPTACK$s and 
                thus the leader eventually sends $\COMMIT(\eid, k')$. Furthermore, by
                lines~\ref{vp_broadcast_assign},~\ref{vp_broadcast_next}
                and~\ref{vp_quorum_of_new_state_acks_next_opt}, the leader $\pid$
                sends also $\ACCEPT(\eid, k'', \_)$ such that
                $k_1<k''<k'$ before sending $\ACCEPT(\eid, k',
                \vid)$. By (*) and the fact that all
                processes in $M$ are correct, the leader $\pid$
                eventually sends a
                $\COMMIT(\eid, k'', \_)$ message for all $k''$ such
                that $k_1<k''<k'$. Thus, we have established that the
                leader sends a $\COMMIT$ messages for all positions
                between $0$ and $k'$. Fix an arbitrary process $\pid[j] \in M$.

                After receiving the above $\COMMIT$ messages, by (*), $\pid[j]$
                handles $\COMMIT(\eid, k')$ and delivers the message stored at
                $\msg[k']$. If the leader $\pid$ sends $\COMMIT(\eid,
                k')$, then all followers handle  $\ACCEPT(\eid, k',
                \vid)$ before. Thus, $\pid[j]$ sets $\msg[k']=\vid$
                before handling $\COMMIT(\eid, k')$ either because it
                is the leader of $\eid$ and sends $\ACCEPT(\eid, k',
                \vid)$, or because it handles $\ACCEPT(\eid, k',
                \vid)$. The process $\pid[j]$ does not overwrite $\msg[k']$
                while $\epoch = \eid$, so that $\pid[j]$ has $\msg[k']=\vid$ when it handles
                $\COMMIT(\eid, k')$. Therefore, it delivers
                $\vid$. Since $\pid[j]$ was picked arbitrarily, all members of $M$ deliver $\vid$ as required.
                
              \item Assume that a process $\pid[k]$ delivers $\vid$ at an epoch
                $\eid'\leq\eid$. Then $\pid[k]$ handles $\COMMIT(\eid', k)$
                while having $\msg[k]=\vid$. This implies that the leader of
                $\eid'$ has $\msg[k]=\vid$ when it sends $\COMMIT(\eid', k)$.

                Consider an arbitrary process $\pid[j] \in M$. Let
                $\NEWSTATE(\eid, \msgi, \_)$ be the $\NEWSTATE$ message sent by
                the leader $\pid$ of $e$ and let $k_1=\length(\msgi)$. If
                $\eid'<\eid$, then by \invref{inv:prefix-of-higher-epoch},
                $\pid[j]$ has $\msg[k]=\vid$ while in $\eid$. Then
                $\msgi[k]=\vid$ and $k_1\geq k$. Assume now that
                $\eid'=\eid$. Since $\pid[k]$ delivers $\vid$ at an epoch
                $\eid'$, this process has $\msg[k]=\vid$ when it handles
                $\COMMIT(\eid, k)$. The case when $\pid[k]$ sets $\msg[k]$ to
                $\vid$ due to sending or receiving $\ACCEPT(\eid, k, \vid)$ is
                handled as in Property~\ref{prop:liveness-b}. Hence, we can
                assume that $\pid[k]$ sends or receives
                $\NEWSTATE(\eid, \msgi, \_)$ such that $\msgi[k]=\vid$. In this
                case we have $k_1\geq k$. Thus, from now on we can assume that
                $\msgi[k]=\vid$ and $k_1\geq k$.

                The leader of $\eid$ sends a $\COMMIT$ message to all members of
                $M$ for all positions $k_1\geq k$. Thus, by (*), $\pid[j]$
                eventually delivers the message at position $k$ in its log. If
                this message $\vid'$ is different from $\vid$, then $\pid[j]$
                handles $\COMMIT(\eid^*, k)$ for some $\eid^*$ while having
                $\msg[k] = \vid'$. Then the leader of $\eid^*$ must have had
                $\msg[k] = \vid'$ when it sent the $\COMMIT$ message. But then
                by Lemma~\ref{thm:commit-msg}, $\vid=\vid'$. Thus, $\pid[j]$
                eventually delivers $\vid$, as required.
	\end{enumerate}
\end{proof}

\section{Proof of Correctness for the Passive Replication Implementation
  (Theorem~\ref{thm:correctness-passive})}
\label{sec:pr:correct}

A state $S'$ is {\em derived}\/ from a state $S$ using
a command $c$, denoted $S \stackrel{c}{\leadsto} S'$, 
iff $\exists \delta.\, S' = \app(S, \delta) \wedge 
S \stackrel{c}{\rightarrow} \langle \_, \delta \rangle$.
Given a sequence of commands $\sigma = \sigma_1\ldots\sigma_l$,
a state $S'$ is derived from a state $S$ using $\sigma$, 
denoted $S \stackrel{\sigma}{\leadsto} S'$, iff
$S=S_0 \stackrel{\sigma_1}{\leadsto} S_1 \stackrel{\sigma_2}{\leadsto} 
\dots \stackrel{\sigma_l}{\leadsto} S_{l} = S'$ for some
states $S_1\ldots S_{l-1}$. A state $S'$ is derived from 
a state $S$, denoted $S \leadsto S'$, iff 
$S \stackrel{\sigma}{\leadsto} S'$ for 
some sequence of commands $\sigma$. %
For two sequences $\sigma$ and $\sigma'$, we write $\sigma \preceq \sigma'$
if $\sigma$ is a prefix of $\sigma'$. 

In the following, we will use the term \emph{time} to refer to the index
at which a particular event occurs in a history.
The lemma below asserts the key invariants satisfied
by the passive replication protocol in Figure~\ref{fig:app-layer}:
\begin{lemma}\label{thm:pr}
Let $h$ be a finite history of the algorithm in Figure~\ref{fig:app-layer}.
Then, there exists a sequence of commands $\sigma_h$ over the set $\{c \mid \execute(c) \in h \}$
and a non-decreasing function
$f^j_h: \{0..|h|\} \rightarrow \{0..|h|\}$ for each processes $p_j$
such that $|\sigma_h|=\max_j\{f_j(|h|)\}$, and
the following holds for all times $t \in \{0..|h|\}$:
\begin{enumerate}

\item \label{thm:pr:main}
For all processes $p_i$,
$\Sigma_i(t)$ is derived from $\Sigma_0$ using
the prefix of $\sigma_h$ of size $f_i(t)$:\\
$\forall p_i.\, \Sigma_0 \stackrel{\prefix{\sigma_h}{k}}{\leadsto} \Sigma_i(t) \where k=f_h^i(t)$.

\item \label{thm:pr:mnum}
For all processes $p_i$, the total number of messages delivered by $p_i$
at or before $t$ is equal $f_i(t)$, and if $t > 0$, 
then $f_i(t-1) \le f_i(t) \le f_i(t-1)+1$.

\item \label{thm:pr:context} For all processes $p_i$,
if $p_i$ delivers a message $m$ at $t$, 
then at the time when $m$ was broadcast, the value of the speculative state $\Theta$ 
at the sender of $m$ is the same as 
the value of the committed state $\Sigma$ at $p_i$ at $t$:\\
 $\forall p_i, m.\, h_t = \deliv_i(m) \implies
 \exists p_j, t'.\, t' < t \wedge h_{t'}=\bcast_j(m) \wedge \Sigma_i(t) = \Theta_j(t')$.

\item \label{thm:pr:context21} Consider a message $m$ that was originally
  broadcast by a process $p_j$ in an epoch $e_j$. Suppose that a process $p_k$
  joins an epoch $e_k > e_j$ at $t$, and $p_k$ is the leader of $e_k$. Suppose further
  that $p_k$ does not speculatively deliver any messages when joining $e_k$ and
  $m$ is the last message delivered by $p_k$ before joining $e_k$. Then
  $\Theta_k(t) = \Theta_j(t_j)$.

\item \label{thm:pr:context22} Consider a message $m$ that was originally
  broadcast by a process $p_j$ in an epoch $e_j$, and assume that some process delivers
  $m$. Suppose that a process $p_k$ joins an epoch $e_k > e_j$ at $t$, and $p_k$ is the
  leader of $e_k$. Suppose further that $m$ is the last message speculatively
  delivered by $p_k$ when joining $e_k$. Then $\Theta_k(t) = \Theta_j(t_j)$.

\end{enumerate}
\end{lemma}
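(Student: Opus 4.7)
My plan is to prove all five parts simultaneously by induction on the length $|h|$ of the history, building $\sigma_h$ and the functions $f_h^i$ incrementally as new actions are appended. The base case $|h|=0$ is trivial: $\sigma_h$ is empty, every $f_h^i(0)=0$, and every $\Sigma_i(0)=\Sigma_0$, so parts (\ref{thm:pr:main})--(\ref{thm:pr:context22}) hold vacuously or by definition. For the inductive step, I would consider the last action $h_t$ appended to a history $h'$ of length $t-1$ that already satisfies the invariants, and case-split on the type of action. For any action that does not change $\Sigma_i$ or $\Theta_i$ at $p_i$ (for instance, a $\bcast_j$, $\EXECUTE$-receipt on a non-leader, or $\reconfreq$), I keep $f_h^i(t)=f_{h'}^i(t-1)$ and $\sigma_h=\sigma_{h'}$; the invariants are preserved trivially. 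For $h_t = \bcast_j(m)$ following the execution of a command $c$ at the leader $p_j$, I extend $\sigma_h$ with $c$ only if this broadcast will eventually be delivered (or, more simply, just extend it and let $f$ skip over lost commands); the crucial thing is to record the command used to derive $\Theta_j(t)$ from $\Theta_j(t-1)$.

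The central case is $h_t = \deliv_i(m)$, where I must show part~(\ref{thm:pr:main}), (\ref{thm:pr:mnum}), and (\ref{thm:pr:context}) simultaneously. By Integrity, $m$ was previously broadcast by some $p_j$ at time $t'<t$, and line~\ref{pr:code:state-assign} makes $\Sigma_i(t) = \app(\Sigma_i(t-1), \delta)$ with the $\delta$ recorded in $m$. Part~(\ref{thm:pr:mnum}) follows from setting $f_h^i(t) = f_h^i(t-1) + 1$. For part~(\ref{thm:pr:context}), I need $\Sigma_i(t) = \Theta_j(t')$. This is where Consistency (Property~\ref{prop:consistency}) and Total Order do the work: they force the set and order of messages delivered by $p_i$ strictly before $m$ to agree with the set and order of messages the sender $p_j$ delivered or speculatively delivered strictly before broadcasting $m$. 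Using the inductive hypothesis part~(\ref{thm:pr:main}) on both sides, together with parts (\ref{thm:pr:context21})--(\ref{thm:pr:context22}) for the most recent $\confchng$ at $p_j$, I can conclude that both $\Sigma_i(t)$ and $\Theta_j(t')$ are obtained from $\Sigma_0$ by applying exactly the same sequence of state updates in the same order, hence they are equal. This also justifies extending $\sigma_h$ (if necessary) with the command underlying $m$ at position $f_h^i(t)$, in a way consistent with what has been recorded for other processes by Total Order.

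For $h_t = \confchng_k(\langle e_k, M, p_k\rangle, \sigma)$ at the new leader $p_k$, I must establish parts (\ref{thm:pr:context21}) and (\ref{thm:pr:context22}). Line~\ref{pr:code:tstate-assign} sets $\Theta_k(t)$ to $\Sigma_k(t)$ and then lines~\ref{pr:code:apply1}--\ref{pr:code:state-assign2} apply the speculatively delivered updates in order. By the inductive hypothesis part~(\ref{thm:pr:main}), $\Sigma_k(t)$ is $\Sigma_0$ evolved by exactly the commands corresponding to messages $p_k$ has already delivered. Appending the speculatively delivered commands yields a state derived from $\Sigma_0$ by the prefix of $\sigma_h$ determined by ``delivered then speculatively delivered at $p_k$ up to $m$.'' Using part (a) of Consistency (both ``only if'' and ``if'' directions) applied to the last relevant message $m$ and any message $m_2$ later broadcast by $p_j$ that has been delivered, I can match this prefix exactly with the prefix determining $\Theta_j(t_j)$ in the inductive hypothesis; together with part (b) of Consistency for ordering inside $\sigma$, this gives $\Theta_k(t) = \Theta_j(t_j)$ for both cases (\ref{thm:pr:context21}) and (\ref{thm:pr:context22}).

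The main obstacle will be part~(\ref{thm:pr:context}) for $\deliv$, together with consistently maintaining the global sequence $\sigma_h$ across all processes and epochs: I must argue that the order of commands committed to $\sigma_h$ is the same order in which any two processes apply them, despite speculative deliveries being able to interleave with ordinary ones. The argument reduces to a careful bookkeeping exercise in which Total Order handles ordinary-vs-ordinary comparisons, Global and Local Order handle broadcast-to-broadcast comparisons, and both parts of \Consistency{} handle the ordinary-vs-speculative comparisons. Once the bookkeeping is set up, each individual case is a short calculation using $\app$ and the inductive hypothesis.
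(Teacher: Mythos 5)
Your overall skeleton matches the paper's: induction on $|h|$, case analysis on the last action, with delivery as the central case, and with parts (\ref{thm:pr:context21})--(\ref{thm:pr:context22}) established at $\confchng$ actions using \Consistency{}. The genuine gap is in your bookkeeping of $\sigma_h$, which you grow at \emph{broadcast} time. The lemma forces $|\sigma_h|=\max_j\{f_j(|h|)\}$ and, via parts (\ref{thm:pr:main})--(\ref{thm:pr:mnum}), that every process's committed state at every time is derived from a \emph{prefix} of $\sigma_h$ whose length equals the number of messages that process has delivered. Your second variant (``just extend it and let $f$ skip over lost commands'') is therefore incompatible with the statement: commands of broadcast-but-undelivered messages may not appear in $\sigma_h$ at all, and $f$ is not permitted to skip. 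Your first variant (``extend only if this broadcast will eventually be delivered'') makes the choice at step $t$ depend on the future of $h$, which breaks the incremental induction you describe; and even if you instead define $\sigma_h$ globally for the fixed finite $h$, it silently relies on the claim that the real-time broadcast order of eventually-delivered messages coincides with the common delivery order. That claim is true but is not a freebie: at the specification level an old leader may still broadcast $m_2$ in an epoch $e_2$ \emph{after} the leader of a later epoch $e_1>e_2$ has broadcast $m_1$, and if both were delivered, Global Order would invert the two orders. Ruling this out requires combining Agreement, Global Order, Integrity/Basic Speculative Delivery, and part (a)/``only if'' of \Consistency{}; nothing in your plan identifies or proves this, and Total Order alone (which only compares deliveries with deliveries) cannot supply it.

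The paper sidesteps the issue entirely by growing $\sigma_h$ at \emph{delivery} time, and only when the delivering process currently has the longest delivered prefix; laggards merely advance their $f$ along the already-recorded sequence, so $\sigma_h$ is in the global delivery order by construction, and the appended command is one whose execution from the state reached by $\sigma_h$ yields the delivered update --- its existence being exactly what part (\ref{thm:pr:context}) supplies. I would also flag that your one-line justification of part (\ref{thm:pr:context}) (``Consistency and Total Order do the work'') hides the other place where real work happens: the paper needs three interpolation lemmas showing that between the previously delivered message and $m$, the broadcaster of $m$ neither delivers, speculatively delivers, nor broadcasts anything, so that the two states in question differ by exactly one $\app$ step. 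If you move the $\sigma_h$ bookkeeping to delivery time and spell out those lemmas, your plan becomes essentially the paper's proof.
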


The following proposition is immediate from the passive replication code:
\begin{proposition}
\begin{enumerate}
\item
For all processes $p_i$, times $t>0$, and messages $m = \langle \_, \_, \delta \rangle$,
if $p_i$ delivers $m$ at time $t$, then $\Sigma_i(t) = \app(\Sigma_i(t-1), \delta)$.

\item
For all processes $p_i$ and times $t>0$, if $\Sigma_i(t) \neq \Sigma_i(t-1)$, then
$p_i$ delivers a message at $t$.
\end{enumerate}
\label{prop:state-deliver}
\end{proposition}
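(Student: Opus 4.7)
The plan is to prove both parts by a direct syntactic inspection of the passive replication pseudocode in Figure~\ref{fig:app-layer}, cataloguing every program location at which the committed state variable $\pstate$ can appear on the left-hand side of an assignment. The variable $\pstate$ is initialized to $\pstate_0$ before the execution starts and, scanning each of the three event handlers, I expect to find exactly one update site: line~\ref{passive:apply} inside the $\deliv(\langle \id, r, \delta \rangle)$ handler, which executes $\assign{\pstate}{\app(\pstate,\delta)}$. The $\execute$ handler only reads $\pstate$ (via broadcasting), the $\EXECUTE$ handler updates $\tstate$ and calls $\bcast$, and the $\confchng$ handler writes to $\curepoch$, $\curleader$, and $\tstate$ but never to $\pstate$.

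For part~(1), once the single write site is identified, the conclusion is immediate: if $h_t = \deliv_i(m)$ with $m = \langle \_, \_, \delta \rangle$, then between time $t-1$ and time $t$ the process $p_i$ executes exactly the body of the $\deliv$ handler, whose only effect on $\pstate$ is the assignment $\assign{\pstate}{\app(\pstate,\delta)}$; hence $\Sigma_i(t) = \app(\Sigma_i(t-1), \delta)$.

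For part~(2), I would argue by contrapositive. Each action in a history corresponds to executing the body of exactly one handler (together with any locally triggered upcalls). If the action at time $t$ is not a $\deliv_i$ action at $p_i$, then line~\ref{passive:apply} is not executed at $p_i$ between times $t-1$ and $t$, and since this is the only line in the algorithm that writes $\pstate$, we have $\Sigma_i(t) = \Sigma_i(t-1)$, contradicting the hypothesis.

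The only subtlety worth flagging explicitly in the write-up is the distinction between the committed state $\pstate$ and the speculative state $\tstate$: the $\confchng$ handler (line~\ref{passive:change}) does modify $\tstate$ (line~\ref{pr:code:tstate-assign}, and in the SPOabcast version additionally lines~\ref{pr:code:apply1}--\ref{pr:code:state-assign2}), which might otherwise look like a counterexample to the ``delivery is the only way state changes'' claim. Noting that those assignments target $\tstate$ rather than $\pstate$ closes the case for both the atomic-broadcast and SPOabcast variants of the protocol, so a single one-paragraph argument suffices.
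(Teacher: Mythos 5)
Your proposal is correct and matches the paper exactly: the paper states Proposition~\ref{prop:state-deliver} as ``immediate from the passive replication code,'' and your code inspection---identifying line~\ref{passive:apply} as the sole write site for $\pstate$ and noting that the $\confchng$ handler only touches $\tstate$---is precisely the justification the paper leaves implicit. Nothing is missing.
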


\begin{lemma}
  Consider a process $p_i$ that delivers $m$ before $m'$ and does not deliver
  any message in between the two. Suppose that $m$ is broadcast in an epoch $e$
  and $m'$ is broadcast in an epoch $e' > e$. Let $p_k$ be the process that
  broadcast $m'$ and assume that $p_k$ delivers $m$ before joining $e'$. Then
\begin{enumerate}

\item \label{noinbetween-deliv} 
$p_k$ does not deliver any messages after delivering $m$ and before
broadcasting $m'$; and

\item \label{noinbetween-spec}
$p_k$ does not speculatively deliver any messages when joining $e'$.

\end{enumerate}
\label{lem:aux-deliver}
\end{lemma}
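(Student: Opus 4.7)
I plan to prove both parts by contradiction, combining the ``if''/``only if'' directions of parts~(a) and~(b) of \Consistency{} (Property~\ref{prop:consistency}) with Total Order, Local Order, Agreement, and Integrity.

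For part~(\ref{noinbetween-spec}), I assume $p_k$ speculatively delivers some $m_s$ when joining $e'$. Since $m_s$ sits in $p_k$'s log at that moment, it was broadcast in an epoch $<e'$, so $m_s$ and $m'$ live in different epochs. The ``if'' direction of part~(a) of \Consistency{} (with $m_1 := m_s$, $m_2 := m'$) then gives that $p_i$ delivers $m_s$ before $m'$. Combined with ``nothing between $m$ and $m'$ at $p_i$'' and Property~\ref{prop:sdeliva} (which rules out $m_s = m$ since $p_k$ has already delivered $m$), this forces $p_i$ to deliver $m_s$ strictly before $m$. If $m$ and $m_s$ live in different epochs, part~(b) of \Consistency{} applied with $m_1 := m$, $m_2 := m_s$ contradicts the hypothesis that $p_k$ delivered $m$ before joining $e'$. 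If instead they live in the same epoch, the contrapositive of Local Order forces the leader of that epoch to have broadcast $m_s$ before $m$; hence $p_k$ (which delivers $m$) must also have delivered $m_s$ earlier, and so before joining $e'$, again violating Property~\ref{prop:sdeliva}.

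For part~(\ref{noinbetween-deliv}), I assume $p_k$ delivers some $m''$ between delivering $m$ and broadcasting $m'$, and split on the epoch of $m''$. If $m''$ is in $e'$, then since SPOabcast only allows the leader to broadcast, $p_k$ itself must have broadcast $m''$; by Integrity this happens before $p_k$ delivers $m''$, hence before it broadcasts $m'$, and Local Order transports this broadcast order to the delivery order at $p_i$. Combined with ``nothing between'' and Integrity (which gives $m'' \neq m$), this places $m''$ strictly before $m$ at $p_i$, and Total Order then forces $p_k$ to deliver $m''$ before $m$ --- a contradiction. If $m''$ lies in some epoch $\neq e'$ and $p_k$ delivered $m''$ before joining $e'$, the ``if'' direction of part~(a) of \Consistency{} yields the same ordering at $p_i$, producing the same contradiction.

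The remaining subcase --- $m''$ in an epoch $\neq e'$ but delivered by $p_k$ after joining $e'$ --- is the main obstacle, since \Consistency{} no longer applies directly to the pair $(m'', m')$. My plan is to invoke Agreement on $p_k$'s delivery of $m''$ and $p_i$'s delivery of $m'$. If $p_k$ eventually delivers $m'$ too, then $p_k$ delivers $m''$ before $m'$ (broadcast precedes delivery), and Total Order propagates this order to $p_i$, closing the argument as before. Otherwise $p_i$ must deliver $m''$; Total Order applied to $p_k$'s order ``$m$ then $m''$'' puts $m$ before $m''$ at $p_i$, so ``nothing between'' places $m''$ strictly after $m'$ at $p_i$, and a second Total Order step forces $p_k$ to deliver $m'$ before $m''$, contradicting that $p_k$ delivers $m''$ before broadcasting $m'$.
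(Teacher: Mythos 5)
Your proof is correct, and at the top level it shares the paper's skeleton: argue by contradiction, use the ``if'' directions of \Consistency{} (Property~\ref{prop:consistency}) to force the offending message $m''$ into $p_i$'s delivery sequence between $m$ and $m'$, and contradict the ``nothing in between'' hypothesis. The genuine difference is in decomposition and rigor. The paper's proof of part~(\ref{noinbetween-deliv}) is a single application of \Consistency{}~((a)/if) to the pair $(m'',m')$ followed by Total Order, and its proof of part~(\ref{noinbetween-spec}) applies ((a)/if) and then ((b)/if) to $(m,m'')$; in both places it silently skips the preconditions of Property~\ref{prop:consistency}: for (a), that $m''$ is broadcast in an epoch $\neq e'$ \emph{and} is delivered by $p_k$ before joining $e'$ (or speculatively delivered at the join); for (b), that $m$ and $m''$ are broadcast in different epochs. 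Your sub-cases are exactly what is needed to discharge these preconditions: $m''$ broadcast in $e'$ itself is handled by Integrity and Local Order; $m$ and the speculative message coming from the same epoch is handled by the contrapositive of Local Order together with Property~\ref{prop:sdeliva}; and the case of $m''$ delivered by $p_k$ only after joining $e'$ --- where \Consistency{} is genuinely inapplicable --- is handled by your Agreement-plus-double-Total-Order argument. Note that this last argument uses only Agreement, Total Order, Integrity and ``nothing in between'', with no reference to epochs at all, so it in fact proves the whole of part~(\ref{noinbetween-deliv}) on its own, making your first two sub-cases (and the paper's terser argument) redundant. The only blemish is the protocol-level phrase ``$m_s$ sits in $p_k$'s log'' used to justify that $m_s$ and $m'$ come from different epochs; at the specification level this follows instead from Property~\ref{prop:sdeliva} ($m_s$ is broadcast before the $\confchng$ at $p_k$) together with the fact that only the unique leader of $e'$, namely $p_k$ itself, can broadcast in $e'$, and only after that $\confchng$.
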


\begin{proof}
Let $t_k$ be the time at which $p_k$ broadcasts $m'$, and $s_k < t_k$ be the time 
at which $p_k$ delivers $m$.

Assume by contradiction that either 
(\ref{noinbetween-deliv}) or (\ref{noinbetween-spec}) is violated.
Suppose first that (\ref{noinbetween-deliv}) does not hold: i.e., 
there exists a time $s_k < t_k'' < t_k$ such that
$p_k$ delivers a message $m''$ at time $t_k''$.
Since $p_k$ broadcasts $m'$ and
$p_i$ delivers $m'$, by \Consistency{} ((a)/if), $p_i$
must deliver $m''$ before delivering $m'$. 
Since $s_k < t_k''$, by
Total Order, $p_i$ must also deliver $m''$ after delivering
$m$. However, since $p_i$ does not deliver any messages after
delivering $m$ and before delivering $m'$, this is a contradiction.

Suppose next that (\ref{noinbetween-spec}) does not hold. Let
$\confchng_k(\langle e', \_, \_ \rangle, \sigma)$ be
the configuration change event that causes $p_k$ to join
$e'$ and $s_k < t_k'' < t_k$ be the time at which $p_k$
joins $e'$. Then, there exists a message $m''$  such 
that $m''\in \sigma$. Since $p_i$ delivers $m'$, $p_k$ broadcasts $m'$, and
$m''\in \sigma$,
by \Consistency{} ((a)/if), $p_i$ must deliver $m''$ before
delivering $m'$. Thus, we get that $p_i$ delivers $m''$, $p_k$
speculatively delivers $m''$ when joining $e_k$, and
$p_k$ delivers $m$ before joining $e_k$. 
By \Consistency{} ((b)/if), this implies that 
$p_i$ delivers $m$ before delivering $m''$.
However, since $p_i$ does not deliver any messages
after delivering $m$ and before delivering $m'$, 
this is a contradiction.
\end{proof}

\begin{lemma}
  Consider a process $p_i$ that delivers $m$ before $m'$ and does not deliver
  any message in between the two.  Suppose that $m$ is broadcast in an epoch $e$
  and $m'$ is broadcast in an epoch $e' > e$. Let $p_k$ be the process that
  broadcasts $m'$ and assume that $p_k$ speculatively delivers $m$ when joining
  $e'$. Then $m$ is the last message speculatively delivered by $p_k$ when
  joining $e'$.
\label{lem:aux-deliver2}
\end{lemma}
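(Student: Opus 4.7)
My strategy is to argue by contradiction: suppose $m$ is not the last speculatively delivered message at $p_k$ when joining $e'$; then some $m''$ appears strictly after $m$ in the speculative-delivery sequence $\sigma$ of $p_k$. The goal is to show that this forces $p_i$ to deliver some message strictly between $m$ and $m'$, contradicting the hypothesis.

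The first observation is that $m''$ must be broadcast in an epoch different from $e'$: each epoch has a unique leader, and $p_k$ performs the speculative delivery of $m''$ at the moment of joining $e'$, before any broadcast of its own in $e'$. I can then invoke \Consistency{} (part (a)/``if'') with $m_1 := m''$ and $m_2 := m'$ to conclude that $p_i$ delivers $m''$ before $m'$. Next, assuming for the moment that $m$ and $m''$ are broadcast in different epochs, I apply \Consistency{} (part (b)/``if'') with $m_1 := m$ and $m_2 := m''$: the hypothesis that $p_k$ speculatively delivers $m$ before $m''$ in $\sigma$ yields, via the iff, that $p_i$ delivers $m$ before $m''$. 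Combining the two orderings, $p_i$'s history contains $m$, $m''$, $m'$ in that order (with $m'' \ne m$ by Property~\ref{prop:sdeliva} and $m'' \ne m'$ since they lie in different epochs), which places $m''$ strictly between $m$ and $m'$ and contradicts the hypothesis.

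The main obstacle is the subcase in which $m$ and $m''$ happen to be broadcast in the same epoch $e$, because \Consistency{} (b) explicitly requires the two messages to lie in different epochs. I would close this case by first using \Consistency{} (a)/``if'' together with the no-delivery-between hypothesis and Property~\ref{prop:sdeliva} to conclude that $p_i$ delivers $m''$ strictly before $m$; then invoking Local Order (Property~\ref{prop:localord}) to deduce that the leader of $e$ must have broadcast $m''$ before $m$; and finally appealing to the corresponding consistency constraint on the speculative-delivery sequence $\sigma$, which must respect the intra-epoch broadcast order, to force $\sigma$ to list $m''$ before $m$, contradicting its chosen position strictly after $m$. Reconciling the speculative-delivery ordering with the intra-epoch broadcast order is the most delicate part of the argument, since it is not a direct consequence of a single listed SPOabcast property but a joint corollary of Local Order together with the consistency conditions in Property~\ref{prop:consistency}.
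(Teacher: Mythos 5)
Your main-case argument coincides with the paper's proof: assuming some $m''$ follows $m$ in $\sigma$, apply \Consistency{} (a)/``if'' to the pair $(m'', m')$ to get that $p_i$ delivers $m''$ before $m'$, then \Consistency{} (b)/``if'' to the pair $(m, m'')$ to get that $p_i$ delivers $m$ before $m''$, sandwiching $m''$ strictly between $m$ and $m'$ --- a contradiction. You are also right, and in fact more careful than the paper, in flagging that part (b) carries a different-epochs precondition: the paper's proof applies (b)/``if'' to $(m, m'')$ without ever checking it.

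However, your patch for the same-epoch subcase contains a genuine gap. Its final step needs the claim that $\sigma$ respects the intra-epoch broadcast order, and this does not follow from Local Order together with Property~\ref{prop:consistency}, nor from any listed property: Local Order (Property~\ref{prop:localord}) constrains only ordinary deliveries, Property~\ref{prop:sdeliva} says nothing about the order within $\sigma$, and both parts of \Consistency{} are explicitly restricted to messages broadcast in \emph{different} epochs, so the specification leaves the relative order of two same-epoch messages in $\sigma$ entirely unconstrained. Concretely, consider a history in which the leader of $e$ broadcasts $m''$ and then $m$; $p_i$ delivers $m''$, then $m$, then $m'$; and $p_k$ joins $e'$ with $\sigma = m\,m''$ and then broadcasts $m'$. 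Every property of Figures~\ref{fig:rpob-properties} and~\ref{fig:spo-spec} holds here (part (b) is vacuous for the pair $(m, m'')$, and parts (a) and (b) hold for the pairs involving $m'$), yet $\sigma$ inverts the broadcast order --- and this history in fact falsifies the lemma itself, since all its hypotheses hold while $m$ is not last in $\sigma$. So the subcase you identified cannot be closed from the stated specification at all: it requires either strengthening the specification (e.g., requiring $\sigma$ to be ordered consistently with Local Order on same-epoch messages) or reasoning at the level of the implementation, where $\sigma$ is a segment of the $\msg$ array and hence automatically in broadcast order. The paper's own proof does not close this subcase either --- it simply overlooks the precondition you spotted.
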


\begin{proof}
Let $t'$ be the time at which $p_k$ joins $e'$, and 
$\confchng_k(\langle e', \_, \_ \rangle, m_1..m_l)$
be the configuration change event occurring at $t'$. 
Since $p_k$ speculatively delivers $m$ when joining $e'$,
there exists $1 \le r \le l$ such that $m_r = m$.
Assume by contradiction that $1 \le r < l$.
Then, there exists a message $m''$ such that $p_k$ speculatively
delivers $m$ before $m''$ when joining $e_k$. 
By \Consistency{} ((a)/if), $p_i$ delivers $m''$ before
$m'$, which by \Consistency{} ((b)/if), implies that $p_i$
delivers $m$ before delivering $m''$. Thus, $p_i$
delivers $m$ followed by $m''$ followed by $m'$, which 
is a contradiction to the assumption that $p_i$
does not deliver any messages after delivering $m$
and before delivering $m'$. 
\end{proof}

\begin{lemma}
  Consider a process $p_i$ that delivers $m$ before $m'$ and does not deliver
  any message in between the two.  Suppose that $m$ is broadcast in an epoch $e$
  and $m'$ is broadcast in an epoch $e' > e$. Let $p_k$ be the process that
  broadcasts $m'$. Then $p_k$ does not broadcast any messages after
  joining $e'$ and before broadcasting $m'$.
\label{lem:aux-deliver3}
\end{lemma}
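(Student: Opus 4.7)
\medskip
\noindent\textbf{Proof plan for Lemma~\ref*{lem:aux-deliver3}.}
The plan is to argue by contradiction: assume $p_k$ broadcasts some message $m''$ after joining $e'$ and strictly before broadcasting $m'$. Since $p_k$ is in epoch $e'$ throughout this interval, $m''$ is broadcast in $e'$ as well. I will then exhibit a contradiction with the hypothesis that $p_i$ delivers no message between $m$ and $m'$, by pinpointing where $p_i$ would have to deliver $m''$ and showing that every possibility is ruled out by the broadcast properties already established.

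First, since $p_k$ is the leader of $e'$ and receives $\bcast(m'')$ before $\bcast(m')$ (both in epoch $e'$), Local Order (Property~\ref*{prop:localord}) applied to $p_i$, which delivers $m'$, forces $p_i$ to deliver $m''$ at some time strictly before delivering $m'$. Also, $m'' \neq m$ because $m$ is broadcast in epoch $e$ and $m''$ in epoch $e' > e$, and application messages are unique by the well-formedness assumption~\equref{prop:env-1}.

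Next, I would case-split on where in $p_i$'s delivery sequence $m''$ falls. If $p_i$ delivers $m''$ after $m$ and before $m'$, this directly contradicts the hypothesis that $p_i$ delivers no message between $m$ and $m'$. If $p_i$ delivers $m''$ before $m$, then Global Order (Property~\ref*{prop:globalord}) applied to $m$ (broadcast in $e$) and $m''$ (broadcast in $e' > e$) forces $p_i$ to deliver $m$ before $m''$, yielding a contradiction. These are the only possibilities consistent with Local Order, so the assumption that such an $m''$ exists is untenable, and the lemma follows.

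I do not expect any step to be a serious obstacle here: once Local Order pins $m''$ before $m'$ at $p_i$, Global Order and the ``no delivery in between'' hypothesis close off every remaining position for $m''$. The one subtlety worth double-checking in writing is that the argument uses only ordinary deliveries (not the speculative deliveries carried by $\confchng$), so there is no interaction with Part~(b) of \Consistency{}; Local and Global Order from the underlying RPOabcast/SPOabcast specification are sufficient.
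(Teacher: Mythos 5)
Your proof is correct, but it takes a genuinely different route from the paper's. The paper's own proof also begins by applying Local Order (Property~\ref{prop:localord}) to force $p_i$ to deliver $m''$ before $m'$, and hence (by the no-delivery-in-between hypothesis and uniqueness of messages) before $m$; but it then derives the contradiction at $p_k$ rather than at $p_i$: it invokes \Consistency{} ((b)/only if) on the pair $(m'', m)$ to conclude that $p_k$ must have delivered $m''$ before joining $e'$ or speculatively delivered $m''$ before $m$ when joining $e'$ --- i.e., that $p_k$ (speculatively) delivered $m''$ before broadcasting it, which is impossible. You instead close the argument at $p_i$ via Global Order (Property~\ref{prop:globalord}): since $m$ is broadcast in $e$ and $m''$ in $e' > e$, any process delivering both must deliver $m$ first, contradicting the placement of $m''$ before $m$. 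Your route is more elementary and, in fact, more self-contained: the paper's appeal to \Consistency{} (b) formally requires that $p_k$ speculatively delivers $m$ when joining $e'$ (i.e., $m \in \sigma$ in $p_k$'s $\confchng$ upcall), a condition that holds at the lemma's use sites (where Lemmas~\ref{lem:aux-deliver} and~\ref{lem:aux-deliver2} have already been applied) but is not among the stated hypotheses of Lemma~\ref{lem:aux-deliver3} itself. Your argument needs nothing beyond Local Order, Global Order, Integrity (Property~\ref{prop:integrity}) and uniqueness of broadcasts \equref{prop:env-1}, so it establishes the lemma exactly as stated, and it correctly avoids any interaction with the speculative-delivery machinery.
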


\begin{proof}
Assume by contradiction that
there exists a message $m''$ such that $p_k$ broadcasts $m''$
after joining $e'$ and before broadcasting $m'$. By Local Order,
$p_i$ must deliver $m''$ before delivering $m'$. Since $p_i$
does not deliver any messages after delivering $m$ and
before delivering $m'$, $p_i$ must deliver $m''$ before
delivering $m$. By \Consistency{} ((b)/only if), this implies that
$p_k$ delivers $m''$ before joining $e'$ or speculatively delivers
$m''$ before $m$ when joining $e'$. However, this means that $m''$
is delivered or speculatively delivered by $p_k$ before it is broadcast
by $p_k$, which is a contradiction. 
\end{proof}

\begin{proof}[Proof of Lemma~\ref{thm:pr}]
By induction on the length of an execution. Consider first an execution $h$
such that $|h|=0$. Then, all processes are in their initial states.
Let $\sigma_h$ be the empty sequence, and $f_h^j: \{0\} \rightarrow \{0\}$
for each process $p_j$. Then, 
$|\sigma_h|=\max_j\{f_j(|h|)\}=0$, and
for all processes $p_i$, $\Sigma_i=\Sigma_0$. Since  
$\Sigma_0$ is derived from itself using the empty sequence of commands,
and the empty sequence is the prefix of $\sigma_h$ 
of size $f^i_h(0)=0$, Lemma~\ref{thm:pr}.\ref{thm:pr:main} holds.
Furthermore, since no process delivers a message in $h$ and $t=0$,
Lemma~\ref{thm:pr}.\ref{thm:pr:mnum} is vacuously true.
Finally, since no process broadcasts a message or
joins an epoch in the history of size $0$, the claims asserted by 
Lemmas~\ref{thm:pr}.\ref{thm:pr:context}-\ref{thm:pr}.\ref{thm:pr:context22}
are vacuously true as well.

For the inductive step, assume that the lemma holds for
all histories $h$ of length $k$, and consider a history $h'$ of length $k+1$.
If the last event in $h'$ is the delivery of a message 
$m=\langle \_, \_, \delta \rangle$ and the process $p_i$
that delivers $m$ has the longest prefix of messages delivered in $h$, then
we let
\begin{align}
&\sigma_{h'} \triangleq \sigma_h \cdot c \ \ \ \suchthat \exists S.\, \exists c.\,
\execute(c) \in h \wedge 
S \stackrel{c}{\rightarrow} \langle \_, \delta \rangle \wedge
\Sigma_0 \stackrel{\sigma_h}{\leadsto} S \text{,~and} \label{eq:sigma-grow}\\
&f_{h'}^i \triangleq f_h^i \cup \{(|h|+1, |\sigma_h|+1)\} \wedge
\forall p_j\neq p_i.\, f_{h'}^j \triangleq  f_h^j \cup \{(|h|+1, f^j_h(|h|))\}. \label{eq:f-grow}
\end{align}
If the last event in $h'$ is the delivery of a message 
$m=\langle \_, \_, \delta \rangle$ by a process $p_i$, and the 
length of the message prefix delivered by $p_i$ in $h$ is not the longest, then
we let:
\begin{align}
& \sigma_{h'} \triangleq \sigma_h \text{,~and} \label{eq:sigma-same}\\
&  f_{h'}^i \triangleq f_h^i \cup \{(|h|+1, f_h^i(|h|)+1)\} \wedge
\forall p_j\neq p_i.\, f_{h'}^j \triangleq f_h^j \cup \{(|h|+1, f^j_h(|h|))\}. \label{eq:f-same}
\end{align}
In all the remaining cases, we let:
\begin{align}
& \sigma_{h'} \triangleq \sigma_h \text{,~and} \label{eq:sigma-same2}\\
& \forall p_j.\, f_{h'}^j \triangleq f_h^j \cup \{(|h|+1, f^j_h(|h|))\}. \label{eq:f-same2}
\end{align}
We first show that
\begin{equation}
|\sigma_{h'}|=\max_j\{f_{h'}^j(|h'|)\}.
\label{eq:len-bound}
\end{equation}
If the last event in $h'$ is not the delivery of a message, then 
by~(\ref{eq:sigma-same2}), (\ref{eq:f-same2}), and the induction hypothesis:
$$
|\sigma_{h'}|=|\sigma_h|=\max_j\{f_{h}^j(|h|)\}=\max_j\{f_{h'}^j(|h'|)\},
$$
as needed. Next, assume that the last event in $h'$ is the
delivery of a message by process $p_i$, which has the longest
prefix of messages delivered in $h$.
Then by~(\ref{eq:f-grow}) and the induction hypothesis
$$
f_{h'}^i(|h'|)=|\sigma_h|+1=\max_j\{f_{h}^j(|h|)\}+1 \wedge 
\forall p_j\neq p_i.\, f_{h'}^j(|h'|)=f_{h}^j(|h|)\le \max_j\{f_{h}^j(|h|)\},
$$
which implies that
$$
f_{h'}^i(|h'|)=\max_j\{f_{h'}^j(|h'|)\}.
$$
Thus, by~(\ref{eq:sigma-grow}), we have 
$$
|\sigma_{h'}|=|\sigma_h|+1=f_{h'}^i(|h'|)=\max_j\{f_{h'}^j(|h'|)\},
$$
as needed.
Last, assume that the last event in $h'$ is the
delivery of a message by process $p_i$ which does not have the longest
prefix of messages delivered in $h$.
Let $l_h^k$ be the length of the message prefix delivered by process $p_k$
in $h$. By the induction hypothesis for Lemma~\ref{thm:pr}.\ref{thm:pr:mnum},
$$
\forall p_j.\, f_h^j(|h|)=l_h^j,
$$
and therefore, 
$$
f_h^i(|h|)=l_h^i < \max_j\{l_h^j\}.
$$
By~(\ref{eq:sigma-same}), and the induction hypothesis,
$$
|\sigma_{h'}|=|\sigma_h|=\max_j\{l_h^j\}=\max_j\{f_h^j(|h|)\}>l_h^i=f_h^i(|h|).
$$
Thus by~(\ref{eq:f-same}), we have
$$
f_{h'}^i(|h'|)=f_h^i(|h|)+1\le \max_{j\neq i}\{f_h^j(|h|)\}=\max_{j\neq i}\{f_{h'}^j(|h'|)\}.
$$
The above implies that
$$
\max_j\{f_{h'}^j(|h'|)\}=\max_j\{f_{h}^j(|h|)\}=|\sigma_h|=|\sigma_{h'}|,
$$
as needed.
We next prove the individual statements of the lemma assuming 
$t=k+1$ (i.e., $t$ is the time of the last event in $h'$).

\smallskip
\smallskip

\textit{Proof of Lemma~\ref{thm:pr}.\ref{thm:pr:context21}}. 
Suppose that at $t$, a process $p_k$ joins an epoch $e_k > e_j$, and $p_k$
is the leader of $e_k$. Let $m$ be a message that was originally broadcast 
by a process $p_j$ in an epoch $e_j$ at time $t_j$.  Suppose further that $p_k$ does not speculatively
deliver any messages when joining $e_k$ and $m$ is the last message delivered by
$p_k$ before joining $e_k$. We prove that $\Theta_k(t) = \Theta_j(t_j)$.

Let $s_k$ be the time at which $p_k$ delivers $m$. Since $m$ is the last message
delivered by $p_k$ before joining $e_k$, $\Sigma_k$ does not change after it was
updated by the code in line~\ref{passive:apply} and before $p_k$ joins $e_k$.
By the $\confchng$ handler code, $\Theta_k(t)$ is computed by first assigning
it the value of $\Sigma_k(t)$ (line~\ref{pr:code:tstate-assign}), and then
applying the state updates speculatively delivered in the $\confchng$
upcall. Since by assumption, no state updates are speculatively delivered at
$t$, $\Sigma_k(t) = \Sigma_k(s_k)$, and so we have
$\Theta_k(t) = \Sigma_k(s_k)$.  By the induction hypothesis for
Lemma~\ref{thm:pr}.\ref{thm:pr:context}, $\Sigma_k(s_k) = \Theta_j(t_j)$, and
therefore, $\Theta_k(t) = \Theta_j(t_j)$, as needed. \qed

\smallskip
\smallskip

\textit{Proof of Lemma~\ref{thm:pr}.\ref{thm:pr:context22}}. 
Consider a message $m$
that was originally broadcast by a process $p_j$ at time $t_j$
in an epoch $e_j$. We consider two cases:

\renewcommand{\theenumi}{(\alph{enumi})}
\begin{enumerate}
  \item \label{context22:1}
Suppose first that at $t$, there exists a process process $p_k$ that joins an epoch
$e_k > e_j$, and $p_k$ is the leader of $e_k$. Suppose further that $m$ is the
last message speculatively delivered by $p_k$ when joining $e_k$. If no process delivers
$m$ before $t$, then the required holds by the induction hypothesis. Suppose that
there exists a process $p_i$ that delivers $m$ at time  $t_i < t$. We prove that
$\Theta_k(t) = \Theta_j(t_j)$. 

Let $\confchng_k(\langle e_k, \_, \_ \rangle, m_1..m_l)$ be the configuration
change action executed by $p_k$ at $t$.  Since $m$ is the last message
speculatively delivered by $p_k$ when joining $e_k$ (i.e., $m_l = m$), 
by \Consistency{} ((b)/if),
\begin{multline}
\forall r.\, 
1 \le r \le l \implies p_i \text{~delivers~} m_r\ \wedge\\
1 < r \le l \implies p_i \text{~delivers~} m_{r-1} \text{~before~} m_r\ \wedge\\
\forall m'.\, p_i \text{~does not deliver~} m' \text{~after~} m_{r-1}
\text{~and~before~} m_r.
\label{eq:spec-msg}
\end{multline}
Let $t_1$ be the
time at which $p_i$ delivers $m_1 = \langle \_, \_, \delta_1 \rangle$, 
and $\Sigma_i(t_1)^-$ be the value of $\Sigma_i$ to which
$\delta_1$ is applied in line~\ref{passive:apply}. 
We show that 
\begin{equation}
\Sigma_i(t_1)^- = \Sigma_k(t).
\label{eq:equal-states}
\end{equation}
Assume by contradiction that $\Sigma_i(t_1)^- \neq \Sigma_k(t)$.  Then, by the
induction hypothesis for Lemma~\ref{thm:pr}.\ref{thm:pr:main},
there exists a message $m'$ such that
either \emph{(i)} $p_i$ delivers $m'$ before delivering $m_1$ and $p_k$ does not
deliver $m'$ before joining $e_k$, or \emph{(ii)} $p_k$ delivers $m'$ before
joining $e_k$ and $p_i$ does not deliver $m'$ before delivering $m_1$.  Suppose
that $\emph{(i)}$ holds. Since $p_k$ does not deliver $m'$ before joining $e_k$,
by \Consistency{} ((b)/only if), $p_k$ speculatively delivers $m'$ before $m_1$
when joining $e_k$. However, since $m_1$ is the first message speculatively
delivered by $p_k$ when joining $e_k$, this is a contradiction.  Suppose that
$\emph{(ii)}$ holds. Then, by \Consistency{} ((b)/if), $p_i$ must deliver $m'$
before delivering $m_1$ contradicting the assumption that $p_i$ does not deliver
$m'$ before delivering $m_1$.  Thus,~(\ref{eq:equal-states}) holds, and
from~(\ref{eq:equal-states}) and~(\ref{eq:spec-msg}), and the code in
line~\ref{pr:code:state-assign}, we get $\Theta_k(t) = \Sigma_i(t_i)$.  By the
induction hypothesis for Lemma~\ref{thm:pr}.\ref{thm:pr:context},
$\Sigma_i(t_i) = \Theta_j(t_j)$, and therefore, $\Theta_k(t) = \Theta_j(t_j)$,
as needed.

\item
Suppose next that at $t$, there exists a process $p_i$ which delivers a message $m$
that was originally broadcast by a process $p_j$ in epoch $e_j$. If $m$ is not
the last message speculatively delivered by any process when joining an epoch $>e_j$ before $t$,
then the required holds by the induction hypothesis. Otherwise, let $p_k$ be a process
that speculatively delivers $m$ when joining an epoch $e_k > e_j$ at a time $t_k < t$, and
assume that $m$ is the last message delivered by $p_k$ when joining $e_k$. Then instantiating
the argument in~\ref{context22:1} with $t=t_k$, we obtain $\Theta_k(t_k) = \Theta_j(t_j)$,
as needed.
\end{enumerate}
\qed

\smallskip
\smallskip

\textit{Proof of Lemma~\ref{thm:pr}.\ref{thm:pr:context}}.
Let $m'$ be the message delivered by $p_i$ at $t$.
Consider the latest time $s \le T$ at which $p_i$ delivers a message, and let
$m$ be the message delivered by $p_i$ at $s$. Thus,
\begin{equation}
\Sigma_i(t) = \app(\Sigma_i(s), \delta) \wedge m' = \langle \_, \_, \delta \rangle.
\label{eq:sigma-after}
\end{equation}
Let $p_k$ be the processes that broadcast $m'$ at time $t_k < t$. We prove that
$$
\Sigma_i(t) = \Theta_k(t_k).
$$
By the induction hypothesis for Lemma~\ref{thm:pr}.\ref{thm:pr:context},
there exists a process $p_j$ which broadcast $m$ at time $t_j < s$ and
\begin{equation}
\Sigma_i(s) = \Theta_j(t_j).
\label{eq:pr:context:ic}
\end{equation}
Let $e_j = \epochOf(\bcast_j(m))$ and 
$e_k=\epochOf(\bcast_k(m'))$. 
By definition of $\epochOf$, 
$\confchng_j(\langle e_j, \_, \_ \rangle,\_)$ 
(respectively, $\confchng_j(\langle e_k, \_, \_ \rangle,\_)$) is the 
latest configuration change event preceding $\bcast_j(m)$ (respectively, $\bcast_k(m')$)
at $p_j$ (respectively, $p_k$). 
By line~\ref{line:curleader-set}, $\curleader_j(t_j) = \primary(e_j)$
and $\curleader_k(t_k) = \primary(e_k)$. Since a process can only broadcast
a message if the precondition in line~\ref{line:curleader-pre} holds, 
we have $\curleader_j(t_j) = p_j$ and $\curleader_k(t_k) = p_k$. Hence, 
$$
p_j = \primary(e_j) \wedge p_k = \primary(e_k).
$$
We consider two cases. Suppose first that $e_j = e_k$. Then
$$
p_j=p_k=\primary(e_j)=\primary(e_k). 
$$
By Local Order, $t_j < t_k$ and 
for all times $t_j < t_j' < t_k$, $p_j$ does not broadcast any messages 
at $t_j'$. Thus, 
by the code in 
lines~\ref{passive:exec}--\ref{rep_bcast}, we have
$$
\exists c.\, \Theta_k(t_k) = \app(\Theta_j(t_j), \delta) \wedge 
\Theta_j(t_j) \stackrel{c}{\rightarrow} \langle \_, \delta \rangle \wedge
m' = \langle \_, \_, \delta \rangle.
$$
By~(\ref{eq:pr:context:ic}), the above implies 
$$
\Theta_k(t_k) = \app(\Sigma_i(s), \delta) \wedge 
m' = \langle \_, \_, \delta \rangle.
$$
Thus, by~(\ref{eq:sigma-after}), we have
$$
\Sigma_i(t) = \Theta_k(t_k),
$$
as needed.

Suppose next that $e_j \neq e_k$. Since $p_i$ delivers $m$ before $m'$,
by \Consistency{} ((a)/only if),
either $p_k$ delivers $m$ before joining $e_k$, or 
$p_k$ speculatively delivers $m$ when joining $e_k$.
Let $t_k'$ be the time at which $p_k$ joins $e_k$.
Suppose first that $\deliv_k(m)$ occurs at time
$s_k < t_k'$. 
Since by Global Order, $e_j < e_k$, by Lemma~\ref{lem:aux-deliver},
$p_k$ does not deliver any messages after delivering $m$ and
before joining $e_k$
and does not speculatively deliver any messages 
when joining $e_k$. Thus, by the induction hypothesis 
for Lemma~\ref{thm:pr}.\ref{thm:pr:context21},
\begin{equation}\label{ind-hyp-context21}
\Theta_k(t_k') = \Theta_j(t_j).
\end{equation}
Since $p_i$ does not deliver any messages after delivering
$m$ and before delivering $m'$, by Local Order, $p_k$
does not broadcast any messages in-between $t_k'$ and $t_k$.
Together with~(\ref{ind-hyp-context21}), this implies 
$$
\exists c.\, \Theta_k(t_k) = \app(\Theta_j(t_j), \delta) \wedge 
\Theta_j(t_j) \stackrel{c}{\rightarrow} \langle \_, \delta \rangle
\wedge m' = \langle \_, \_, \delta \rangle.
$$
By~(\ref{eq:pr:context:ic}), the above implies 
$$
\Theta_k(t_k) = \app(\Sigma_i(s), \delta) \wedge 
m' = \langle \_, \_, \delta \rangle.
$$
Thus, by~(\ref{eq:sigma-after}), we obtain,
$$
\Sigma_i(t) = \Theta_k(t_k),
$$ 
as needed.

Suppose that $p_k$ speculatively delivers $m$ when joining $e_k$. 
Then, by Lemma~\ref{lem:aux-deliver2}, 
$m$ is the last message speculatively delivered by $p_k$
when joining $e_k$. Hence, by the induction hypothesis 
for Lemma~\ref{thm:pr}.\ref{thm:pr:context22},
\begin{equation*}
\Theta_k(t_k') = \Theta_j(t_j).
\label{eq:specdel}
\end{equation*}
Since by Lemma~\ref{lem:aux-deliver3},
$p_k$ does not broadcast any messages after joining
$e_k$ and before broadcasting $m'$, the above implies 
$$
\exists c.\, \Theta_k(t_k) = \app(\Theta_j(t_j), \delta) \wedge 
\Theta_j(t_j) \stackrel{c}{\rightarrow} \langle \_, \delta \rangle
\wedge m' = \langle \_, \_, \delta \rangle.
$$
By~(\ref{eq:pr:context:ic}), the above implies 
$$
\Theta_k(t_k) = \app(\Sigma_i(s), \delta) \wedge 
m' = \langle \_, \_, \delta \rangle.
$$
Thus, by~(\ref{eq:sigma-after}), we obtain,
$$
\Sigma_i(t) = \Theta_k(t_k),
$$ 
as needed. \qed

\smallskip
\smallskip

\textit{Proof of Lemma~\ref{thm:pr}.\ref{thm:pr:mnum}}
Let $l_j$ denote the length of the message prefix delivered by 
the process $p_j$. By~(\ref{eq:f-grow}), (\ref{eq:f-same}), and (\ref{eq:f-same2}),
for all processes $p_j$ and times $t' < t$, $f^j_{h'}(t')=f^j_h(t')$. Since by the induction
hypothesis, 
$$
\forall t' < t.\, l_j(t')=f^j_h(t') \wedge \forall 0<t'<t.\, f_h^j(t'-1) \le f^j_h(t') \le f_h^j(t'-1)+1, 
$$
we have
$$
\forall t' < t.\, \forall p_j.\, f^j_{h'}(t')=l_j(t').
$$
Thus, it remains to show that
\begin{equation}
\forall p_j.\, f^j_{h'}(t)=l_j(t) \wedge f^j_{h'}(t-1) \le f^j_{h'}(t) \le f^j_{h'}(t-1)+1.
\label{eq:mnum-to-show}
\end{equation}
We consider the following three cases. First, if the last even in $h'$ is not a delivery
of a message, then $\forall p_j.\, l_j(t-1)=l_j(t)$. Thus, by~(\ref{eq:f-same}) and 
the induction hypothesis 
$$
\forall p_j.\, f^j_{h'}(t)=f_h^j(t-1)=l_j(t-1)=l_j(t), 
$$
and therefore,~(\ref{eq:mnum-to-show}) holds.
Second, suppose that the last event in $h'$ is the delivery of a message $m$ by process
$p_i$ such that the following holds:
$$
l_i(t-1)=\max_j\{l_j(t-1)\}.
$$
Then, 
$$
l_i(t)=l_i(t-1)+1 \wedge \forall p_j\neq p_i.\, l_j(t-1)=l_j(t).
$$
By~(\ref{eq:f-grow}),
$$
\forall p_j\neq p_i.\, f_{h'}^j(t)=f_h^j(t-1)=l_j(t-1)=l_j(t).
$$
Hence,~(\ref{eq:mnum-to-show}) holds for all $p_j\neq p_i$.
Otherwise, by~(\ref{eq:f-grow}),
$$
f_{h'}^i(t)=|\sigma_h|+1,
$$
which by the induction hypothesis for~(\ref{eq:len-bound}) implies that
$$
f_{h'}^i(t)=\max_j\{f_{h}^j(|h|)\}+1=\max_j\{l_j(t-1)\}+1=l_i(t-1)+1=l_i(t).
$$
In addition, the above and the induction hypothesis also yield
$$
f_{h'}^i(t)=l_i(t-1)+1=f_h^i(t-1)+1,
$$
and so~(\ref{eq:mnum-to-show}) follows.
Finally, suppose that the last event in $h'$ is the delivery of a message $m$ by process
$p_i$ such that the following holds:
$$
l_i(t-1)<\max_j\{l_j(t-1)\}.
$$
By~(\ref{eq:f-same}),
$$
\forall p_j\neq p_i.\, f_{h'}^j(t)=f_h^j(t-1)=l_j(t-1)=l_j(t).
$$
Otherwise,
$$
f_{h'}^i(t)=f_h^i(t-1)+1=l_i(t-1)+1=l_i(t),
$$
as needed.

\smallskip
\smallskip

\textit{Proof of Lemma~\ref{thm:pr}.\ref{thm:pr:main}}. 
Let $m'$ be the message delivered by $p_i$ at $t$. Assume that $m'$
is the $l$th message delivered by $p_i$. Let
$m_1..m_l=m'$ be the sequence of the first $l$ messages delivered
by $p_i$, and $t_j^k$ denote the time at which the $k$th
message is delivered by $p_j$. (Note that $t = t_i^l$.) 
Let 
$$
m_l=\langle\_,\_,\delta_l \rangle.
$$
Suppose first that for all processes $p_j \neq p_i$,
the number of messages delivered by $p_j$ before $t$ is $<l$.
Then, we have to show that the following holds provided
$\sigma_{h'}$ and the functions $f_{h'}^j$ for each process $p_j$
are chosen as shown in (\ref{eq:sigma-grow})~and~(\ref{eq:f-grow}):
\begin{align}
&\forall t' \le t.\, \forall p_j.\, \exists k.\,
\Sigma_0 \stackrel{\prefix{\sigma_{h'}}{k}}{\leadsto} \Sigma_j(t') \wedge k=f_{h'}^j(t').
\label{eq:to-prove}
\end{align}
By the induction hypothesis, we have
\begin{align}
&\forall t' < t.\, \forall p_j.\, \exists k.\,
\Sigma_0 \stackrel{\prefix{\sigma_{h}}{k}}{\leadsto} \Sigma_j(t') \wedge k=f_{h}^j(t').
\label{eq:ih1}
\end{align}
Since
$$
\forall c.\, \forall k.\, k \le |\sigma_h| \implies \prefix{\sigma_h}{k} = \prefix{(\sigma_h \cdot c)}{k} = 
\prefix{\sigma_{h'}}{k},
$$
and from~(\ref{eq:f-grow}),
$$
\forall t' < t.\, f_{h'}(t')=f_h(t') \wedge f_{h'}(t') \le |\sigma_h|,
$$
we get
$$
\forall t' < t.\, \forall c.\, \exists k.\, (\prefix{\sigma_h}{k} = \prefix{(\sigma_h \cdot c)}{k} \wedge k=f_{h'}(t')).
$$
Thus,~(\ref{eq:ih1}) implies
\begin{align*}
&\forall t' < t.\, \exists k.\, \Sigma_0 \stackrel{\prefix{\sigma_{h'}}{k}}{\leadsto} \Sigma_j(t') 
\wedge k=f_{h'}^j(t').
\end{align*}
Furthermore, since
$$
\forall p_j.\, p_j \neq p_i \implies \Sigma_j(t-1)=\Sigma_j(t),
$$
we also have
\begin{align*}
&\forall t' \le t.\, \forall p_j.\, (p_j \neq p_i \implies
\exists k.\, \Sigma_0 \stackrel{\prefix{\sigma_{h'}}{k}}{\leadsto} \Sigma_j(t') \wedge k=f_{h'}^j(t'))\ \wedge\\
&\forall t' < t.\, \exists k.\, \Sigma_0 \stackrel{\prefix{\sigma_{h'}}{k}}{\leadsto} \Sigma_i(t') 
\wedge k=f_{h'}^j(t').
\end{align*}
Thus, it remains to show that
\begin{align*}
&\Sigma_i(t) = \app(\Sigma_i(t-1), \delta) \wedge
\exists c.\, \Sigma_i(t-1) \stackrel{c}{\rightarrow} \langle \_, \delta_l \rangle,
\end{align*}
which by Proposition~\ref{prop:state-deliver} can be rewritten as
\begin{equation}
\Sigma_i(t_i^l) = \app(\Sigma_i(t_i^{l-1}), \delta_l) \wedge
\exists c.\, \Sigma_i(t_i^{l-1}) \stackrel{c}{\rightarrow} \langle \_, \delta_l \rangle.
\label{eq:find-c0}
\end{equation}
Since $\Sigma_i(t_i^l) = \app(\Sigma_i(t_i^{l-1}), \delta)$, we only need to 
show that
\begin{equation}
\exists c.\, \Sigma_i(t_i^{l-1}) \stackrel{c}{\rightarrow} \langle \_, \delta_l \rangle.
\label{eq:find-c}
\end{equation}
Let $p_k$ be the process that broadcast $m_{l-1}$, and $s_k$ and $e_k$
be the time and the epoch at which this happened, respectively. 
By the induction hypothesis for Lemma~\ref{thm:pr}.\ref{thm:pr:context},
\begin{equation}
\Sigma_i(t_i^{l-1}) = \Theta_k(s_k).
\label{eq:sigma-theta}
\end{equation}
Let $s_k'$ and $e_k'$ be respectively the time and the epoch when $m_l$ is broadcast.
Suppose first that $e_k = e_k'$. Since the leader of $e_k$
is the only process that can broadcast messages in $e_k$,
it also broadcasts $m_{l-1}$. Since $m_{l-1}$ is delivered by $p_i$ before $m_l$ and no
messages are delivered by $p_i$ in-between $m_{l-1}$ and $m_l$, by Local Order,
$s_k < s_k'$ and $p_k$ does not broadcast any messages between $s_k$ and $s_k'$.
Thus, by the code of the $\EXECUTE$ handler, we have
$$
\exists c.\, \Theta_k(s_k) \stackrel{c}{\rightarrow} \langle \_, \delta_l \rangle, 
$$
which by~(\ref{eq:sigma-theta}) implies
$$
\Sigma_i(t_i^{l-1}) \stackrel{c}{\rightarrow} \langle \_, \delta_l \rangle,
$$
where $c$ is the command used to compute $\delta$ by the sender $p_k$
of $m_l$, as needed. 

Suppose next that $e_k \neq e_k'$, let $p_r$ be the leader
of $e_k'$, and $s_r$ be the time when $p_r$ joins $e_k'$. 
Since $p_i$ delivers $m_{l-1}$ before $m_l$,
by \Consistency{} ((b/only if)),
either $p_r$ delivers $m_{l-1}$ before joining $e_k'$, or 
$p_r$ speculatively delivers $m_{l-1}$ when joining $e_k'$.
Suppose first that $\deliv_k(m_{l-1})$ occurs 
before $p_r$ joins $e_k'$. 
Since by Global Order, $e_k < e_k'$,
by Lemma~\ref{lem:aux-deliver}, 
$p_r$ does not deliver any messages after delivering $m_{l-1}$ and before joining $e_k'$
and does not speculatively deliver any messages 
when joining $e_k'$. Thus, by the induction hypothesis 
for Lemma~\ref{thm:pr}.\ref{thm:pr:context21},
$$
\Theta_{r}(s_r) = \Theta_k(s_k).
$$
Since $p_i$ does not deliver any messages after delivering
$m_{l-1}$ and before delivering $m_l$, by Lemma~\ref{lem:aux-deliver3}, $p_r$
does not broadcast any messages in-between $s_r$ and $s_k'$.
Thus, by the code of the $\EXECUTE$ handler,
$$
\exists c.\, \Theta_r(s_r) \stackrel{c}{\rightarrow} \langle \_, \delta_l \rangle
\wedge m_{l} = \langle \_, \_, \delta_l \rangle, 
$$
which implies
$$
\exists c.\, \Theta_k(s_k) \stackrel{c}{\rightarrow} \langle \_, \delta_l \rangle
\wedge m_{l} = \langle \_, \_, \delta_l \rangle.
$$
Thus, by~(\ref{eq:sigma-theta}) we get~(\ref{eq:find-c}), as needed. 
Next, suppose that $p_r$ speculatively delivers $m_{l-1}$ when joining $e_k'$.
Then, by Lemma~\ref{lem:aux-deliver2}, 
$m_{l-1}$ is the last message speculatively delivered by $p_r$
when joining $e_k'$. Hence, by the induction hypothesis 
for Lemma~\ref{thm:pr}.\ref{thm:pr:context22},
$$
\Theta_r(s_r) = \Theta_k(s_k).
$$
Thus, applying the same reasoning as above, we again get~(\ref{eq:find-c}), as needed. 

Finally, suppose that there exists a process $p_r \neq p_i$ such that
$p_r$ delivers $>l$ messages before $t$.
Thus, we have to show that~(\ref{eq:to-prove}) holds 
given that $\sigma_{h'}$ and $f_{h'}^j$ for each process $p_j$
are chosen as shown in~(\ref{eq:sigma-same}) and~(\ref{eq:f-same}).
Then,~(\ref{eq:to-prove}) can be re-written as follows
\begin{align}
&\forall t' \le t.\, \forall p_j.\, \exists k.\,
\Sigma_0 \stackrel{\prefix{\sigma_{h}}{k}}{\leadsto} \Sigma_j(t') \wedge k=f_{h'}^j(t').
\label{eq:to-prove2}
\end{align}
By the induction hypothesis and~(\ref{eq:f-same})
\begin{align*}
&\forall t' \le t.\, \forall p_j.\, p_j \neq p_i \implies \exists k.\,
\Sigma_0 \stackrel{\prefix{\sigma_{h}}{k}}{\leadsto} \Sigma_j(t') \wedge k=f_{h'}^j(t')\ \wedge\\
&\forall t' < t.\, \exists k.\,
\Sigma_0 \stackrel{\prefix{\sigma_{h}}{k}}{\leadsto} \Sigma_i(t') \wedge k=f_{h'}^i(t').
\label{eq:to-prove2}
\end{align*}
It remains to show that
\begin{equation}
\exists k.\, \Sigma_0 \stackrel{\prefix{\sigma_{h}}{k}}{\leadsto} \Sigma_i(t) \wedge k=f_h^i(t-1)+1.
\label{eq:last-to-show}
\end{equation}
Let $\ell_k(s)$ denote the length of the message prefix delivered by process $p_k$ at or before
time $s$. By the induction hypotheses for 
Lemma~\ref{thm:pr}.\ref{thm:pr:mnum}, we get
$$
f_h^r(t-1)=\ell_h^r(t-1) > \ell_h^i(t-1)=f_h^i(t-1).
$$
Since by the induction hypothesis for Lemma~\ref{thm:pr}.\ref{thm:pr:mnum},
$f_h^r$ never skips a value, there exists a time $t'' \le t-1$ such that
$$
f_h^r(t'')=\ell_h^r(t'') = \ell_h^i(t-1)=f_h^i(t-1).
$$
Let $s_r$ and $s_i$ be the earliest times for which the above holds, i.e., 
$$
f_h^r(t'')=f_h^r(s_r)=\ell_h^r(s_r) = \ell_h^i(s_i)=f_h^i(s_i)=f_h^i(t-1).
$$
Then, by the induction hypothesis, we get
$$
\exists k.\, \Sigma_0 \stackrel{\prefix{k}{\sigma_h}}{\leadsto} \Sigma_r(s_r) \wedge
\Sigma_0 \stackrel{\prefix{k}{\sigma_h}}{\leadsto} \Sigma_i(s_i) \wedge k=f_h^r(s_r).
$$
Thus,
$$
\Sigma_r(s_r)=\Sigma_i(s_i).
$$
By the induction hypothesis for Lemma~\ref{thm:pr}.\ref{thm:pr:mnum}, 
the value of $f_h^r$ remains the same until the time $s_r'>s_r$ at which 
$p_r$ delivers the next message. Hence,
\begin{equation}
\Sigma_r(s_r'-1)=\Sigma_r(s_r)=\Sigma_i(s_i).
\label{eq:equal-states2}
\end{equation}
By the induction hypothesis
$$
\exists c.\, \sigma_h[f_h^r(s_r)+1]=c \wedge 
\exists \delta.\,\Sigma_r(s_r')=\app(\Sigma_r(s_r'-1), \delta) \wedge \Sigma_r(s_r'-1) \stackrel{c}{\rightarrow} 
\langle \_,\delta \rangle.
$$
Since by the code, $\Sigma_r(s_r'-1)$ can only transition to $\Sigma_r(s_r')$ by applying
the delta included in the payload of the message delivered at $s_r'$.
By Agreement and Total Order, this message
must be the same as the one $p_i$ delivers at $t$, i.e., $m_l=\langle \_,\_,\delta_l \rangle$.
Hence, we get $\delta=\delta_l$, and therefore,
\begin{equation}
\exists c.\, \sigma_h[f_h^r(s_r)+1]=c \wedge 
\Sigma_r(s_r')=\app(\Sigma_r(s_r'-1), \delta_l) \wedge \Sigma_r(s_r'-1) \stackrel{c}{\rightarrow} 
\langle \_,\delta_l \rangle.
\label{eq:derive-state}
\end{equation}
Since $f_h^i(s_i)=f_h^i(t-1)$, by the induction hypothesis for Lemma~\ref{thm:pr}.\ref{thm:pr:mnum}, 
the value of $f_h^i$ remains the same until the time $t>s_i$ at which 
$p_r$ delivers $m_l$. Hence, by~(\ref{eq:equal-states2}), 
\begin{equation}
\Sigma_r(s_r'-1)=\Sigma_r(s_r)=\Sigma_i(s_i)=\Sigma_i(t-1).
\label{eq:more-equal-states}
\end{equation}
Since at $t$, $p_i$ delivers $m_l=\langle \_,\_,\delta_l$ and applies it to
$\Sigma_i(t-1)$, from~(\ref{eq:derive-state}) and~(\ref{eq:more-equal-states}), we get
$$
\exists c.\, \sigma_h[f_h^r(s_r)+1]=c \wedge 
\Sigma_i(t)=\app(\Sigma_r(t-1), \delta) \wedge \Sigma_t(t-1) \stackrel{c}{\rightarrow} 
\langle \_,\delta_l \rangle.
$$
Since $f_h^r(s_r)=f_h^i(t-1)$, the above implies
$$
\exists c.\, \sigma_h[f_h^i(t-1)+1]=c \wedge 
\Sigma_i(t)=\app(\Sigma_r(t-1), \delta) \wedge \Sigma_t(t-1) \stackrel{c}{\rightarrow}, 
\langle \_,\delta_l \rangle.
$$
which combined with the induction hypothesis yields~(\ref{eq:last-to-show}),
as needed.
\end{proof}
We are now ready to prove the main result:
\begin{proof}[Proof of Theorem~\ref{thm:correctness-passive}]
Consider a history of the passive replication algorithm in Figure~\ref{fig:app-layer}.
By Lemma~\ref{thm:pr}, there exists a sequence of commands $\sigma = \sigma_1..\sigma_l$ such that
for all $1 \le k \le l$, $\execute(\sigma_k)$ is an event in $h$, and the conditions
given by Lemmas~\ref{thm:pr:main}--\ref{thm:pr:context22} hold for 
all times $t$ in $h$. By Lemma~\ref{thm:pr}.\ref{thm:pr:main}, for all $k$,
there exists a committed state $\Sigma_k$ and a response $r_k$ such that
$\Sigma_k \stackrel{\sigma_k}{\rightarrow} \langle r_k, \_ \rangle$. 
Let $\pi=\sigma_1,r_1,..,\sigma_l,r_l$. Then, by induction on the length of $\pi$,
it follows that $\pi$ is a history of the specification in Figure~\ref{fig:passive-spec},
which implies the required.
\end{proof}

\section{Proof of Correctness for the Primary-Order Broadcast Protocol
  (Theorem~\ref{thm:correctness2})} 
\label{sec:proof-po}

\figref{fig:vp_invariants-rem} summarizes additional invariants that, together
with the invariants listed in \figref{fig:vp_invariants}, are used to prove the
correctness of the protocol.
The proof that the
Primary-Order Broadcast protocol satisfies the invariants in
\figref{fig:vp_invariants} and the properties in \figref{fig:rpob-properties}
(except for Property~\ref{prop:total}, which is not required) is virtually
identical to the one for the atomic broadcast protocol, and therefore omitted.

\begin{figure}[h!]
	{\small	
	\begin{enumerate}
		
		  \setcounter{enumi}{4}
				
      \item \label{inv:new_state_prefix}
	Assume the leader of an epoch $\eid$ sends $\ACCEPT(\eid, k, \vid)$ while
	having $\msg = \msgi$. Whenever any process $\pid$ has $\epoch = \eid' > \eid$ 
	and $\msg[k] = \vid$, it also has $\prefixmsg = \msgi$.

                       \item \label{inv:prefix-of-leader}
                  If a process $\pid$ receives
          $\ACCEPT(\eid, k, \vid)$ and replies with $\ACCEPTACK$, then after
          this and while $\epoch = \eid$ at $\pid$ we have
          $\prefixmsg = \prefixmsg[k][\msgi]$, where $\msgi$ was the value of
          the $\msg$ array at the leader of $\eid$ when it sent
          $\ACCEPT(\eid, k, \vid)$.
	
		\item \label{inv:uniqueness_accept} For any messages $\ACCEPT(e, k, m_1)$ and
		$\ACCEPT(e, k, m_2)$, we have $m_1 = m_2$.

\item \label{inv:accept-commit}
	Assume the leader of an epoch $\eid$ sends $\ACCEPT(\eid, k,
        \vid)$. If a process $p_i$ delivers $\vid$ at an epoch
        $\eid'\geq \eid$ after handling $\COMMIT(\eid', k')$, then $k=k'$.

	\end{enumerate}
}
	\caption{Remaining invariants. }
	\label{fig:vp_invariants-rem}
	
\end{figure}

\begin{proof}[Proof of \invref{inv:prefix-of-leader}.]
Assume that a process $\pid$ at $\eid$ receives $\ACCEPT(\eid, k,
\vid)$ and replies with the message $\ACCEPTACK$. We prove that, after
the transition and while $\epoch = \eid$, $\pid$ has $\prefixmsg =
\prefixmsg[k][\msgi]$ where $\msgi$ is the value of the array $\msg$
at the leader of $\eid$ when it sent $\ACCEPT(\eid, k,s
\vid)$. 

By the protocol, since $\pid$ handles $\ACCEPT(\eid, k, \vid)$,
by line \ref{vp_accept_pre}, $\pid$ has $\epoch=\eid$. This
implies that $\pid$ is a follower in $\eid$ and that $\pid$ has handled
the $\NEWSTATE(\eid, \msgi')$ message sent by the leader $\pid[j]$ of
$\eid$. After handling this message, $\pid$ has 
$\msg = \msgi'$ where $\msgi'$ is the value of $\msg$ at $\pid[j]$
when it sent the $\NEWSTATE(\eid, \msgi')$
message. Let $k_1=\length(\msgi')$.  
We know that $\pid[j]$ does not overwrite
any position $\leq k_1$ from the moment it sends 
$\NEWSTATE(\eid,
\msgi')$ while in $\eid$. Then
$\prefixmsg[k_1][\msgi]=\msgi'$. The same holds of $\pid$ from the moment 
 it handles $\NEWSTATE(\eid, \msgi')$ and while it stays in $\eid$. Thus, (*) $\pid$ has
$\prefixmsg[k_1][\msg]=\prefixmsg[k_1][\msgi]$ after handling
$\NEWSTATE(\eid, \msgi')$ and while in $\eid$.
We now prove that $\pid$ receives an $\ACCEPT$ message for every
position $k^{*}$ such that $k_1<k^{*}<k$ before receiving
$\ACCEPT(\eid, k, \_)$. We prove it by induction on $k$. Assume that
the invariant holds for $k_2$ such that $k_1<k_2<k$. We now show it for
$k_2=k$. Consider the case when
 $k_1<k_2-1$; otherwise the required holds trivially.
If $\pid$ receives $\ACCEPT(\eid, k_2, \vid)$, then the
leader $\pid[j]$ has $\next=k_2$ when it sent it. The leader $\pid[j]$
sets $\next=k_1$ at line~\ref{vp_quorum_of_new_state_acks_next_opt} before
setting $\status=\LEADER$. After that, $\pid[j]$ only updates $\next$
at line~\ref{vp_broadcast_next} by increasing it by one. Therefore,
$\pid[j]$ has
$\next=k_2-1$ at some point before sending $\ACCEPT(\eid, k_2,
\vid)$. Since  $k_1<k_2-1$,  the leader $\pid[j]$ sends $\ACCEPT(\eid,
k_2-1, \_)$ before sending $\ACCEPT(\eid, k_2, \vid)$. Since channels are FIFO,
$\pid$ receives $\ACCEPT(\eid, k_2-1, \_)$ 
before receiving $\ACCEPT(\eid,
k_2, \vid)$. Then, by this and the induction hypothesis, $\pid$
receives an $\ACCEPT$ message for all positions $<k_2$ (and $>k_1$)
before receiving $\ACCEPT(\eid, k_2, \_)$, as required.
Hence, $\pid$ receives an $\ACCEPT$ message for every
position $k^{*}$ such that $k_1<k^{*}<k$ before receiving
$\ACCEPT(\eid, k, \_)$. When $\pid[j]$ sends $\ACCEPT(\eid,
k^{*}, \vid^{*})$, it sets $\msg[k^{*}]=\vid^{*}$. 
While $\epoch = \eid$, $\pid[j]$ does not overwrite $\msg[k^*]$, so that
$\msgi[k^{*}]=\vid$. When $\pid$ handles $\ACCEPT(\eid,
k^{*}, \vid^{*})$, it sets $\msg[k^{*}]=\vid^{*}$.  
Since $\pid$ does not overwrite $\msg[k^{*}]$ while $\epoch = \eid$,
after handling $\ACCEPT(\eid, k^{*}, \vid^{*})$, 
$\pid$ has $\msg[k^{*}]=\msgi[k^{*}]$ while in
$\eid$. Then by (*), $\pid$ has $\prefixmsg[k][\msg] =
\prefixmsg[k][\msgi]$ after handling $\ACCEPT(\eid, k, \vid)$
and while in $\eid$, as required.
\end{proof}

\begin{lemma}
  \label{lem:epochbefore}
Assume that the leader of an epoch $\eid$ sends $\ACCEPT(\eid, k,
m)$. Let $\eid'>\eid$ be an epoch and $p_i$ its leader. If $p_i$ sends
$\NEWSTATE(\eid',msg',\_)$ such that $\msg'[k]=m$, then $p_i$ has
$\epoch\geq \eid$ right before it sends $\NEWSTATE(\eid',msg',\_)$.
\end{lemma}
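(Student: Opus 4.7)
The plan is to establish an auxiliary monotonicity invariant that ties the presence of $m$ at position $k$ in any process's log to a lower bound on that process's epoch, and then to read the lemma off as a corollary.

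First, observe that by~(\ref{prop:env-1}) each application message is broadcast at most once, and along the protocol's normal path each $\bcast(m)$ induces a unique $\ACCEPT(\_,\_,m)$ message. Hence the pair $(e, k)$ appearing in the hypothesis is uniquely determined by $m$.

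Second, I would prove by induction on the execution the following invariant (Inv $\star$): whenever a process $q$ has $\msg[k] = m$, it also has $\epoch \geq e$. The only transitions that assign $\msg[k] := m$ are
(i) the $\ACCEPT(e^*, k, m)$ handler at line~\ref{vp_accept_assignment}, where the precondition at line~\ref{vp_accept_pre} together with uniqueness of the $\ACCEPT$ for $m$ forces $e^* = e$, so $\epoch = e$ at $q$;
(ii) the leader's own assignment at line~\ref{vp_broadcast_assign}, where the sender is the leader of $e$ and therefore has $\epoch = e$; and
(iii) the $\NEWSTATE(e^{**}, \vmsg, \_)$ handler at line~\ref{vp_newstate_msg_assignment} with $\vmsg[k] = m$. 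In case (iii), the leader of $e^{**}$ had $\msg[k] = m$ just before entering its own $\NEWCONFIG$ handler (which does not touch $\msg$); by the induction hypothesis its pre-handler epoch was $\geq e$, and using the precondition $\newepoch = e^{**}$ of the $\NEWCONFIG$ handler together with the standing invariant $\epoch \leq \newepoch$, we get $e^{**} \geq e$. After $q$ processes the $\NEWSTATE$ it thus has $\epoch = e^{**} \geq e$. All other transitions preserve (Inv $\star$) because $\epoch$ is monotone non-decreasing and they leave $\msg[k]$ untouched.

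The lemma follows by applying (Inv $\star$) at $p_i$ immediately before it sends $\NEWSTATE(e', msg', \_)$ in its $\NEWCONFIG(e', \_)$ handler: the handler does not modify $\msg$, so $p_i$ has $\msg[k] = msg'[k] = m$ at that moment, whence $\epoch \geq e$. The main obstacle is case (iii) of the inductive step, which requires pinpointing the sender's trace position (just before its $\NEWCONFIG$ handler fires) and threading $e^{**} \geq e$ through the epoch update using the $\epoch \leq \newepoch$ discipline.
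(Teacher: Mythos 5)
Your proof is correct, but it takes a genuinely different route from the paper's. The paper argues directly about the \emph{last} assignment of $\msg[k]$ at $p_i$ before the send: it was made either by an $\ACCEPT(e^*,k,m)$ handler, in which case \equref{prop:env-1} forces $e^*=\eid$, or by a $\NEWSTATE(e^*,\_,\_)$ handler, in which case \invref{inv:origin} together with \equref{prop:env-1} rules out $e^*<\eid$ (otherwise $m$ would have been broadcast twice); monotonicity of $\epoch$ then yields the bound. Your (Inv~$\star$) instead re-derives this by a trace induction that never invokes \invref{inv:origin}: in your case (iii) the appeal to \invref{inv:origin} is replaced by the induction hypothesis applied to the \emph{sender} of the $\NEWSTATE$ message, combined with the precondition $\newepoch = e^{**}$ at line~\ref{vp_new_config_pre} and the standing discipline $\epoch\le\newepoch$. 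This buys a self-contained, strictly stronger and reusable statement (the bound holds at every process at every time it holds $m$ at position $k$), at the cost of redoing inductive work the paper already banked in \invref{inv:origin}; your case (ii) also makes explicit the leader's own-broadcast assignment, which the paper's case split leaves implicit. One fine point about your last step: inside the $\NEWCONFIG(\eid',\_)$ handler, $\epoch$ has already been set to $\eid'$ (line~\ref{vp_new_config_epoch}) before the $\NEWSTATE$ is sent, so applied literally ``immediately before the send'' your invariant only yields the vacuous $\eid'\ge\eid$. The non-trivial content of the lemma --- and the form in which the paper actually uses it in the proof of \invref{inv:new_state_prefix} --- is that the value of $\epoch$ right before $p_i$ \emph{handles} $\NEWCONFIG(\eid',\_)$ is $\ge\eid$. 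Your invariant delivers exactly this once you relocate its application to that earlier instant, which is legitimate because, as you note, the handler does not touch $\msg$, so $\msg[k]=m$ already holds there; the fix is a one-line change, not a gap in the argument.
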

\begin{proof}
  Let $e_0$ be the value of $\epoch$ at $p_i$ right before handling
  $\NEWCONFIG(\eid', \_)$. The process $p_i$ has $\msg[k]=\vid$ when it sends
  $\NEWSTATE(\eid',msg',\_)$. Then $p_i$ sets $\msg[k]=\vid$ after handling
  $\ACCEPT(e^*, k, m)$ or $\NEWSTATE(e^*, msg_0, \_)$ such that $\msg_0[k] = m$
  in an epoch $e^*\le e_0$. In the former case, by \equref{prop:env-1},
  $\eid=e^* \le e_0$, which implies required.  In the latter case, we prove by
  contradiction that $e^*\geq \eid$. Assume that $e^*<\eid$. By
  Invariant~\ref{inv:origin}, there exists an epoch $<\eid$ in which $m$ is
  broadcast. But by \equref{prop:env-1}, this is impossible. Thus,
  $e_0 \geq e^*\geq \eid$, as required.
\end{proof}
  
\begin{proof}[Proof of \invref{inv:new_state_prefix}.]
We prove the invariant by induction on $\eid'$. Assume the invariant holds for each
$\eid' < \eid''$.  We now show that it holds for $\eid' = \eid''$ by induction on the length
of the protocol execution. We only consider the most interesting transition in
line~\ref{vp_new_state}, where a process $\pid$ receives $\NEWSTATE(\eid'',
\vmsg'')$ from the leader of $\eid''$ and sets $\msg$ to $\vmsg''$
(line~\ref{vp_newstate_msg_assignment}). Assume that
after the transition $p_i$ has $\msg''[k]=\vid$; then $\vmsg''[k]=\vid$. To show
that after the transition $p_i$ has $\prefixmsg = \vmsg$, it is sufficient to
prove that $\prefixmsg[@][\msgi''] = \vmsg$. Let $\pid[j]$ be the leader of
$\eid''$, i.e., the process that sends 
$\NEWSTATE(\eid'', \vmsg'')$, and let $\eid_0<\eid''$ be the value of $\epoch$ at
$\pid[j]$ right before receiving  
the $\NEWCONFIG(\eid'', \_)$ message at line \ref{vp_new_config}.
By Lemma~\ref{lem:epochbefore}, $\eid \leq \eid_0$.

Consider first the case when $\eid<\eid_0$. We have that $\pid[j]$ has
$\epoch=\eid_0$ at some point. Therefore, $\pid[j]$ sends or handles
$\NEWSTATE(\eid_0, \msgi_0)$.
Right after this $\pid[j]$ has
$\msg[k]=\msgi_0[k]$, and 
$\pid[j]$ does not overwrite $\msg[k]$ while $\epoch = \eid_0$.
Then, when $\pid[j]$ handles 
$\NEWCONFIG(\eid'', \_)$, it has $\msg[k] =\msgi_0[k]=\msgi''[k] = m$.
Since $\eid_0<\eid''$, by the induction hypothesis, from the moment
when $\pid[j]$ sends or handles $\NEWSTATE(\eid_0, \msgi_0)$
and while $p_j$ is in $\eid_0$, it has $\prefixmsg[@][\msg] = \msgi$.
Then $\pid[j]$ has $\prefixmsg[@][\msg] = \msgi$ right before
handling $\NEWCONFIG(\eid'', \_)$. Hence, $\prefixmsg[@][\msgi''] = \msgi$, as required. 

Consider now the case when $\eid = \eid_0$. There are two possibilities: either
$\pid[j]$ is the leader of $\eid$, or it is a follower in
$\eid$. We only consider the latter case, since the former is analogous. 
We know that $\pid$ receives $\NEWSTATE(\eid'', \msgi'')$ with
$\msgi''[k]=\vid$ from $\pid[j]$.  Therefore, $\pid[j]$ has
$\msg[k]=\vid$ before sending $\NEWSTATE(\eid'', \msgi'')$. The process
$\pid[j]$ sets $\msg[k]=\vid$ when handling $\ACCEPT(\eid, k, \vid)$ or
$\NEWSTATE(\eid, \msgi''')$ such that $\msgi'''[k]=\vid$. By \invref{inv:origin} and
\equref{prop:env-1}, the latter is impossible. Thus, $\pid[j]$ handles
$\ACCEPT(\eid, k, \vid)$ and replies with $\ACCEPTACK$.
Hence, by \invref{inv:prefix-of-leader}, $\pid[j]$ has
$\prefixmsg[@][\msg] = \msgi$ right before sending $\NEWSTATE(\eid'', \msgi'')$.
Thus, $\prefixmsg[@][\msgi''] = \msgi$, as required.  
\end{proof}

\begin{proof}[Proof of \invref{inv:uniqueness_accept}.]
We have proved that the protocol satisfies \confchgnot and, in particular, 
Property~\ref{prop:wf-1}.
Hence, for any epoch number there is at most one leader. 
Let $\pid$ be the leader of $\eid$.
Then, since $\ACCEPT(\eid, k, \vid[_1])$ and $\ACCEPT(\eid, k, \vid[_2])$ are sent in $\eid$, then
$\pid$ is the only process that can send these messages.

By contradiction, assume $\vid[_1] \neq \vid[_2]$. Then, $\pid$ must have
received $\bcast(\vid[_1])$ and $\bcast(\vid[_2])$.  Let us assume without loss
of generality that it receives $\bcast(\vid[_1])$ before $\bcast(\vid[_2])$.
When $\pid$ handles $\bcast(\vid[_1])$, it sends the
$\ACCEPT(\eid, k, \vid[_1])$ message to the followers. We note that $k$ is the
value stored in the $\next$ variable.  Furthermore, $\pid$ increases $\next$ by
one (line \ref{vp_broadcast_next}) for each application message to
broadcast. Hence, when $\pid$ has to handle $\bcast(\vid[_2])$ while in $\eid$,
it will use a slot number $>k$, which contradicts our assumption.
\end{proof}

\begin{proof}[Proof of Invariant~\ref{inv:accept-commit}.]
  Let $\NEWSTATE(\eid', msg, \_)$ be the $\NEWSTATE$
message sent by the leader of $\eid'$. The process $p_i$ has  $\msg[k']=\vid$ when it handles 
$\COMMIT(\eid', k')$. Then either it has received $\ACCEPT(\eid', k',
\vid)$ or  $msg[k']=\vid$. In the latter case, by Invariant~\ref{inv:origin}, there
exists an epoch $\eid''<\eid'$ such that its leader sends
$\ACCEPT(\eid'', k' , \vid)$. Thus, in both cases, there exists an
epoch $\eid^*$ in which its leader sends $\ACCEPT(\eid^*, k',
\vid)$. Furthermore, we
have that the leader of $\eid$ sends $\ACCEPT(\eid, k,
\vid)$. By \equref{prop:env-1}, $\eid=\eid^*$ and $k=k'$, as required. 
\end{proof}

\begin{proof}[Basic Speculative Delivery Properties]
Let $\eid$ be an epoch at which $p_i$ is the leader. A
  process speculatively delivers an application message when it receives a $\NEWCONFIG$ message. 
  Line~\ref{vp_send_new_config} guarantees that $\NEWCONFIG$ is only sent to the
  leader of $\eid$. Therefore, only $p_i$ can speculatively deliver
  messages in $\eid$, as required.

  A process speculatively delivers an application message in its $\msg$ array at
  a position $>\lastdelivered$ when it receives a $\NEWCONFIG$ message. Assume
  that $p_i$ speculatively delivers $m$ in $\eid$. Then it receives
  $\NEWCONFIG(\eid, \vm)$ and there is a $k$ such that $\msg[k]=m$ and
  $k>\lastdelivered$ at $p_i$ at that moment. The process $p_i$ sets $\msg[k]=m$
  either after handling $\ACCEPT(\eid', k, m)$ or $\NEWSTATE(\eid', msg, \_)$
  such that $msg[k]=m$. In the latter case, by Invariant~\ref{inv:origin}, there
  exists an epoch $\eid''<\eid'$ such that the leader of $\eid''$ has previously
  sent $\ACCEPT(\eid'', k, m)$. Then, in either case, there exists an
  $\ACCEPT(\eid^*, k, m)$ message such that $\eid^*< \eid$. Thus, $m$ has been
  previously broadcast, as required. We now prove that (i) $p_i$ speculatively
  delivers $m$ at most once in $\eid$, and (ii) $p_i$ does not deliver $m$
  before handling $\NEWCONFIG(\eid, \vm)$. We prove both by contradiction.
  \begin{enumerate}[(i)]
  \item  
  Assume that $p_i$ has $\msg[k']=m$ such that $k'\neq k$ when it
  receives $\NEWCONFIG(\eid, \vm)$. Following the same reasoning as before, there exists an
  $\ACCEPT(\eid_1, k', m)$ message such that $\eid_1<\eid$. By
  \equref{prop:env-1}, an application message is only broadcast once, so that $\eid_1=\eid^*$
  and $k'=k$, which contradicts our assumption.
  \item Assume that
  $p_i$ delivers $m$ before handling 
  $\NEWCONFIG(\eid, \vm)$. Then $p_i$ handles a $\COMMIT(\eid_1, k')$
  message such that $\eid_1<\eid$ while $\msg[k']=m$. When $p_i$
  handles $\COMMIT(\eid_1, k')$, it sets $\lastdelivered=k'$. Thus,
  $p_i$ has $\lastdelivered\geq k'$ when it handles
  $\NEWCONFIG(\eid, \vm)$. Therefore, $k'<k$. If $p_i$ has
  $\msg[k']=m$ when it handles $\COMMIT(\eid_1, k')$, then it has handled
  an $\ACCEPT(\eid_1, k', m)$ message or an $\NEWSTATE(\eid_1, msg_1, \_)$
  message such that
  $msg_1[k']=m$.  In the latter case, by Invariant~\ref{inv:origin},
  there exists an epoch $\eid_2<\eid_1$ such that the leader of
  $\eid_2$ has previously sent $\ACCEPT(\eid_2, k', m)$. Then, in
  either case, there
  exists an $\ACCEPT(\eid_3, k', m)$ message such that $\eid_3\le
  \eid_1$. By
  \equref{prop:env-1}, an application message is only broadcast once, then $\eid_3=\eid^*$
  and $k'=k$, which contradicts our assumption.
    \end{enumerate}

  \end{proof}

\begin{lemma}
  \label{thm:integrity}
  If a process delivers an application message $\vid$ while in an epoch $e'$,
  then $\vid$ has been previously broadcast in some epoch $e \le e'$.
\end{lemma}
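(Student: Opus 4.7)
The plan is to trace any delivery of $\vid$ back to its originating $\bcast$ call along the protocol's $\COMMIT \leftarrow \ACCEPT/\NEWSTATE$ message chain, using Invariant~\ref{inv:origin} to collapse $\NEWSTATE$ hops into a single $\ACCEPT$. A process $p$ delivers $\vid$ only by handling some $\COMMIT(e_1, k)$ with $\msg[k] = \vid$ at the moment of the handler, and the precondition at line~\ref{vp_commit_pre} forces $e_1 = e'$. So it suffices to show that some $\ACCEPT(\hat e, k, \vid)$ with $\hat e \le e'$ was ever sent: in the modified protocol (line~\ref{po_broadcast}) the leader of $\hat e$ sends this message only as a direct reaction to a local $\bcast(\vid)$ call executed while the leader's epoch is $\hat e$, which is exactly the statement we want.

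First, I would argue that $p$ must have assigned $\msg[k] = \vid$ at some earlier point, and that this assignment happens only on one of two transitions: handling $\ACCEPT(e^*, k, \vid)$ (line~\ref{vp_accept_assignment}) or handling $\NEWSTATE(e^*, \msgi, \_)$ with $\msgi[k] = \vid$ (line~\ref{vp_newstate_msg_assignment}). In both cases the guards at lines~\ref{vp_accept_pre} and~\ref{vp_new_state_pre}, together with monotonicity of the $\epoch$ variable (which follows from the same guards and line~\ref{vp_probe_pre}), give $e^* \le e'$; otherwise $p$ could not later reach $\epoch = e'$ and still have $\msg[k] = \vid$ preserved for the $\COMMIT$.

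Next, in the $\ACCEPT$ case we are immediately done with $\hat e = e^*$. In the $\NEWSTATE$ case, Invariant~\ref{inv:origin} gives an epoch $\hat e < e^*$ whose leader previously sent $\ACCEPT(\hat e, k, \vid)$. Either way, $\hat e \le e'$, and the leader of $\hat e$ sent $\ACCEPT(\hat e, k, \vid)$ only after a local $\bcast(\vid)$ call in epoch $\hat e$ (line~\ref{po_broadcast}), which is the desired broadcast.

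The main obstacle is the small bookkeeping needed to justify $e^* \le e'$: one has to rule out the scenario where $p$ assigns $\msg[k] = \vid$ in some high epoch $e^* > e'$, then later has $\epoch = e'$ when handling the $\COMMIT$. This is impossible because $\epoch$ only moves upward across $\NEWSTATE$ and $\NEWCONFIG$ handlers, which are the only places where it is reassigned; hence $\epoch = e'$ at the time of the $\COMMIT$ forces all previous assignments to $\msg$ still in force to have happened with $\epoch \le e'$. Everything else is a routine unfolding of the invariants already established.
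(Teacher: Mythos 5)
Your proof takes essentially the same route as the paper's: the paper proves this lemma by deferring to the Integrity proof for VAB in \S\ref{sec:correctness}, which likewise traces a delivery back through a $\COMMIT$ to an $\ACCEPT$ message, uses \invref{inv:origin} to collapse the $\NEWSTATE$ case into an earlier $\ACCEPT$, and reads off the originating $\bcast$ at the sender of that $\ACCEPT$. The difference is where the case split is made: the paper cases on \emph{why the leader of $e'$ sent the $\COMMIT$} (all $\ACCEPTACK$s at line~\ref{vp_quorum_of_accept_acks} versus all $\NEWSTATEACK$s at line~\ref{vp_quorum_of_new_state_acks}), whereas you case on \emph{how the delivering process's own $\msg[k]$ acquired the value $\vid$} and then obtain the epoch bound from monotonicity of $\epoch$. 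Your decomposition is self-contained at the delivering process and makes explicit that the relevant write may have occurred in an epoch strictly below $e'$ --- which happens exactly when the delivering process is the leader of $e'$, since $\NEWCONFIG$ does not overwrite the $\msg$ array.

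However, that same leader case exposes a genuine (if easily repaired) gap: your enumeration of assignment sites for $\msg[k]$ is incomplete. Besides line~\ref{vp_accept_assignment} ($\ACCEPT$) and line~\ref{vp_newstate_msg_assignment} ($\NEWSTATE$), the leader writes $\msg[\pnext] \leftarrow \vid$ itself at line~\ref{vp_broadcast_assign} when handling a local $\bcast$ call. If the delivering process $p$ is the leader of $e'$ and position $k$ holds a message that $p$ itself broadcast (in $e'$ or in an earlier epoch in which $p$ was also leader), then $p$ never handled any $\ACCEPT$ or $\NEWSTATE$ writing position $k$, and neither of your two cases applies; the paper's proof covers this with the parenthetical ``or sends it if $\pid = \pid[j]$''. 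The fix is immediate and uses your own closing observation: that local write occurs during a $\bcast(\vid)$ call executed while the writer's epoch is some $\hat{e} \le e'$ (again by monotonicity), which is precisely the broadcast event required. With this third case added, your argument goes through.
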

\begin{proof}
  The proof is analogous to the proof of Integrity for atomic broadcast in
  \S\ref{sec:correctness}.
\end{proof}

\begin{proof}[Proof of Local Order.]
  Assume that the leader $\pid$ of an epoch $\eid$ receives $\bcast(\vid[_1])$
  followed by $\bcast(\vid[_2])$ at $\eid$ and that some process $\pid[j]$
  delivers $\vid[_2]$ at $\eid'$. By Lemma~\ref{thm:integrity} and
  \equref{prop:env-1}, $e \le \eid'$. We show that $\pid[j]$ delivers $\vid[_1]$
  before delivering $\vid[_2]$.

 \begin{itemize}
 \item Assume first that $\eid'=\eid$. Let:
   \begin{itemize}
     \item $k = \length(\msgi)$, where $\msgi$ is the array of
       messages sent by the leader of $\eid$ in  $\NEWSTATE(\eid,
       \msgi)$;
       \item $\ACCEPT(\eid, k', \vid[_1])$ and $\ACCEPT(\eid, k'',
         \vid[_2])$ be the $\ACCEPT$ messages sent by the leader of $\eid$
         to broadcast $\vid[_1]$ and $\vid[_2]$.
       \end{itemize}
       Since the leader of $\eid$ broadcast $\vid[_1]$ before $\vid[_2]$,
       then $k'<k''$. Moreover, by
          lines \ref{vp_broadcast_next} and
          \ref{vp_quorum_of_new_state_acks_next_opt}, $k'>k$. 

          By line~\ref{vp_commit_pre}, $\pid[j]$ has
          $\lastdelivered=k''-1$ right before handling $\COMMIT(\eid,
          k'')$. Hence, the messages stored at  
          slot numbers between $1$ and $k''-1$, must have been delivered.
          Consequently, since $k'<k''$, we know that $\pid[j]$ handles
          $\COMMIT(\eid^{*}, k')$ before handling $\COMMIT(\eid, k'')$. If
          $\eid^{*}<\eid$, then by
          \invref{inv:prefix-of-higher-epoch}, when $\pid$ became the
          leader of $\eid$, it had $\msg[k']\neq\bot$, so that $k'<k$. But we have established that
          $k'>k$. Hence, we must have $\eid^{*}=\eid$. Consider the case when the leader of
          $\eid$ sends $\COMMIT(\eid, k')$ because it receives a
          $\NEWSTATEACK$ message from every follower. This is
          impossible: in this case $\pid$ sends $\COMMIT$ messages for the
          entries between $1$ and $k$, but $k'>k$. Thus, the leader
          of $\eid$ sends $\COMMIT(\eid, k')$ because it  receives an
          $\ACCEPTACK(\eid, k')$ message from every follower. This
          implies that the leader has sent an $\ACCEPT(\eid, k',
          \vid[_3])$ message to its followers for some $m_3$. Since we know that
          $\pid$ also sends $\ACCEPT(\eid, k', \vid[_1])$, by
          \invref{inv:uniqueness_accept} we have
          $\vid[_3]=\vid[_1]$. When handling this message, $\pid[j]$ sets
        $\msg[k]=\vid[_1]$. Since $\pid[j]$ does not overwrite $\msg[k]$ while $\epoch = \eid$, $\pid[j]$ has
        $\msg[k]=\vid[_1]$ when it handles $\COMMIT(\eid,
	k')$. Hence, $\pid[j]$  delivers $\vid[_1]$ before
        $\vid[_2]$, as required.

      \item Assume now that $\eid'>\eid$. Let $\COMMIT(\eid', k)$ be
        the commit message that makes $p_j$ deliver $\vid[_2]$.
        By \equref{prop:env-1}, it is easy to
 see that the leader of $\eid'$ sends $\COMMIT(\eid', k)$ at
line~\ref{vp_send_commit_2} after receiving a $\NEWSTATEACK$ message from
every follower. 
Let $\NEWSTATE(\eid', \msgi')$ be the $\NEWSTATE$ message sent by the leader
of $\eid'$. Then $\length(\msgi')\ge k$. By \invref{inv:origin}, there exists an epoch
$\eid''<\eid'$ whose leader sends $\ACCEPT(\eid'', k, \vid[_2])$. We know that
$\pid$ broadcast $\vid[_2]$ in $\eid$, and thus sent
$\ACCEPT(\eid, \_, \vid[_2])$. Then, by \equref{prop:env-1},
$\eid''=\eid$. Let $\msgi$ be the value of the
array $\msg$ at $\pid$ at the time when it sent $\ACCEPT(\eid,
k,\vid[_2])$; then $\msgi[k]=m_2$.
We know that $\pid$ broadcast $\vid[_1]$ before $\vid[_2]$ in
$\eid$. Then there exists a $k'<k$ such that $\msgi[k']=\vid[_1]$.

We have that $\pid$ sends $\ACCEPT(\eid, k, \vid[_2])$, $\msgi'[k]=\vid[_2]$ and
$\eid'>\eid$. Therefore, by \invref{inv:new_state_prefix}, $\pid[j]$ has
$\prefixmsg[k] = \msgi$ after handling $\NEWSTATE(\eid', \msgi')$ and while in
$\eid'$. In particular, $\msg[k'] = m_1$. We know that $\pid[j]$ delivers $m_2$
after handling $\COMMIT(\eid', k)$. Therefore, before this $\pid[j]$ delivers
all values in positions $1$ to $k-1$. Let $k_0$ be the value of $\lastdelivered$
at $\pid[j]$ after it handles $\NEWSTATE(\eid', \msgi')$. Assume first that
$k_0<k'$. Then $\pid[j]$ delivers the value at position $k'$ while in
$\eid'$. We have established that $\pid[j]$ has $\msg[k']=\vid[_1]$ after
handling $\NEWSTATE(\eid', \msgi')$ and while in $\eid'$. Thus, $\pid[j]$
delivers $m_1$, as required. Assume now that $k_0\ge k'$. Then, $\pid[j]$
handled a $\COMMIT$ message for position $k'$ in an epoch $< e'$ and delivered a
message $m'$. By \invref{inv:prefix-of-higher-epoch}, $\pid[j]$ has
$\msg[k']=m'$ at $\eid'$. We have already established that $\pid[j]$ has
$\msg[k']=\vid[_1]$ after handling $\NEWSTATE(\eid', \msg')$ and while in
$\eid'$. Thus, $m'=\vid[_1]$ and $\pid[j]$ delivers $m_1$, as required.
\end{itemize}
\end{proof}

\begin{proof}[Proof of Global Order.]
 We have that:
  \begin{itemize}
\item the leader of $\eid$ sends $\ACCEPT(\eid, k_1, \vid[_1])$ while
  in $\eid$,
  \item the leader of $\eid'$ sends $\ACCEPT(\eid', k_2, \vid[_2])$ while
  in $\eid'$,
\end{itemize}
and $e < e'$. Then by \equref{prop:env-1}, $m_1 \not= m_2$.

Let $\eid_1$ and $\eid_2$ be the epochs at which $p_i$ delivers $\vid[_1]$ and
$\vid[_2]$ respectively. By Lemma~\ref{thm:integrity} and \equref{prop:env-1},
$\eid \le \eid_1$ and $\eid' \le \eid_2$.  The process $p_i$ delivers $\vid[_1]$
when it receives a message $\COMMIT(\eid_1, k)$ from the leader of $\eid_1$
while having $\msg[k]=\vid[_1]$, and $\vid[_2]$ when it receives a message
$\COMMIT(\eid_2, k')$ from the leader of $\eid_2$ while having
$\msg[k']=\vid[_2]$. By Invariant~\ref{inv:accept-commit}, $k=k_1$ and
$k'=k_2$. Furthermore, since $m_1 \not= m_2$, by 
Lemma~\ref{thm:commit-msg} we have $k_1\neq k_2$.

Lines~\ref{vp_commit_pre}-\ref{vp_commit_last_delivered} and the fact that
a process never decrements $\lastdelivered$ guarantee that a process delivers
each position of its $\msg$ array only once and in position order. Then, if
$k_1<k_2$, the process delivers $\vid[_1]$ before $\vid[_2]$, as required. We
now show that $k_1>k_2$ is impossible. Assume the contrary, so that
$k_1>k_2$. As argued above, a process delivers each position of its $\msg$ array
only once and in order. Since $k_1>k_2$ , then $p_i$ has to deliver $\vid[_2]$
before $\vid[_1]$, so that $\eid_2\leq \eid_1$.
\begin{itemize}
\item Consider first the case when $\eid_1>\eid_2$. We have
  $\eid<\eid' \le \eid_2 < \eid_1$. The process $p_i$ has $\msg[k_1]=\vid[_1]$
  when it handles $\COMMIT(\eid_1, k_1)$. Then $p_i$ either handles an
  $\ACCEPT(\eid_1, k_1, \vid[_1])$ message or the process has
  $\msg[k_1]=\vid[_1]$ when it enters $\eid_1$, i.e., when it sets
  $\epoch=\eid_1$. The application message $\vid[_1]$ is broadcast in $\eid$ and
  $\eid<\eid_1$. Then by \equref{prop:env-1}, $\vid[_1]$ cannot be broadcast in
  $\eid_1$. Thus, $p_i$ has $\msg[k_1]=\vid[_1]$ when it enters $\eid_1$. Since
  $\eid_1>\eid_2$, by Invariant~\ref{inv:prefix-of-higher-epoch}, $p_i$ has
  $\msg[k_2]=\vid[_2]$ when it enters $\eid_1$. By
  Invariant~\ref{inv:new_state_prefix}, $p_i$ also has $\prefixmsg[k_1] = \msgi$
  when it enters $\eid_1$, where $\msgi$ is the $\msg$ array that the leader of
  $\eid$ had when it sent $\ACCEPT(\eid, k_1, \vid[_1])$. Since $k_2<k_1$,
  $\msgi[k_2]=\vid[_2]$. But $\vid[_2]$ is broadcast at $\eid'$ and
  $\eid'>\eid$. Thus, by \equref{prop:env-1} and \invref{inv:origin},
  $\msgi[k_2]\neq\vid[_2]$, yielding a contradiction.
  
\item Consider now the case when $\eid_2=\eid_1$. We have
  $\eid<\eid' \le e_2 = \eid_1$. The application message $\vid[_1]$ is broadcast in
  $\eid$. Then by \equref{prop:env-1}, $\vid[_1]$ cannot be broadcast in
  $\eid_1$. Thus, $p_i$ has $\msg[k_1]=\vid[_1]$ when it enters $\eid_1$, i.e.,
  when it sets $\epoch=\eid_1$. By Invariant~\ref{inv:new_state_prefix}, $p_i$
  also has $\prefixmsg[k_1] = \msgi$ when it enters $\eid_1$, where $\msgi$ is
  the $\msg$ array that the leader of $\eid$ had when it sent
  $\ACCEPT(\eid, k_1, \vid[_1])$. After setting $\epoch=\eid_1$ and while
  $\epoch=\eid_1$, $p_i$ does not overwrite $\msg[k]$. Since $k_1>k_2$ and $p_i$
  has $\msg[k_2]=\vid[_2]$ when it receives a message $\COMMIT(\eid_1, k_2)$
  from the leader of $\eid_1$, we must have $\msgi[k_2] = \vid[_2]$. But
  $\vid[_2]$ is broadcast at $\eid'$ and $\eid'>\eid$. Thus, by
  \equref{prop:env-1} and \invref{inv:origin}, $\msgi[k_2]\neq\vid[_2]$,
  yielding a contradiction.
\end{itemize}
\end{proof}

\begin{lemma}
  \label{lem:setposition}
Assume that the leader of an epoch $e$ sends $\ACCEPT(e, k,
m)$. If a process $p_i$ has $\msg[k']=m$, then $k'=k$.
\end{lemma}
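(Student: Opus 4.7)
The plan is to trace back how process $p_i$ could have come to have $\msg[k']=m$, and show in every case that $k'=k$. Inspecting the protocol, the only places the $\msg$ array is modified are: (i) at the leader when it processes a broadcast request (line~\ref{vp_broadcast_assign}, replaced by line~\ref{po_broadcast} in the SPO variant); (ii) at a follower when it handles $\ACCEPT(e'',k',m)$ (line~\ref{vp_accept_assignment}); and (iii) at any process when it handles $\NEWSTATE(e', msg',\_)$ with $msg'[k']=m$ (line~\ref{vp_newstate_msg_assignment}).

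In cases (i) and (ii), the assignment $\msg[k']=m$ is directly caused by some leader sending $\ACCEPT(e'',k',m)$ for some epoch $e''$: in case (i), $p_i$ itself sends this $\ACCEPT$ as the leader of $e''$; in case (ii), $p_i$ received it from the leader of $e''$. In case (iii), I apply \invref{inv:origin} to $\NEWSTATE(e', msg', \_)$ with $msg'[k']=m$ to obtain an epoch $e''<e'$ whose leader sent $\ACCEPT(e'',k',m)$. So in every case, there exists an epoch $e''$ whose leader sent $\ACCEPT(e'',k',m)$.

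Next I apply the uniqueness-of-broadcasts assumption \equref{prop:env-1}. Each leader sends an $\ACCEPT(\_,\_,m)$ message only after processing a broadcast request for $m$, either indirectly via a $\FORWARD(m)$ (in VAB) or directly via $\bcast(m)$ (in the SPO variant where $\bcast$ is restricted to the leader). By \equref{prop:env-1}, only one $\bcast(m)$ call is ever performed, so at most one such broadcast request reaches any leader. Since the leader of $e$ sends $\ACCEPT(e,k,m)$ and the leader of $e''$ sends $\ACCEPT(e'',k',m)$, both must have originated from the same unique broadcast request; hence $e''=e$ and $k'=k$, as required.

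The only non-routine step is case (iii), and the main obstacle there is already addressed by \invref{inv:origin}, which lets us reduce a $\NEWSTATE$-induced assignment to a genuine preceding $\ACCEPT$ at the same position $k'$. Once this reduction is made, the argument collapses to a direct application of \equref{prop:env-1}, exactly as in the position-uniqueness half of the proof of Lemma~\ref{thm:commit-pos}.
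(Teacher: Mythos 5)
Your proposal is correct and follows essentially the same route as the paper's own proof: a case analysis on how $\msg[k']=m$ arose, reduction of the $\NEWSTATE$ case to a prior $\ACCEPT(e'',k',m)$ via \invref{inv:origin}, and then the uniqueness-of-broadcast argument \equref{prop:env-1} to force $k'=k$. Your explicit treatment of the leader's own assignment as a separate case is slightly more careful than the paper's wording, but the substance is identical.
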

\begin{proof}
 The process $p_i$ sets
$\msg[k']=m$ either after handling an $\ACCEPT(e', k', m)$ message or
a $\NEWSTATE(e', msg', \_)$ message such that $msg'[k'] = m$. In the
latter case, by Invariant~\ref{inv:origin},  there exists an epoch $e'' < e'$ such
that the leader of $e''$ has previously sent $\ACCEPT(e'', k',
m)$. Thus, if $p_j$ sets $\msg[k']=m$, then there exists an $\ACCEPT(e^*, k', m)$
message. By \equref{prop:env-1}, a message is only broadcast
once. Hence, $k=k'$.
 \end{proof}

\begin{lemma}
  \label{lem:consistency1}
Assume that the leaders of epochs $e_1$ and $e_2$ such that $e_1\neq
e_2$ send messages $\ACCEPT(e_1, k_1, m_1)$ and $\ACCEPT(e_2, k_2,
m_2)$ respectively. Let $\msgi$ be the value of
 the $\msg$ array at the leader of $\eid_2$ when it sent
 $\ACCEPT(\eid_2, k_2, \vid_2)$. Assume that a process $\pid$
  delivers both, delivering $\vid[_1]$ before $\vid[_2]$, and that
  $\pid$ has
 $\prefixmsg[k_2][\msg] = \msgi$ after delivering $m_2$. Then
  $\msgi[k_1]=m_1$ and $k_1<k_2$.
\end{lemma}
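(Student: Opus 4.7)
}
The plan is to first establish the positional claim $k_1<k_2$ and then derive the content claim $\msgi[k_1]=m_1$ from Invariant~\ref{inv:prefix-of-higher-epoch}. I would begin by noting that since the leaders of two distinct epochs $e_1\neq e_2$ both send $\ACCEPT$ messages for $m_1$ and $m_2$, and each application message is broadcast at most once by~\equref{prop:env-1}, we must have $m_1\neq m_2$.

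Next, since $\pid$ delivers $m_1$ and $m_2$, it handles messages $\COMMIT(e_1^\ast, k_1^\ast)$ and $\COMMIT(e_2^\ast, k_2^\ast)$ with $\msg[k_1^\ast]=m_1$ and $\msg[k_2^\ast]=m_2$ at delivery time. By Invariant~\ref{inv:accept-commit} applied to each of the two $\ACCEPT$s, $k_1^\ast=k_1$ and $k_2^\ast=k_2$. Because deliveries at a process are performed strictly in order of $\lastdelivered$ (lines~\ref{vp_commit_pre}--\ref{vp_commit_last_delivered}) and $m_1\neq m_2$, the fact that $m_1$ is delivered before $m_2$ gives $k_1<k_2$, as required.

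For the second claim, I would argue that $\msg[k_1]=m_1$ persists at $\pid$ from the moment $m_1$ is delivered. Concretely, $\pid$ has $\epoch = e_1^\ast$ when it handles $\COMMIT(e_1^\ast, k_1)$, and $\epoch$ is non-decreasing thereafter (lines~\ref{vp_probe_pre},~\ref{vp_new_state_pre}). Since the leader of $e_1^\ast$ sent $\COMMIT(e_1^\ast, k_1)$ while having $\msg[k_1]=m_1$, Invariant~\ref{inv:prefix-of-higher-epoch} tells us that whenever $\pid$ has $\epoch\geq e_1^\ast$ it also has $\msg[k_1]=m_1$. In particular, this holds at the moment $\pid$ delivers $m_2$.

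Finally, the hypothesis $\prefixmsg[k_2][\msg]=\msgi$ after the delivery of $m_2$ and the fact that $k_1<k_2$ imply $\msgi[k_1]=\msg[k_1]=m_1$. The only subtle point, and hence the main obstacle, is ensuring that the persistence argument for $\msg[k_1]=m_1$ survives an arbitrary number of intermediate $\NEWSTATE$ overwrites between the deliveries of $m_1$ and $m_2$; this is precisely what Invariant~\ref{inv:prefix-of-higher-epoch} (already proved) is designed to handle, so no additional induction is needed.
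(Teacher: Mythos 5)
Your proposal follows essentially the same route as the paper's proof: pin down the delivery positions of $m_1$ and $m_2$ via Invariant~\ref{inv:accept-commit}, get $k_1<k_2$ from the fact that a process delivers positions in strictly increasing order, argue that $\msg[k_1]=m_1$ still holds at $p_i$ when it delivers $m_2$, and then read off $\msgi[k_1]=m_1$ from the prefix hypothesis. However, two hypothesis checks are missing; both are repairable, but as written the two key invariant applications are not justified.

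First, Invariant~\ref{inv:accept-commit} only applies when the delivery happens \emph{at an epoch $\geq$ the epoch of the corresponding $\ACCEPT$}; you invoke it for both messages without verifying this precondition. The paper closes this by combining Lemma~\ref{thm:integrity} (a message delivered while in epoch $e'$ was broadcast in some epoch $\le e'$) with~(\ref{prop:env-1}) (uniqueness of broadcast), which yields $e_1 \le e_1^{*}$ and $e_2 \le e_2^{*}$ before the invariant is used. Second, Invariant~\ref{inv:prefix-of-higher-epoch} is stated only for processes whose $\epoch$ is \emph{strictly} greater than the committing epoch, so your claim that it gives $\msg[k_1]=m_1$ ``whenever $p_i$ has $\epoch \geq e_1^{*}$'' overstates it: the invariant says nothing about the case where $m_1$ and $m_2$ are delivered in the same epoch ($e_1^{*}=e_2^{*}$), which is exactly the case your final sentence dismisses as needing no further work. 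That case requires the separate (easy, but not vacuous) observation that a process never overwrites position $k_1$ of $\msg$ while it remains in epoch $e_1^{*}$ after handling $\COMMIT(e_1^{*},k_1)$; the paper handles this with an explicit case split between $e_1'=e_2'$ (within-epoch persistence) and $e_1'<e_2'$ (Invariant~\ref{inv:prefix-of-higher-epoch}). With these two repairs your argument becomes the paper's proof.
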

\begin{proof}
  We have that:
  \begin{itemize}
  \item $p_i$ receives a message $\COMMIT(e_1',k_1')$ sent by the
         leader of an epoch $e_1'$, and $p_i$ has $\msg[k_1'] = m_1$
         when receiving this message;
   \item $p_i$ receives a message $\COMMIT(e_2',k_2')$ sent by the
         leader of an epoch $e_2'$, and $p_i$ has $\msg[k_2'] = m_2$
         when receiving this message. 
       \end{itemize}

By Lemma~\ref{thm:integrity} and \equref{prop:env-1},
 $e_2 \le e_2'$ and  $e_1 \le e_1'$. Then by Invariant~\ref{inv:accept-commit},
 $k_2=k_2'$ and $k_1=k_1'$. Furthermore, since $p_i$ delivers $m_1$ before $m_2$,
 then $e_1'\le e_2'$ and $k_1<k_2$, as required.

   Assume
that $e_1'=e_2'$. Since $p_i$ delivers both $m_1$ and $m_2$ in the
same epoch and $k_1<k_2$, $p_i$ has
$\msg[k_1]=m_1$ when it handles $\COMMIT(e_2',k_2)$. Assume now that 
$e_1'<e_2'$. By Invariant~\ref{inv:prefix-of-higher-epoch}, $p_i$ has
$\msg[k_1]=m_1$ when it handles $\COMMIT(e_2',k_2)$.
Thus, in both cases $p_i$ has
$\msg[k_1]=m_1$ when it handles $\COMMIT(e_2',k_2)$. Furthermore, we
have that $p_i$ has 
 $\prefixmsg[k_2][\msg] = \msgi$ after handling
 $\COMMIT(e_2',k_2)$. Hence, $\msgi[k_1]=m_1$, as required.      
\end{proof}

\begin{lemma}
\label{thm:commit-msg:po}
Assume that the leader $p_i$ of an epoch $e_1$ sends $\COMMIT(e_1, k)$ while
having $\msg[k] = m_1$, and the leader $p_j$ an epoch $e_2$ sends
$\COMMIT(e_2, k)$ while having $\msg[k] = m_2$. Then $m_1 = m_2$.
\end{lemma}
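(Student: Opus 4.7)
The plan is to carry out the same case split as in the proof of Lemma~\ref{thm:commit-msg} for the atomic broadcast case, since \invref{inv:prefix-of-higher-epoch} is preserved by the Primary-Order Broadcast protocol (as listed in Figure~\ref{fig:vp_invariants}). Without loss of generality I would assume $e_1 \le e_2$. When $e_1 = e_2$, Property~\ref{prop:wf-1} forces $p_i = p_j$, and since a leader does not overwrite $\msg[k]$ while staying in the same epoch, the two observed values must coincide, giving $m_1 = m_2$.

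When $e_1 < e_2$, I would apply \invref{inv:prefix-of-higher-epoch} directly. The hypothesis on the leader of $e_1$ instantiates its premise, and applying the invariant to the leader of $e_2$ at the instant it sends $\COMMIT(e_2,k)$ (at which point that process has $\epoch = e_2 > e_1$) yields $\msg[k] = m_1$ there. Since by assumption this process also has $\msg[k] = m_2$ at that same instant, we conclude $m_1 = m_2$, as required.

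The main potential obstacle is not in this lemma itself but in the fact that we rely on \invref{inv:prefix-of-higher-epoch} continuing to hold for the Primary-Order variant of the protocol. This is plausible because the modifications introduced for primary-order broadcast---deferring $\confchng$ on ordinary delivery and performing speculative delivery within the leader's $\confchng$ upcall---affect neither the points at which $\epoch$ is updated nor the points at which entries of $\msg$ are written; consequently, the induction carried out for \invref{inv:prefix-of-higher-epoch} in Appendix~\ref{sec:correctness} transfers unchanged. Given that invariant, the present lemma reduces to the two-line case analysis above.
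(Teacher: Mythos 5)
Your proof is correct and takes essentially the same route as the paper: the paper proves this lemma simply by declaring it analogous to Lemma~\ref{thm:commit-msg}, whose proof is exactly your argument---a WLOG case split on $e_1 = e_2$ versus $e_1 < e_2$, with the latter case resolved by a direct application of Invariant~\ref{inv:prefix-of-higher-epoch} to the leader of $e_2$ at the moment it sends $\COMMIT(e_2,k)$. Your closing justification that this invariant carries over to the primary-order variant is also how the paper handles that dependency, stating that the invariant proofs for the modified protocol are virtually identical to those for atomic broadcast and omitting them.
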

\begin{proof}
  The proof is analogous to the proof of Lemma~\ref{thm:commit-msg} for atomic
  broadcast in \S\ref{sec:correctness}.
\end{proof}

\begin{lemma}
  \label{lem:deliverorspec}
  Assume that a process $p_i$ has $\msg[k]=m$ when joining an epoch
  $e$ for which it is the leader. Then $p_i$ delivers $m$ before joining $e$ or speculatively delivers
  $m$ at position $k$ when joining $e$.
\end{lemma}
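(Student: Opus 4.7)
The plan is to analyze the $\NEWCONFIG$ handler at $p_i$ and split on whether position $k$ has already been delivered by $p_i$ before it joins $e$. When $p_i$ handles $\NEWCONFIG(e,\_)$ (line~\ref{vp_new_config}), it sets $\initlength = \pnext-1 = \max\{k' \mid \msg[k'] \neq \bot\}$; since $\msg[k] = m \neq \bot$ at that moment, we have $k \leq \initlength$.

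In the easier case, $\lastdelivered < k \leq \initlength$. Then the primary-order variant of the upcall at line~\ref{po_newconfig_sdeliv} speculatively delivers the sequence $\msg[\lastdelivered+1..\initlength]$, whose entry at position $k$ is $m$, so $p_i$ speculatively delivers $m$ at position $k$ when joining $e$.

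In the harder case, $k \leq \lastdelivered$. Then $\lastdelivered$ must have crossed $k$ at some earlier moment when $p_i$ handled $\COMMIT(e_c, k)$ for some $e_c$ at line~\ref{vp_commit}, and at that moment $p_i$ delivered $m' := \msg[k]$. I would show $m' = m$. First, at the time of the commit, $p_i$'s $\msg[k]$ coincided with the value $v$ that the leader of $e_c$ had in $\msg[k]$ when sending $\COMMIT(e_c, k)$: indeed the leader only sends this $\COMMIT$ either after a full round of $\ACCEPTACK$s (in which case every replica's $\msg[k]$ was set via $\ACCEPT(e_c,k,v)$, unique by \invref{inv:uniqueness_accept}) or after a full round of $\NEWSTATEACK$s (in which case every replica overwrote $\msg[k]$ to $v$ via $\NEWSTATE$ and could not have received an $\ACCEPT(e_c,k,\_)$ since the new leader starts $\pnext$ strictly above the transferred log). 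Second, $p_i$'s $\msg[k]$ has not changed between that delivery and the moment of handling $\NEWCONFIG$: letting $e_0$ be the epoch of $p_i$ immediately before this handler fires, monotonicity of $\epoch$ gives $e_0 \geq e_c$; if $e_0 > e_c$, \invref{inv:prefix-of-higher-epoch} applied with the committed entry at $(e_c,k,v)$ yields $\msg[k] = v$, and if $e_0 = e_c$ the value is stable within the epoch because a second $\NEWSTATE$ for $e_c$ cannot arrive and any subsequent $\ACCEPT(e_c,\_,\_)$ targets positions strictly greater than $\initlength$ of $e_c$, hence not $k$. Combining, $m = \msg[k] = v = m'$, so $p_i$ did deliver $m$ before joining $e$.

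The main obstacle will be justifying the stability of $\msg[k]$ within a single epoch in the subcase $e_0 = e_c$; this is not quite captured by \invref{inv:prefix-of-higher-epoch} but follows cleanly from the two protocol-specific facts highlighted above (at most one $\NEWSTATE$ is sent per epoch, and the new leader's $\pnext$ starts beyond any previously populated index). Elsewhere the argument reduces to bookkeeping against $\initlength$ and $\lastdelivered$ plus an application of \invref{inv:prefix-of-higher-epoch}.
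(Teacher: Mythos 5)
Your proposal is correct and follows the same skeleton as the paper's proof: a case split on whether position $k$ is at most $\lastdelivered$ at the moment $p_i$ joins $e$, with the not-yet-delivered case discharged exactly as you do, by observing that $k \le \initlength$ and that line~\ref{po_newconfig_sdeliv} speculatively delivers $\msg[\lastdelivered{+}1..\initlength]$. The difference is in the already-delivered case: the paper dispatches it in one line by citing Lemma~\ref{thm:commit-msg:po} (two leaders that commit the same position do so with the same message), a lemma that packages the facts you re-derive by hand --- that the value $p_i$ delivered agrees with the value the leader of $e_c$ held when committing, and that $\msg[k]$ then persists up to the join, the cross-epoch part of which is \invref{inv:prefix-of-higher-epoch}. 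Your inlined derivation buys self-containedness, and it is if anything more explicit than the paper's terse citation; however, it has one imprecise spot: in the stability argument for $e_0 = e_c$ you rule out overwrites of $\msg[k]$ by claiming that every subsequent $\ACCEPT(e_c,\_,\_)$ targets positions strictly above $\initlength$ of $e_c$, ``hence not $k$''. That covers only the sub-case where the $\COMMIT$ came from the $\NEWSTATEACK$ path (where indeed $k \le \initlength$); if position $k$ was instead filled by an $\ACCEPT$ in $e_c$, then $k$ itself lies above that epoch's $\initlength$ and the argument says nothing. The patch is immediate from tools you already invoke: within an epoch the leader never reuses a slot (it only increments $\pnext$ at line~\ref{vp_broadcast_next}), and by \invref{inv:uniqueness_accept} any $\ACCEPT(e_c,k,\_)$ carries the same value, so $\msg[k]$ is stable in this sub-case as well. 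With that repair your argument is complete and matches the paper's.
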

\begin{proof}
Let $k_0$ be the value of the
 variable $\lastdelivered$ at $p_i$ when it joins $e$.
Consider first the case when $k_0<k$. Then $p_i$
speculatively delivers $m$ when joining $e$. Furthermore, by
Property~\ref{prop:sdeliva} [Basic Speculative Delivery Properties],
$p_i$ does not deliver $m$ before joining $e$, as required.
Consider now the case
when $k_0\geq k$. Then $p_i$ has delivered position $k$ before
joining $e$. Thus, it has handled a $\COMMIT$ message for
position $k$ before joining $e$ while having $\msg[k]=m'$. By Lemma~\ref{thm:commit-msg:po}, $m'=m$. Hence, in
this case, $p_i$ delivers $m$ before joining $e$.  Furthermore, by
Property~\ref{prop:sdeliva} [Basic Speculative Delivery Properties],
$p_i$ does not speculatively deliver $m$ when joining $e$, as required.
\end{proof}

\begin{lemma}
  \label{lem:pim1m2}
  Assume that a process $p_i$ delivers $m_2$ after handling
  $\COMMIT(e, k_2)$ while having $\msg[k_1]=m_1$ such that
  $k_1<k_2$. Then $p_i$ delivers $m_1$ before $m_2$.
\end{lemma}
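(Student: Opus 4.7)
The plan is to argue that $p_i$ must have already delivered some message $m'$ at position $k_1$ prior to handling $\COMMIT(e, k_2)$, and then to show that $m' = m_1$.

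First, by the precondition at line~\ref{vp_commit_pre}, when $p_i$ handles $\COMMIT(e, k_2)$ it has $\lastdelivered = k_2 - 1$. Since $k_1 < k_2$, we obtain $k_1 \le \lastdelivered$, so $p_i$ has already delivered the entry at position $k_1$. This delivery occurred when $p_i$ earlier handled some $\COMMIT(e', k_1)$ while having $\msg[k_1] = m'$; and since $\epoch \le \newepoch$ holds invariantly and both $\newepoch$ and $\epoch$ are non-decreasing (lines~\ref{vp_probe_pre} and~\ref{vp_new_state_pre}), we have $e' \le e$.

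Next, I would establish $m' = m_1$ by case analysis on $e'$ versus $e$. If $e' = e$, the value of $\msg[k_1]$ cannot change after being set within a single epoch: wholesale overwrites only happen via $\NEWSTATE$, which is processed at most once per epoch and before any normal $\COMMIT$ handling, while per-position writes within epoch $e$ come only from $\ACCEPT(e, k_1, \cdot)$, whose payload is uniquely pinned by \invref{inv:uniqueness_accept}. Hence $\msg[k_1]$ remains equal to $m'$ throughout $p_i$'s time in epoch $e$, giving $m' = m_1$. If $e' < e$, I would apply \invref{inv:prefix-of-higher-epoch}: the leader of $e'$ sent $\COMMIT(e', k_1)$ with some $\msg[k_1] = m'_L$, which is then preserved at every process entering a higher epoch, so $m_1 = m'_L$. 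To conclude $m' = m'_L$, I would split on how the leader emitted that $\COMMIT$: via line~\ref{vp_send_commit}, where it first sent a unique $\ACCEPT(e', k_1, m'_L)$ (by \equref{prop:env-1} and \invref{inv:uniqueness_accept}) that $p_i$ had to handle to set $\msg[k_1] = m'$; or via line~\ref{vp_send_commit_2}, where it had sent $\NEWSTATE(e', \msgi, \_)$ with $\msgi[k_1] = m'_L$, and $p_i$ copied this array into its own $\msg$.

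Putting the pieces together, $p_i$ delivered $m' = m_1$ at the moment it handled $\COMMIT(e', k_1)$, which strictly precedes its handling of $\COMMIT(e, k_2)$ in which it delivers $m_2$; hence $p_i$ delivers $m_1$ before $m_2$. The delicate step is the cross-epoch case $e' < e$: matching the local $\msg[k_1]$ at $p_i$ when it delivered at $k_1$ with the leader's $\msg[k_1]$ at the moment of sending the corresponding $\COMMIT$ forces the case split above on the two paths by which a $\COMMIT$ can be produced, each hinging on a different uniqueness-or-state-copy argument.
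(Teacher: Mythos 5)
Your proof is correct and follows essentially the same route as the paper's: establish that $p_i$ must already have delivered position $k_1$ via some earlier $\COMMIT(e', k_1)$ while holding a local value $m'$, then case-split on $e' = e$ (where $\msg[k_1]$ is never overwritten within an epoch) versus $e' < e$ (where \invref{inv:prefix-of-higher-epoch} applies) to conclude $m' = m_1$. The only difference is that you explicitly justify identifying $p_i$'s local value with the leader's value at the moment it sent $\COMMIT(e', k_1)$, via the case split on the two ways a $\COMMIT$ can be generated, a step the paper's proof leaves implicit.
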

\begin{proof}
Since $k_1<k_2$, then $p_i$ has already delivered
position $k_1$ when it delivers $m_2$ at position $k_2$. Assume that $p_i$ delivers position $k_1$ after
handling $\COMMIT(e', k_1)$ while having $\msg[k_1]=m'$. We trivially
have that $e'\le e$. Consider the case when $e'=e$. After
handling  $\COMMIT(e, k_1)$, $p_i$ does not overwrites position
$k_1$ while in $e$. Since $p_i$ has $\msg[k_1]=m_1$ when it
handles $\COMMIT(e, k_2)$, then $m'=m_1$. Consider now the case
when $e'<e$. By Invariant~\ref{inv:prefix-of-higher-epoch}, $p_i$
has $\msg[k_1]=m'$ when it handles $\COMMIT(e, k_2)$. But, $p_i$ has $\msg[k_1]=m_1$ when it
handles $\COMMIT(e, k_2)$. Thus, $m'=m_1$. Therefore, in both
cases $p_i$ delivers $m_1$ before delivering $m_2$, as required.
\end{proof}

  \begin{proof}[Proof of \Consistency{} (a)]
  We have that:
  \begin{itemize}
\item the leader of $e_1$ sends $\ACCEPT(e_1, k_1, m_1)$ while
  in $e_1$;
  \item the leader of $e_2$ sends $\ACCEPT(e_2, k_2, m_2)$ while
    in $e_2$. Let $\msgi$ be the value of
 the $\msg$ array at the leader of $\eid_2$ when it sent
 $\ACCEPT(\eid_2, k_2, \vid_2)$;
\end{itemize}

The process $p_i$ delivers $m_2$ after handling $\COMMIT(e_2',
k_2')$ while having $\msg[k_2']=m_2$. By
Lemma~\ref{lem:setposition}, $k_2=k_2'$. Furthermore, by Lemma~\ref{thm:integrity},
$e_2\le e_2'$. Consider the case when $e_2=e_2'$. Then $p_i$ has
 handled $\ACCEPT(e_2, k_2, m_2)$. Thus, by
 Invariant~\ref{inv:prefix-of-leader}, $p_i$ has
 $\prefixmsg[k_2][\msg] = \msgi$ after handling
 $\COMMIT(e_2',k_2)$. Consider now the case when $e_2<e_2'$. Then by 
Invariant~\ref{inv:new_state_prefix}, we get the same. Hence, (*) $p_i$ has
 $\prefixmsg[k_2][\msg] = \msgi$ after handling
 $\COMMIT(e_2',k_2)$.

The process $p_j$ broadcast $m_2$ in $e'$. By
\equref{prop:env-1}, $p_j$ is the leader of $e_2$ and
$e_2=e'$. Therefore, $\msgi$ is the value of
 the $\msg$ array at $p_j$ when it sent
 $\ACCEPT(e', k_2, \vid_2)$.

\begin{enumerate}[i)]
\item Consider first the case when $p_i$ delivers $m_1$ before
  $m_2$. We prove that
  $p_j$ delivers $m_1$ before joining $e'$ or speculatively delivers
  $m_1$ before $m_2$ when joining $e'$. By
  (*) and Lemma~\ref{lem:consistency1}, $\msgi[k_1]=m_1$ and $k_1<k_2$. Thus,
  $p_j$ has $\msg[k_1]=m_1$ when it sends  $\ACCEPT(e', k_2,
  m_2)$. We have that $e_1\neq e'$. Then by \equref{prop:env-1},
  $m_1$ cannot be broadcast in $e'$. Thus, $p_j$ has $\msg[k_1]=m_1$
  when joining $e'$. Furthermore, by
  Lemma~\ref{lem:deliverorspec}, $p_j$ delivers $m_1$ before joining $e'$ or speculatively delivers
  $m_1$ at position $k_1$ when joining $e'$. Since $k_1<k_2$, in the
  case that $p_j$ speculatively delivers $m_1$, it does it before
  $m_2$, as required.
  
  \item  Consider now the case when  
  $p_j$ delivers $m_1$ before joining $e'$ or speculatively delivers
  $m_1$ before $m_2$ when joining $e'$. We prove that $p_i$ delivers
  $m_1$ before $m_2$. Assume that $p_j$ delivers $m_1$ before joining $e'$. Then $p_i$
  handles a $\COMMIT$ message for a position $k'$ while having
  $\epoch<e'$ and $\msg[k']=m_1$.
  By
  Invariant~\ref{inv:prefix-of-higher-epoch}, $p_j$ has $\msg[k']=m_1$
  when joining $e'$. Since $p_j$ broadcast $m_2$ in $e'$,
  then $k'<k_2$. Furthermore, by Invariant~\ref{inv:accept-commit},
  $k'=k_1$. Thus, $p_j$ has $\msg[k_1]=m_1$ when joining $e'$ and $k_1<k_2$.  Assume now that $p_j$ speculatively
  delivers $m_1$ when joining $e'$. Then $p_j$ has
  $\msg[k']=m_1$ when joining $e'$. Since $p_j$ broadcast $m_2$ in $e'$,
  then $k'<k_2$. Furthermore, by
Lemma~\ref{lem:setposition}, $k'=k_1$. Thus, $p_j$ has $\msg[k_1]=m_1$
when joining $e'$ and $k_1<k_2$ in both cases. This implies that $p_j$
has $\msg[k_1]=m_1$  when it sends
 $\ACCEPT(e', k_2, \vid_2)$. Therefore, 
$\msgi[k_1]=m_1$. Thus by (*), $p_i$ has $\msg[k_1]=m_1$ when
it handles $\COMMIT(e_2', k_2)$. By Lemma~\ref{lem:pim1m2}, $p_i$
delivers $m_1$ before $m_2$ as required.
\end{enumerate}
  \end{proof} 

 \begin{proof}[Proof of \Consistency{} (b)]
  We have that:
  \begin{itemize}
\item the leader of $e_1$ sends $\ACCEPT(e_1, k_1, m_1)$ while
  in $e_1$;
  \item the leader of $e_2$ sends $\ACCEPT(e_2, k_2, m_2)$ while
    in $e_2$. Let $\msgi$ be the value of
 the $\msg$ array at the leader of $\eid_2$ when it sent
 $\ACCEPT(\eid_2, k_2, \vid_2)$;
\end{itemize}

The process $p_i$ delivers $m_2$ after handling $\COMMIT(e_2',
k_2')$ while having $\msg[k_2']=m_2$. By
Lemma~\ref{lem:setposition}, $k_2=k_2'$. Furthermore, by Lemma~\ref{thm:integrity},
$e_2\le e_2'$. Consider the case when $e_2=e_2'$. Then $p_i$ has
 handled $\ACCEPT(e_2, k_2, m_2)$. Thus, by
 Invariant~\ref{inv:prefix-of-leader}, $p_i$ has
 $\prefixmsg[k_2][\msg] = \msgi$ after handling
 $\COMMIT(e_2',k_2)$. Consider now the case when $e_2<e_2'$. Then by 
Invariant~\ref{inv:new_state_prefix}, we get the same. Hence, (*) $p_i$ has
 $\prefixmsg[k_2][\msg] = \msgi$ after handling
 $\COMMIT(e_2',k_2)$.

The process $p_j$ speculatively delivers $m_2$ when joining $e'$. Then
$p_j$ has $\msg[k]=m_2$ when it sets $\epoch=e'$. By
Lemma~\ref{lem:setposition}, $k=k_2$. Then by Invariant~\ref{inv:new_state_prefix}, (**) $p_j$ has
$\prefixmsg[k_2][\msg] = \msgi$ when joining $e'$.

\begin{enumerate}[i)]
\item Consider first the case when $p_i$ delivers $m_1$ before
  $m_2$. We prove that
  $p_j$ delivers $m_1$ before joining $e'$ or speculatively delivers
  $m_1$ before $m_2$ when joining $e'$. By Lemma~\ref{lem:consistency1}
  and (*), $\msgi[k_1]=m_1$ and $k_1<k_2$. Then by (**) $p_j$ has
  $\msg[k_1]=m_1$ when joining $e'$. Furthermore, by 
  Lemma~\ref{lem:deliverorspec}, $p_j$ delivers $m_1$ before joining $e'$ or speculatively delivers
  $m_1$ at position $k_1$ when joining $e'$. Since $k_1<k_2$, in the
  case that $p_j$ speculatively delivers $m_1$, it does it before
  $m_2$, as required.
  
  \item Consider now the case when  
  $p_j$ delivers $m_1$ before joining $e'$ or speculatively delivers
  $m_1$ before $m_2$ when joining $e'$. We prove that $p_i$ delivers
  $m_1$ before $m_2$. Assume that $p_j$ delivers $m_1$ before joining $e'$. Then $p_i$
  handles a $\COMMIT$ message for a position $k'$ while having
  $\epoch<e'$ and $\msg[k']=m_1$.
  By
  Invariant~\ref{inv:prefix-of-higher-epoch}, $p_j$ has $\msg[k']=m_1$
  when joining $e'$ such that $k'<k_2$. Furthermore, by Invariant~\ref{inv:accept-commit},
  $k'=k_1$. Thus, $p_j$ has $\msg[k_1]=m_1$ when joining $e'$ and $k_1<k_2$.  Assume now that $p_j$ speculatively
  delivers $m_1$ before $m_2$ when joining $e'$. Then $p_j$ has
  $\msg[k']=m_1$ when joining $e'$ such that $k'<k_2$.  By
Lemma~\ref{lem:setposition}, $k'=k_1$. Thus, $p_j$ has $\msg[k_1]=m_1$
when joining $e'$ and $k_1<k_2$ in both cases. By (**), this implies that
$\msgi[k_1]=m_1$. Furthermore, by (*), $p_i$ has $\msg[k_1]=m_1$ when
it handles $\COMMIT(e_2', k_2)$. By Lemma~\ref{lem:pim1m2}, $p_i$
delivers $m_1$ before $m_2$ as required.
\end{enumerate}
\end{proof}

\fi

\end{document}